\newcommand{\DeclareMathActive}[2]{%
  \expandafter\edef\csname keep@#1@code\endcsname{\mathchar\the\mathcode`#1 }
  \begingroup\lccode`~=`#1\relax
  \lowercase{\endgroup\def~}{#2}%
  \AtBeginDocument{\mathcode`#1="8000 }%
}
\newcommand{\std}[1]{\csname keep@#1@code\endcsname}
\patchcmd{\newmcodes@}{\mathcode`\-\relax}{\std@minuscode\relax}{}{\ddt}
\def\Igbl_#1{\std{I}_{\text{\upshape #1}}}
\def\Ispl{\@ifnextchar_{\Igbl}{\std{I}}}
\crefname{enumi}{Statement}{Statements} 
\crefname{rule}{Rule}{Rules} 
\def\HyPsd@Warning#1{}
\def\?#1{}
\def\whp{w.h.p\@ifnextchar.{.\whpFootnote/\?}{\@ifnextchar,{.\whpFootnote/}{\@ifnextchar){.\whpFootnote/}{.\whpFootnote/\ }}}}
\def\Whp{W.h.p\@ifnextchar.{.\whpFootnote/\?}{\@ifnextchar,{\whpFootnote/.}{.\whpFootnote/\ }}}
\crefname{algocfline}{Algorithm}{Algorithms}
\newcommand{\Clock}[1]{\mathsf{clock}[#1]}
\newcommand{\Subphase}[1]{\mathsf{subphase}[#1]}
\newcommand{\Input}[1]{\mathsf{input}[#1]}
\newcommand{\Work}[1]{\mathsf{work}[#1]}
\newcommand{\Opinion}[1]{\mathsf{opinion}[#1]}
\newcommand{\Output}[1]{\mathsf{output}[#1]}
\newcommand{\stage}[1]{\mathbb{T}_{#1}}
\newcommand{\Stop}{\mathsf{stop}}
\newcommand{\FewInGather}{A}
\newcommand{\NoBroadcast}{B}
\newcommand*{\polyaConstSmall}{\ensuremath{\epsilon_p}}
\def\PE{\operatorname{PE}}
\renewcommand\paragraph{\@startsection{subparagraph}{5}{\z@}%
                                       {1.625ex \@plus0.5ex \@minus .1ex}%
                                       {-1em}%
                                      {\sffamily\normalsize\bfseries}}
\let\dcmparagraph\paragraph
\def\paragraph#1{\dcmparagraph{#1.}}
\title{\mbox{Loosely-Stabilizing Phase Clocks and the} \mbox{Adaptive Majority Problem}}
\author{Petra Berenbrink}{Universität Hamburg, Germany}{petra.berenbrink@uni-hamburg.de}{}{}
\author{Felix Biermeier}{Universität Hamburg, Germany}{felix.biermeier@uni-hamburg.de}{}{}
\author{Christopher Hahn}{Universität Hamburg, Germany}{tim.christopher.hahn@uni-hamburg.de}{}{}
\author{Dominik Kaaser}{Universität Hamburg, Germany}{dominik.kaaser@uni-hamburg.de}{https://orcid.org/0000-0002-2083-7145}{}
\authorrunning{P.\ Berenbrink, F.\ Biermeier, C.\ Hahn, D.\ Kaaser}
\let\epsilon\varepsilon
\begin{document}

\maketitle
\begin{abstract}
We present a loosely-stabilizing phase clock for population protocols.
In the population model we are given a system of $n$ identical agents which interact in a sequence of randomly chosen pairs.
Our phase clock is leaderless and it requires $O(\log n)$ states.
It runs forever and is, at any point of time, in a synchronous state w.h.p.
When started in an arbitrary configuration, it recovers rapidly and enters a synchronous configuration within $O(n\log n)$ interactions w.h.p.
Once the clock is synchronized, it stays in a synchronous configuration for at least $\poly n$ parallel time w.h.p.

We use our clock to design a loosely-stabilizing protocol that  solves the comparison problem introduced by Alistarh et al., 2021.
In this problem, a subset of agents has at any time either $A$ or $B$ as input.
The goal is to keep track which of the two opinions is (momentarily) the majority.
We show that if the majority has a support of at least $\Omega(\log n)$ agents and a sufficiently large bias is present, then the protocol converges to a correct output within $O(n\log n)$ interactions and stays in a correct configuration for $\poly n$ interactions, w.h.p.
\end{abstract}

\section{Introduction}
In this paper we introduce a loosely-stabilizing leaderless phase clock for the population model and demonstrate its usability by applying the clock to the comparison problem introduced in \cite{DBLP:conf/podc/Alistarh0U21}. Population protocols have been introduced by Angluin et al.~\cite{DBLP:journals/dc/AngluinADFP06}. A population consists of $n$ anonymous agents.
A random scheduler selects in discrete time steps pairs of agents to interact.
The interacting agents execute a state transition, as specified by the \emph{algorithm} of the population protocol. 
Angluin et al.~\cite{DBLP:journals/dc/AngluinADFP06} gave a variety of motivating examples for the population model,
including averaging in sensor networks, or modeling a disease monitoring system for a flock of birds.
In \cite{DBLP:journals/tcs/SudoNYOKM12} the authors introduce the notion of loose-stabilization.
A population protocol is loosely-stabilizing if, from an arbitrary state, it reaches a state with correct output fast and remains in such a state for a polynomial number of interactions.
In contrast, self-stabilizing protocols are required to converge  to the correct output state from any possible initial configuration and stay in a correct configuration indefinitely.
Many population protocols heavily rely on so-called \emph{phase clocks}  which divide the interactions into blocks of $\LDAUOmicron{n \log n}$ interactions each. The phase clocks are used to synchronize population protocols. 
For example, in \cite{DBLP:conf/soda/GasieniecS18, DBLP:conf/stoc/BerenbrinkGK20} they are used to efficiently solve leader election and in \cite{DBLP:journals/dc/BerenbrinkEFKKR21} they are used to solve the majority problem.

\medskip

In the first part of this paper we present a loosely-stabilizing and leaderless phase clock with $\LDAUOmicron{\log n}$ many states per agent.
We show that this clock can run forever and that, at any point of time, it is synchronized \whp.
In contrast to related work \cite{DBLP:conf/soda/AlistarhAG18,DBLP:journals/dc/AngluinAE08a,DBLP:journals/dc/BerenbrinkEFKKR21,DBLP:conf/soda/GasieniecS18}, our clock protocol \emph{recovers} rapidly in case of an error: from an arbitrary configuration it always enters a synchronous configuration within $\LDAUOmicron{n\log{n}}$ interactions \whp. Once synchronized it stays in a synchronous configuration for at least $\poly{n}$ interactions, \whp.
Our phase clock can be used to synchronize population protocols into phases of $\LDAUOmicron{n \log n}$ interactions, guaranteeing that there is a big overlap between the phases of any pair of agents. Our clock protocol is simple, robust and easy to use.

In the second part of this paper we demonstrate how to apply our phase clock by solving an \emph{adaptive majority} problem motivated by the work of \cite{DBLP:conf/podc/Alistarh0U21,DBLP:conf/opodis/AmirAL20}. Our problem is defined as follows. Each agent has either opinion $A$, $B$, or $U$ for being neutral.
We say that agents change their input with rate $r$ if in every time step an arbitrary agent can change its opinion with probability $r$.
The goal is to output, at any time, the actual majority opinion.
The idea of our approach is as follows.
Our protocol simply starts, at the beginning of each phase, a \emph{static} majority protocol as a black box. This protocol is takes as an input the set of opinions at that time and calculates the majority opinion over these inputs.  
The outcome of the protocol is then used during the whole next phase as majority opinion.
In order to highlight the simplicity of our phase clock, we first use the very natural protocol based solely on canceling opposing opinions introduced in \cite{DBLP:journals/dc/AngluinAE08a}. 
Then we present a variant based on the undecided state dynamics from \cite{DBLP:journals/dc/AngluinAE08} which works as follows.
The agents have one of two opinions $A$ or $B$, or they are undecided.
Whenever two agents with the same opinion interact, nothing happens.
When two agents with an opposite opinion interact they will become undecided.
Undecided agents interacting with an agent with either opinion $A$ or opinion $B$ adopt that opinion.

Without loss of generality we assume that $A$ is the majority opinion in the following. 
When at least $\Omega(\log n)$ agents have opinion $A$, there is a constant factor bias between $A$ and $B$, and the opinions change at most at rate $1/n$ per interaction, the system outputs $A$ \whp.
Our protocol requires only $\LDAUOmicron{\log{n}}$ many states.
For the setting where all agents have either opinion $A$ or $B$ (none of the agent is in the neutral state $U$) and we have an additive bias of $n^{3/4 + \epsilon}$ for some constant $\epsilon > 0$ is present, the system again converges to $A$ \whp.
In the latter setting we can tolerate a rate of order $r = \ldauOmega{n^{-1/4 + \epsilon}}$.


\paragraph{Related Work}
Population protocols have been introduced by Angluin et al.~\cite{DBLP:journals/dc/AngluinADFP06}. 
Many of the early results focus on characterizing the class of problems which are solvable in the population model.
For example, population protocols with a constant number of states can exactly compute predicates which are definable in Presburger arithmetic \cite{DBLP:journals/dc/AngluinADFP06, DBLP:conf/podc/AngluinAE06, DBLP:journals/dc/AngluinAER07}.
There are many results for majority and leader election, see \cite{doty2021time} and \cite{DBLP:conf/stoc/BerenbrinkGK20} for the latest results.
In \cite{DBLP:journals/tcs/SudoNYOKM12} the authors introduce the notion of \emph{loose}-stabilization to mitigate the fact that self-stabilizing protocols usually require some global knowledge on the population size (or a large amount of states).
See \cite{DBLP:conf/podc/BurmanCCDNSX21} for an overview of self-stabilizing population protocols.

In \cite{DBLP:journals/dc/AngluinAE08a} the authors present and analyze a phase clocks which divides the time into phases of $\LDAUOmicron{n \log n}$ interactions assuming that a unique leader exists. They also present a generalization of the clock using a junta of size $n^{\epsilon}$ (for constant $\epsilon$) instead of a unique leader and analyze the process empirically.  
In \cite{DBLP:conf/soda/GasieniecS18} the authors show that the junta-driven phase clock needs $\LDAUOmicron{\log\log n}$ many states and it ticks for a polynomial number of interactions.
The protocol can easily be modified such that it requires only a constant number of states after the junta election \cite{DBLP:journals/dc/BerenbrinkEFKKR21}.
In the brief announcement \cite{DBLP:conf/podc/KosowskiU18} the authors suggest a phase clocks which, similarly to \cite{DBLP:conf/soda/GasieniecS18}, relies on a junta of size at most $n^{\epsilon}$.
Their clocks are based on the oscillatory dynamics from \cite{DBLP:conf/stoc/DudekK18} and need constantly many states in the case that the junta is already selected.
In \cite{DBLP:conf/soda/AlistarhAG18} the authors present a leaderless phase clock with $\LDAUOmicron{\log n}$ states.
In contrast to our leaderless phase clock, the clock from \cite{DBLP:conf/soda/AlistarhAG18} is not self-stabilizing: it runs only for a polynomial number of interactions.
The analysis is based on the potential function analysis introduced in \cite{DBLP:conf/icalp/TalwarW14} for the greedy balls-into-bins strategy where each ball has to be allocated into one out of two randomly chosen bins.
This analysis assumes an initially balanced configuration and it cannot  be adopted to an arbitrary unbalanced state, which would be required to deal with unsynchronized clock configurations. 
In \cite{DBLP:journals/jcss/Aspnes21} the authors consider a variant of the population model, so-called clocked population protocols, where agents have an additional flag for clock ticks.
The clock signal indicates when the agents have waited sufficiently long for a protocol to have converged.
They show that a clocked population protocol running in less than $\omega^k$ time for fixed $k\geq 2$ is equivalent in power to nondeterministic Turing machines with logarithmic space. 

Another line of related work considers the problem of \emph{exact majority}, where one seeks to achieve (guaranteed) majority consensus, even if the additive bias is as small as one~\cite{ DBLP:journals/dc/DotyS18,  DBLP:conf/soda/AlistarhAG18,    DBLP:journals/corr/abs-1805-04586, BKKP20}. The currently best protocol \cite{doty2021time} solves exact majority with $\ldauOmicron{\log n}$ states and $\ldauOmicron{\log n}$ stabilization time, both in expectation and \whp.
The authors of \cite{DBLP:journals/dc/AngluinAE08} solve the approximate majority problem. They introduce the undecided state dynamics in the population model and consider two opinions. They show that their $3$-state protocol reaches consensus \whp in $\ldauOmicron{n \log{n}}$ interactions.
If the bias is of order $\ldauomega{\sqrt n \cdot \log n}$ the undecided state dynamics converges towards the initial majority \whp.
In \cite{DBLP:journals/nc/CondonHKM20} this required bias is reduced to $\Omega(\sqrt{n \log n})$.


In \cite{DBLP:conf/dna/AlistarhDKSU17}
the authors develop an algorithm to detect whether there is an agent in a given state $Y$ or not.
They introduce so-called \emph{leak} transitions and \emph{catalyst} transitions.
A catalyst transition for a state $X$ is a transition which does not change the number of agents in state $X$.
Leak transitions are spurious reactions which can consume and create arbitrary non-catalytic agents.
In \cite{DBLP:conf/podc/Alistarh0U21} the authors introduce the \emph{robust comparison problem} where the goal is to decide which of the two  states $A$ and $B$ have the larger support.
The authors adopt the model of \cite{DBLP:conf/dna/AlistarhDKSU17} with leak transitions and catalyst transitions.
For the case that the initial support of $A$ and $B$ is at least $\LDAUOmicron{\log n}$ the authors present a loosely-stabilizing dynamics.
If at least $\Omega(\log n)$ agents are in either $A$ or $B$ and the ratio between the numbers of agents supporting $A$ and $B$ is at least a constant, their protocol solves the problem with $\LDAUOmicron{\log n\cdot\log\log{n}}$ states per agent.
It converges in $\LDAUOmicron{n\log n}$ interactions such that every agent outputs the majority\whp.
If the initial support of $A$ and $B$ states is $\LDAUOmega{\log^2 n}$ the authors can strengthen their results such that a ratio between the two base states of $1+\LDAUomicron{1}$ is sufficient.
The results also hold with leak transitions not affecting agents in state $A$ or $B$ with rate $1/n$.
In this case the authors show that most of the agents output the correct majority.

In \cite{DBLP:conf/opodis/AmirAL20} the authors use the \emph{catalytic input model} (CI model) where they have two types of agents: $n$ catalysts and $m$ worker agents ($N = n+m$).
They solve the approximate majority problem for two opinions \whp in $\LDAUOmicron{N\log N}$ interactions in the CI model when the initial bias among the catalysts is $\LDAUOmega{\sqrt{N\log N}}$ and $m = \ldauTheta{n}$.
They show that the size of the initial bias is tight up to a $\LDAUOmicron{\sqrt{\log N}}$ factor.
Additionally, they consider the approximate majority problem in the CI model and in the population model with leaks.
Their protocols tolerate a leak rate of at most $\beta = \LDAUOmicron{\sqrt{N\log N} / N}$ in the CI model and a leak rate of at most $\beta = \LDAUOmicron{\sqrt{n\log n} / n}$ in the population model.
They also show a separation between the computational power of the CI model and the population model.

\section{Population Model and Problem Definitions}
In the population model we are given a set $V$ of $n$ anonymous \emph{agents}.
At each time step two agents are chosen independently and uniformly at random randomly to \emph{interact}.
We assume that interactions between two agents $(u, v)$ are ordered and call $u$ the \emph{initiator} and $v$ the \emph{responder}.
The interacting agents update their states according to a common transition function of their previous states.
Formally, a \emph{population protocol} is defined as a tuple of a finite set of \emph{states} $Q$, a \emph{transition function} $\delta: Q \times Q \rightarrow Q \times Q$, a finite set of \emph{output symbols} $\Sigma$, and an \emph{output function} $\omega: Q \rightarrow \Sigma$ which maps every state to an output.
A \emph{configuration} is a mapping $C: V \rightarrow Q$ which specifies the state of each agent.
An execution of a protocol is an infinite sequence $C_0,C_1,\ldots$ such that for all $C_i$ there exist two agents $v_1, v_2$ and a transition $(q_1,q_2) \rightarrow (q'_1,q'_2)$ such that $C_i(v_1) = q_1, C_i(v_2) = q_2, C_{i+1}(v_1) = q'_1,  C_{i+1}(v_2) = q'_2$ and $C_i(w) = C_{i+1}(w)$ for all $w \neq v_1,v_2$.
The main quality criteria of a population protocol are the required number of states and the running time.
The required number of states is given by the size of the state space $Q$, and the running time is given  by the number of interactions.

\paragraph{Phase Clocks}
Phase clocks are used to synchronize population protocols.
We assume a phase clock is implemented by simple counters $\Clock{u_1},\dots,\Clock{u_n}$ modulo $|Q|$ (see, e.g., \cite{DBLP:conf/soda/AlistarhAG18,DBLP:journals/dc/AngluinAE08a,DBLP:journals/dc/BerenbrinkEFKKR21,10.1145/1017460.1017463,DBLP:conf/soda/GasieniecS18}).
Whenever $\Clock u$ crosses zero, agent $u$ receives a so-called \emph{signal}.
These signals divide the time into \emph{phases} of $\Theta(n \log n)$ interactions each.
We say that a $(\tau, w)$-phase clock is synchronous in the time interval $[t_1, t_2]$ if every agent gets a signal every $\Theta(n\log n)$ interactions.
More formally:
\begin{itemize}[nosep]
\item Every agent receives a signal in the first $2 \cdot (w+1) \cdot \tau \cdot n$ steps of the interval. 
\item Assume an agent $u$ receives a signal at time $t\in [t_1, t_2]$.
\begin{itemize}[nosep]
\item For all $v\in V$, agent $v$ receives a signal at time $t_v$ with $|t - t_v| \leq \tau \cdot n$.
\item Agent $u$ receives the next signal at time $t'$ with $ (w+1) \cdot \tau \cdot n \leq |t - t'| \leq 2 \cdot (w+1) \cdot \tau \cdot n$.
\end{itemize}
\end{itemize}
The above definition  divides the time interval  $[t_1, t_2]$  into a sequence of subintervals that alternates between so-called burst-intervals and overlap-intervals.
\begin{itemize}[nosep]
\item A burst-interval has length at most $\tau \cdot n$ and every agent gets exactly one signal.
\item An overlap-interval consists of those time steps between two burst-intervals where none of the agents gets a signal.
It has length at least $w \cdot \tau \cdot n$.
\end{itemize}
A burst-interval together with the subsequent overlap-interval forms a \emph{phase}.


The goal of this paper is to develop a phase clock that is \emph{loosely-stabilizing} according to the definitions of \cite{DBLP:journals/tcs/SudoNYOKM12}.
To formally define loosely-stabilizing phase clocks, we first define the set of \emph{synchronous} configurations $\mathcal{C}$.
Intuitively, we call a state $C_t$ of a $(\tau, w)$-phase clock at time $t$ synchronous if the counters of all pairs of agents do not deviate much.
More precisely, $\Clock u(t) - \Clock v(t) \leq_{|Q|} (7 + 2 \cdot \sqrt{ 10 + w }) \cdot \tau$ for all pairs of agents $(u,v)$
(Here, ``$\leq_{|Q|}$'' denotes smaller w.r.t.\ the circular order modulo $|Q|$.)
We give the formal definition of a synchronous configuration in the next section.

We now define loosely-stabilizing phase clocks as follows.
Consider an infinite sequence of configurations $C_0,C_1,\ldots $.
For an arbitrary configuration $C_i \not\in \mathcal C$ the \emph{convergence time} is defined as the smallest $t$ such that $C_{i+t_1} \in \mathcal{C}$.
Intuitively, the convergence time bounds the time it takes the clock to reach a synchronous configuration when starting from an asynchronous configuration.
For an arbitrary configuration $C_i\in \mathcal{C}$ the \emph{holding time} $t_2$ is defined as the largest $t$ such that $C_{i+t_2} \in \mathcal{C}$ .
Intuitively, the holding time bounds the time during which the clock remains in a synchronous configuration when starting from a synchronous configuration.
We say that a phase clock is ($t_1, t_2$)-loosely-stabilizing if the maximum convergence time over all possible configurations is \whp less than $t_1$ and the minimum holding time over all synchronous configurations is \whp at least $t_2$.
Note that the probabilities in our bounds are only a function of the randomly selected interaction sequence.

\section{Clock Algorithm}
\label{sec:clock_algorithm}

\noindent\begin{minipage}{\textwidth-150.841pt-2em}
In this section we introduce our phase clock protocol.
Our $(\tau, w)$-phase clock has a state space $Q=\set{0,\dots,\left( 21+w+6 \cdot \sqrt{10+w} \right) \cdot \tau-1}$.
The clock states are divided into 
$\left( 21+w+6 \cdot \sqrt{10+w} \right)$ \emph{hours}, and each hour consists of $\tau = 36 \cdot (c+4) \cdot \ln n$ \emph{minutes}.
The parameter $c \geq 0$ is a constant that is defined in \cref{thm:synchronous_phase_clock}.
As we will see, $\tau$ is a multiple of the running time of the \emph{one-way epidemic} (see \cref{lem:one-way-epidemic}) and
$w \cdot \tau \cdot  n$ is the number of interactions in which our clocks are synchronized.
\end{minipage}\hfill\begin{minipage}{150.841pt}
\begin{figure}[H] 
\vspace{-10ex}
\includegraphics{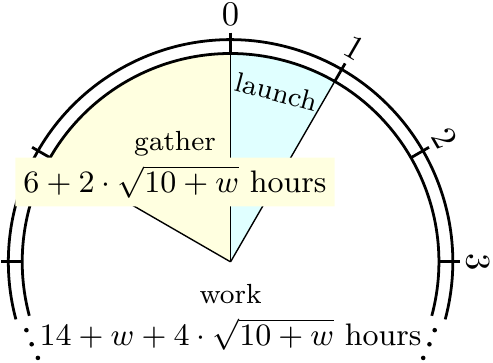}
\caption{\label{fig:clock} Schematic representation of the clock states.}
\end{figure}
\end{minipage}

\smallskip

We divide the hours into three consecutive intervals (see \cref{fig:clock}): the launching interval $I_{launch}$ (first hour), the working interval $I_{work}$ ($14 + w + 4 \cdot \sqrt{10+w}$ hours) and the gathering interval $I_{gather}$ (last $6 + 2 \cdot \sqrt{10+w}$ hours). 
We say that agent $u$ is in one of the intervals whenever its clock counter $\Clock u$ is in that interval.
If the agents are either all in $I_{gather}$, all in $I_{work}$, or all in $I_{launch}$, we say the configuration is \emph{homogeneous}.
Finally, for two agents $u$ and $v$ we define a \emph{distance} $d(u,v) = \min \{ |\Clock{u} -\Clock{v}|, |Q| - |\Clock{u} -\Clock{v}| \}$ 
that takes the cyclic nature of the clock into account.
This allows us to formally define synchronous configurations as follows.
\begin{definition*}[Synchronous Configuration] \label{def:synchronous}
A configuration $C$ is called \emph{synchronous} if and only if 
for all pairs of agents $(u,v)$ we have $d(u,v) < (7 + 2 \cdot \sqrt{10+w}) \cdot \tau$.
\end{definition*}
Our clock works as follows.
Assume agents $(u,v)$ interact.
With two exceptions, agent $u$ increments its counter $\Clock u$ by one minute modulo $|Q|$ (\cref{rule:1,rule:2}).
If, however, $u$ is in  $I_{gather}$ and $v$ is in $I_{launch}$ then agent $u$ adopts $\Clock v$ (\cref{rule:3}): we say the agent \emph{hops}.
If $u$ is in $I_{gather}$ and $v$ is in $I_{work}$ then agent $u$ returns to the beginning of the $I_{gather}$ interval  (\cref{rule:4}): we say that the agent \emph{resets} itself.
We define that agent $u$ receives a \emph{signal} whenever its clock crosses the wrap-around from $I_{gather}$ to $I_{launch}$.
Formally, our clock uses the following state transitions.
\begin{align}
\label[rule]{rule:1}
&(q_1,q_2) \in (Q \setminus I_{gather}) \times Q\colon &
&(q_1,q_2) \rightarrow (q_1+1, q_2) & \text{(step forward)}\\
\label[rule]{rule:2}
&(q_1,q_2) \in I_{gather} \times I_{gather}\colon &
&(q_1,q_2) \rightarrow (q_1 + 1 \bmod |Q|, q_2) & \text{(step forward)}\\
\label[rule]{rule:3}
&(q_1,q_2) \in I_{gather} \times I_{launch}\colon &
&(q_1,q_2) \rightarrow (q_2, q_2) & \text{(hopping)}\\
\label[rule]{rule:4}
&(q_1,q_2) \in I_{gather} \times I_{work}\colon &
&(q_1,q_2) \rightarrow (|I_{launch}| + |I_{work}|, q_2) & \text{(reset)}
\end{align}

On an intuitive level, the clock works as follows.
Assume the clock is synchronized and all agents are in $I_{launch}$.
Now consider the next $k = \LdauTheta{n \cdot |Q|}$ interactions.
All agents step forward according to \cref{rule:1} until they reach $I_{gather}$.
The maximum distance between any agents grows during the $k$ interactions but it is still bounded by $\LdauOmicron{\sqrt{k/n}} = \LdauOmicron{\sqrt{|w| \cdot \log n}}$, \whp.
Hence, due to the choice of $w$ there is no agent left behind in $I_{launch}$ when the first agent reaches $I_{gather}$.
Additionally, due to the size of $I_{gather}$ when the last agent enters $I_{gather}$ all of the other agents are still in $|I_{gather}|$.
As soon as the first agent reaches $I_{launch}$, \cref{rule:3} (agents hop onto agents in $I_{launch}$) ensures that all agents start the next phase without a large gap.
Hence, there is an interaction after which all agents are in $|I_{launch}|$ which brings us back to our initial configuration (all agents in $I_{launch}$).

Now we consider an asynchronous configuration $C_t$ where the agents can be arbitrarily distributed over the $|Q|$ states of the clock.
The main idea for the recovery of our clock is as follows.
We show that after $O(n \log n)$ interactions there is a time $t$ where $I_{launch}$ is empty.
After $O(n \log n) $ additional steps most of the agents are in $I_{gather}$:
agents cannot hop since $I_{launch}$ is empty, and they reset as soon as they interact with an agent in $I_{work}$.
They enter $I_{launch}$ as soon as the first agent crosses $0$ by increasing its clock counter.
\medskip

\noindent We will show that the following two properties hold for our clock.

\begin{theorem}
\label{thm:synchronous_phase_clock}
Let $t_1, t_2$ with $t_1 \leq t_2$ be two points in time and assume that the configuration $C_{t_1}$ at time $t_1$ is a homogeneous launching configuration and $t_2-t_1\le n^c$.
Then the clock counters of the agents implement a synchronous $(\tau,w)$-phase clock in the time interval $[t_1, t_2]$ \whp.
\end{theorem}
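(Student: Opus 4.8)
The plan is to prove the theorem by a phase-by-phase induction, maintaining the invariant that at the start of every phase the configuration is homogeneous launching with $d(u,v) = O(\sqrt{w}\,\tau)$ for all pairs $(u,v)$. The base case is $C_{t_1}$. For the inductive step I would show that, w.h.p., a single phase (i) keeps all agents synchronous, (ii) produces one burst-interval of length at most $\tau \cdot n$ in which every agent receives exactly one signal, (iii) is followed by an overlap-interval of length at least $w \cdot \tau \cdot n$ carrying no signal, and (iv) re-establishes the invariant. Since each phase spans $\Theta(w\,\tau\,n)$ interactions, the window $[t_1,t_2]$ contains only $O(n^{c-1})$ phases; the value $\tau = 36(c+4)\ln n$ is chosen precisely so that the Chernoff and epidemic tail bounds below give a per-phase failure probability of $n^{-(c+\Theta(1))}$, and a union bound over all phases then yields the claim.

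The two workhorses are \cref{lem:one-way-epidemic} and a Chernoff bound on step counts. While only \cref{rule:1,rule:2} are active, the number of forward steps of a fixed agent over $m$ interactions is $\Bin(m,1/n)$, so during the $\Theta(w\,\tau\,n)$ interactions needed to traverse $I_{launch}\cup I_{work}$ the per-agent step counts concentrate around their common mean with deviation $O(\sqrt{w}\,\tau)$ w.h.p. I would use this in \textbf{Stage 1 (marching)}: show that when the first agent enters $I_{gather}$, every agent has already left $I_{launch}$ — so \cref{rule:3} cannot fire prematurely — and the spread is still below $|I_{gather}|$. The generous width of $I_{work}$ (namely $14+w+4\sqrt{10+w}$ hours) measured against a spread of $O(\sqrt{w}\,\tau)$ is exactly what makes both facts hold, and it keeps $d(u,v)$ below the synchrony threshold $(7+2\sqrt{10+w})\tau$.

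\textbf{Stage 2 (gathering)} is where the reset rule does the real work, and I expect it to be the main obstacle. While laggards remain in $I_{work}$, a front-runner in $I_{gather}$ advances (in net) only when it meets another $I_{gather}$ agent as initiator, and is thrown back to the start of $I_{gather}$ by \cref{rule:4} whenever it meets a work agent. I would establish two things at once: by \cref{lem:one-way-epidemic} all agents reach $I_{gather}$ within $O(\tau\,n)$ interactions after the first, and throughout this window no front-runner makes net progress exceeding $|I_{gather}| = (6+2\sqrt{10+w})\tau$ minutes, so no agent wraps into $I_{launch}$ prematurely (which would inject a spurious signal into the overlap). The delicate point is the tail of the epidemic: once only a handful of laggards remain, resets become rare and front-runners begin to move, so one must bound the residual advancement against the last-laggard arrival time. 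Analysing this reset-driven random walk and winning this race is precisely what the hour count $6+2\sqrt{10+w}$ is calibrated for. Once the last agent enters $I_{gather}$, \cref{rule:4} can no longer fire and all agents march forward together.

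\textbf{Stage 3 (burst)}: when the first agent wraps past the boundary into $I_{launch}$ it receives a signal, and \cref{rule:3} turns ``being in $I_{launch}$'' into a one-way epidemic, so by \cref{lem:one-way-epidemic} every remaining agent hops (or self-wraps) into $I_{launch}$ within $\tau\,n$ interactions, adopting the clock value of an already-wrapped agent. This is the burst-interval: it has length at most $\tau \cdot n$, every agent crosses the boundary exactly once and hence gets exactly one signal, and afterwards the configuration is again homogeneous launching with spread $O(\sqrt{w}\,\tau)$, closing the induction. It then remains to check the quantitative signal conditions, which is routine once the stages are in place: the constants $21+w+6\sqrt{10+w}$ place the per-cycle interaction count inside $[(w+1)\tau n,\,2(w+1)\tau n]$ for the signal spacing, leave a signal-free overlap of length at least $w\,\tau\,n$, and guarantee that the first phase completes within $2(w+1)\tau n$ steps so that every agent obtains an initial signal in time. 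Assembling the per-phase guarantees via the union bound over the $O(n^{c-1})$ phases in $[t_1,t_2]$ shows that the clock counters implement a synchronous $(\tau,w)$-phase clock throughout $[t_1,t_2]$ \whp.
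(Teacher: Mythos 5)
Your decomposition — a marching stage analyzed by Chernoff concentration of per-agent step counts, an epidemic-driven burst into $I_{launch}$, and a per-phase induction with a union bound over the $O(n^{c-1})$ phases in $[t_1,t_2]$ — is the same skeleton as the paper's, which proves the theorem via \cref{pro:maintenance}, split into \cref{lem:launch->gather,lem:gather->launch}. However, your Stage 3 has a genuine gap at exactly the point the paper identifies as its main technical contribution. \Cref{lem:one-way-epidemic} gives only the \emph{lower} bound: every agent enters $I_{launch}$ within $\tau/4\cdot n$ interactions. To close your induction you also need the \emph{upper} bound that no agent's counter reaches the end of $I_{launch}$ (advances $\tau$ minutes) before the last agent has entered; otherwise an early leaver sits in $I_{work}$, \cref{rule:4} starts resetting stragglers still in $I_{gather}$, and the configuration handed to the next phase is not homogeneous launching. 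This upper bound does not follow from a per-agent Chernoff bound, because \cref{rule:3} makes the counters of agents in $I_{launch}$ heavily correlated: a hopping agent \emph{inherits} the (possibly already advanced) counter of the responder, and conditioning on the order in which agents enter $I_{launch}$ distorts the interaction distribution. The paper spends the bulk of the proof of \cref{lem:gather->launch} on precisely this: it introduces a simplified process with an absorbing $\Stop$ state, shows by induction on the entry order that the counter of the $i$'th entrant is majorized by $\Bin\bigl(t,\tfrac1n(1+\tfrac{1}{n-1})^{i-1}\bigr)$ — with an explicit correction factor $|\Omega(t)|/|\Omega_{S_i}(t)| = 1+1/(n-1)$ for the conditioned probability space — then applies Chernoff to the dominating $\Bin(\tilde t, e/n)$ variable and couples back to the real clock. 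Your proposal treats this step as routine (``every remaining agent hops \ldots adopting the clock value of an already-wrapped agent''), which is exactly the part that does not follow from the tools you cite.

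Two smaller misreadings. In Stage 2 you invoke \cref{lem:one-way-epidemic} for the arrival of all agents in $I_{gather}$, but no epidemic mechanism operates there: agents enter $I_{gather}$ only by marching, the arrival window is $\Theta(\sqrt{w}\,\tau\,n)$ rather than $O(\tau n)$, and it fits because $|I_{gather}| = \Theta(\sqrt{w}\,\tau)$. Moreover, the ``reset-driven random walk race'' you anticipate as the main obstacle is not needed for maintenance: resets only ever decrease a counter, so the paper simply bounds $\Clock{u}(t)\leq \Clock{u}(0)+X_u(t)$ and applies \cref{lem:moving} with $\delta=\left(2\sqrt{|I_{work}|/\tau}\right)^{-1}$ to conclude that no agent can traverse $I_{work}$ and all of $I_{gather}$ before everyone has arrived. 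The delicate reset-versus-laggard analysis you sketch is, in spirit, what the paper needs for the recovery result (\cref{thm:loosely_stabilizing}), not for this theorem.
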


\begin{theorem}
\label{thm:loosely_stabilizing}
The clock counters of the agents implement a $\left( \LdauOmicron{n \cdot \log n}, \LdauOmega{\text{poly }n} \right)$-loosely-stabilizing $\left(\LdauTheta{\log n},w\right)$-phase clock.
\end{theorem}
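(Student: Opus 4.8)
The plan is to split loose-stabilization into its two requirements --- a convergence time of $\LdauOmicron{n \log n}$ and a holding time of $\LdauOmega{\text{poly }n}$ --- and to route both through a single \emph{recovery lemma}: from an \emph{arbitrary} configuration the clock reaches a homogeneous launching configuration within $\LdauOmicron{n\log n}$ interactions \whp. Such a configuration has all counters inside the single hour $I_{launch}$, so its diameter is below $\tau < (7 + 2\cdot\sqrt{10+w})\cdot\tau$ and it is synchronous; this gives the convergence bound at once. For the holding bound I would feed the homogeneous launching configuration produced by the recovery lemma into \cref{thm:synchronous_phase_clock}.

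For the recovery lemma I would follow the three-stage plan sketched before the theorem. \textbf{Stage 1 (emptying $I_{launch}$).} Every agent in $I_{launch}$ only steps forward (\cref{rule:1}), so it leaves this single hour after at most $\tau$ initiator-steps, whereas re-seeding $I_{launch}$ through the wrap-around (\cref{rule:2}) requires first traversing the remaining cycle of at least $20$ hours, i.e.\ at least $20\tau$ initiator-steps. A balls-into-bins concentration bound shows that in a window of $\LdauOmicron{n\log n}$ interactions every agent acts as initiator more than $\tau$ but fewer than $20\tau$ times \whp. Hence all original occupants of $I_{launch}$ leave before any agent can wrap around to refill it, and since hopping (\cref{rule:3}) can only move a gathering agent onto an agent that is itself already in $I_{launch}$ and about to leave, it cannot sustain a non-empty $I_{launch}$; there is therefore a time $t$ with $I_{launch}$ empty. \textbf{Stage 2 (gathering).} Once $I_{launch}$ is empty, hopping is disabled, so any agent in $I_{gather}$ that meets an agent still in $I_{work}$ resets to the start of $I_{gather}$ (\cref{rule:4}); this pulls the leading agents back behind the trailing ones and lets $I_{gather}$ --- deliberately $6+2\cdot\sqrt{10+w}$ hours wide --- absorb the entire spread, so after $\LdauOmicron{n\log n}$ further interactions essentially all agents lie in $I_{gather}$ \whp. \textbf{Stage 3 (launching).} With no agent left in $I_{work}$ the reset rule no longer fires, the foremost agent crosses $0$ into $I_{launch}$, and hopping (\cref{rule:3}) then spreads this launching value as a one-way epidemic (\cref{lem:one-way-epidemic}), packing all agents into $I_{launch}$ within a further $\LdauOmicron{n\log n}$ interactions \whp. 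Summing the three stages proves the recovery lemma.

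Assembling the two bounds is then short. The convergence bound $t_1 = \LdauOmicron{n\log n}$ is the recovery lemma together with the observation that a homogeneous launching configuration is synchronous. For the holding bound I would start from an arbitrary synchronous configuration $C \in \mathcal{C}$: its diameter is already below the threshold $(7+2\cdot\sqrt{10+w})\cdot\tau$, and the reset and hopping mechanisms only contract the cluster while free stepping widens it by at most $\LdauOmicron{\sqrt{w\log n}}$ minutes per phase --- far less than the slack of $7\tau = \LdauTheta{\log n}$ minutes built into the threshold. Running the recovery lemma therefore reaches a homogeneous launching configuration within one phase, i.e.\ $\LdauOmicron{n\log n}$ interactions, without ever leaving $\mathcal{C}$ \whp. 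Applying \cref{thm:synchronous_phase_clock} from that configuration with $t_2-t_1 = n^c$ keeps the clock synchronous for $n^c$ further interactions \whp. This yields a holding time of at least $n^c - \LdauOmicron{n\log n} = \LdauOmega{\text{poly }n}$, and $\tau = \LdauTheta{\log n}$ matches the claimed $\left(\LdauTheta{\log n},w\right)$-phase clock.

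The main obstacle is the recovery lemma, and within it the control of the cyclic dynamics. Because the distance between two agents can both grow (while they step forward freely) and shrink (under reset or hopping), bounding the spread over $\LdauOmicron{n\log n}$ interactions demands a careful potential or concentration argument rather than a monotone one; in particular, Stages~1 and~2 must exclude a single agent racing around the cycle and re-seeding $I_{launch}$ before the gather completes, which is exactly where the generous widths of $I_{work}$ ($14+w+4\cdot\sqrt{10+w}$ hours) and $I_{gather}$ ($6+2\cdot\sqrt{10+w}$ hours), and hence the choice of $w$, are spent. A secondary difficulty is the claim that the clock never leaves $\mathcal{C}$ during recovery, which the holding bound relies on, since a single excursion out of $\mathcal{C}$ would formally truncate the holding time.
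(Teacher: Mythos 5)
Your overall architecture is the paper's own: a recovery statement (from an arbitrary configuration the clock reaches a homogeneous launching configuration within $\LdauOmicron{n\log n}$ interactions w.h.p., which is the paper's \cref{thm:recovery}) composed with \cref{thm:synchronous_phase_clock} to get the holding time. The genuine gap is inside your Stage~1. The claim that re-seeding $I_{launch}$ ``requires first traversing the remaining cycle of at least $20$ hours'' is only true for agents that \emph{start} in $I_{launch}$; in an arbitrary configuration agents may sit anywhere, and an agent in the last minute of $I_{gather}$ enters state $0$ with a \emph{single} initiator step (with a responder in $I_{gather}$, \cref{rule:2}). Since agents can be spread over all of $I_{gather}$, the wrap-around can be re-seeded again and again, and every fresh arrival at state $0$ is a new target for \cref{rule:3}, so hopping \emph{can} sustain a non-empty $I_{launch}$; your conclusion ``there is therefore a time $t$ with $I_{launch}$ empty'' does not follow from counting initiator steps. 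Ruling this scenario out is precisely the technical heart of the paper's \cref{lem:any->emptylaunch}: time is cut into stages of $2\tau n$ interactions and a case analysis is run on the predicates $\FewInGather$ (fewer than $0.9n$ agents in $I_{gather}$ throughout a stage) and $\NoBroadcast$ (no wrap-around crossing in a stage). If $I_{gather}$ is sparse, each initiated interaction of a gathering agent resets or hops it with probability at least $0.1$, so w.h.p.\ no agent traverses the $|I_{gather}| > 12\tau$ states to the wrap-around, and $I_{launch}$ then drains state by state (no one enters state $0$, so eventually no one is left at state $0$ to hop onto, then the same for state $1$, and so on). If instead $I_{gather}$ is dense and a wrap-around does occur, a one-way-epidemic coupling shows that at least $0.1n$ agents hop out of $I_{gather}$ within $\tau/4\cdot n$ interactions and cannot return for several stages, contradicting the assumed density. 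Your sketch contains neither half of this dichotomy, and without it the convergence bound is unproved.

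Two smaller points. Your Stage~2 premise ``once $I_{launch}$ is empty, hopping is disabled'' holds only until some agent steps into state $0$ on its own; the paper's \cref{lem:recovery:gather->launch} is devoted exactly to this case (the escapee triggers an epidemic pulling the $0.9n$ bulk into $I_{launch}$, while the stragglers have a head start of at most $\tau$ states in $I_{gather}$ and are then trapped by resets until the bulk returns). And your holding-time argument leans on the unproven claim that recovery from an arbitrary synchronous configuration never leaves $\mathcal{C}$ --- you flag this yourself, but the fix is simpler: follow the paper and measure the holding interval from the homogeneous launching configuration that recovery produces, after which \cref{thm:synchronous_phase_clock} keeps every configuration synchronous for $n^c$ interactions w.h.p., which already yields the $\LdauOmega{\text{poly }n}$ holding time without any invariance claim along the recovery path.
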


\noindent We prove \cref{thm:synchronous_phase_clock} in \cref{sec:maintenance} and \cref{thm:loosely_stabilizing} in \cref{sec:recovery}.

\paragraph{Auxiliary Results}

The one-way-epidemic  is a population protocol with state space $\set{0,1}$ and transitions $(q_1,q_2) \rightarrow (\max \left( q_1, q_2 \right), q_2)$.
An agent in state $0$ is called \emph{susceptible} and an agent in state $1$ is called \emph{infected}.
We say agent $u$ infects agent $v$ if $u$ is infected and initiates an interaction with $v$.
The following result is folklore, see, e.g., \cite{DBLP:journals/dc/AngluinAE08a}.
Additional details can be found in \cref{apx:one-way-epidemic}.

\begin{lemma}[One-way-epidemic]
    \label{lem:one-way-epidemic}
    Assume an agent starts the one-way epidemic in step $1$.
    All agents are infected after $t=\tau/4 \cdot n$ many steps with probability at least $1-n^{-(7+2c)}$.
\end{lemma}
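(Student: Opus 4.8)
The plan is to track only the number of infected agents and reduce the claim to a concentration bound for a sum of independent geometric random variables. Write $I_s$ for the number of infected agents after $s$ steps, so $I_0 = 1$. Conditioned on $I_s = k$, a single step raises the count to $k+1$ exactly when one endpoint of the chosen ordered pair is infected and the other susceptible, which happens with probability $p_k = k(n-k)/n^2$. This value is the same whether the initiator or the responder adopts the larger state, and it changes only by a factor $1-1/n$ if the two interacting agents are required to be distinct, so the argument is insensitive to both conventions. By the memorylessness of the scheduler the waiting times $G_1,\dots,G_{n-1}$ between successive infections are independent, with $G_k$ geometric of parameter $p_k$, and the time until all agents are infected is $T = \sum_{k=1}^{n-1} G_k$. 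Since $\sum_{k=1}^{n-1} 1/p_k = n^2\sum_k \tfrac{1}{k(n-k)} = 2n H_{n-1} \le 2n(\ln n+1)$, we have $\mathbb{E}[T] \le 2n(\ln n+1)$, safely below $t = \tau n/4 = 9(c+4)\,n\ln n$; it remains to control the upper tail of $T$.

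First I would apply a Chernoff bound to $T$ using the exact transform $\mathbb{E}[e^{\theta G_k}] = \bigl(1-(1-e^{-\theta})/p_k\bigr)^{-1}$, which is finite precisely for $\theta < -\ln(1-p_k)$. The binding constraint comes from the boundary terms $k=1$ and $k=n-1$, where $p_k = (n-1)/n^2 \approx 1/n$, so I set $\theta = \alpha/n$ for a constant $\alpha<1$. With this choice $(1-e^{-\theta})/p_k \le \theta/p_k = \alpha n/(k(n-k))$, so $-\ln\bigl(1-(1-e^{-\theta})/p_k\bigr)$ equals $\theta/p_k$ up to second-order corrections. Summing, the first-order part contributes $\theta\sum_k 1/p_k = 2\alpha H_{n-1} = 2\alpha\ln n + O(1)$, while the second-order corrections are dominated by the $O(1)$ many terms with $k$ near $1$ or $n-1$ and total $O(1)$. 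Hence $\prod_{k=1}^{n-1}\mathbb{E}[e^{\theta G_k}] \le e^{O(1)}\,n^{2\alpha}$, and Markov's inequality gives $\Pr[T \ge t] \le e^{-\theta t}\prod_k \mathbb{E}[e^{\theta G_k}] \le e^{O(1)}\,n^{2\alpha - 9\alpha(c+4)}$. Choosing $\alpha$ sufficiently close to $1$ makes the exponent at most $-(7+2c)$ with room to spare, which is the desired bound.

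The main obstacle is the tension that forces the delicate tuning of $\theta$: the smallest parameter $\min_k p_k \approx 1/n$ caps the usable $\theta$ at roughly $1/n$, so the decay factor $e^{-\theta t} = n^{-9\alpha(c+4)}$ cannot be improved by enlarging $\theta$, while the very same boundary geometrics (means $\approx n$, hence the heaviest tails) are exactly what inflate the product $\prod_k \mathbb{E}[e^{\theta G_k}]$ by the factor $n^{2\alpha}$. Verifying that these two effects balance so the net exponent clears $7+2c$ is the crux, and it succeeds only because $t = \tau n/4$ is a large constant multiple (at least $36$, since $c\ge 0$) of $n\ln n$. A more intuitive two-phase alternative --- exponential growth from $1$ to $n/2$ infected, then a coupon-collector completion from $n/2$ to $n$ --- handles the completion phase cleanly by a union bound over the at most $n/2$ remaining agents (each infected per step with probability at least $1/(2n)$), but the growth phase starting from a constant number of infected does not concentrate well doubling-by-doubling, which is precisely the difficulty the aggregate transform bound circumvents.
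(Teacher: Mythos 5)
Your proposal is correct and follows essentially the same route as the paper's proof in its appendix: the identical decomposition of the infection time into independent geometric waiting times $G_k$ with $p_k = k(n-k)/n^2$, the same harmonic-sum computation of $\mathbb{E}[T]$, and then an upper-tail bound for the sum. The only difference is that where the paper invokes a black-box tail inequality for sums of independent geometrics (Janson 2017, Theorem 2.1, applied with $\min_k p_k = (n-1)/n^2$), you derive the equivalent Chernoff bound by hand with $\theta = \alpha/n$ --- effectively inlining the proof of the cited theorem --- and your bookkeeping of the boundary terms, the initiator/responder convention, and the final exponent $9\alpha(c+4) - 2\alpha \geq 7+2c$ is sound.
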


The following lemma bounds the number of interactions initiated by some fixed agent $u$ among a sequence of $t$ interactions.
It is used throughout \cref{sec:maintenance, sec:recovery} and follows immediately from Chernoff bounds (see \cref{thm:chernoff-bounds}).
\begin{lemma}
    \label{lem:moving}
    Consider an arbitrary sequence of $t$ interactions and 
    let $X_u$ be the number of interactions initiated by agent $u$ within this sequence.
    Then
    \[
        \Prob{X_u < (1+\delta) \cdot t/n} \geq 1-n^{-\frac{12\cdot(c+4) t\cdot\delta^2}{n\cdot \tau}} \quad \text{and} \quad 
        \Prob{X_u > (1-\delta) \cdot t/n} \geq 1-n^{-\frac{18\cdot(c+4) t\cdot\delta^2}{n\cdot \tau}}.
    \]
\end{lemma}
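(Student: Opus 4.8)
The plan is to identify $X_u$ as a binomial random variable and then invoke the Chernoff bounds of \cref{thm:chernoff-bounds}. First I would note that, by definition of the scheduler, at each of the $t$ steps the initiator is chosen uniformly at random among the $n$ agents and independently of all other steps. Hence agent $u$ initiates any given interaction with probability exactly $1/n$. Writing $X_u=\sum_{i=1}^{t}Y_i$, where $Y_i$ indicates that $u$ initiates the $i$-th interaction, the $Y_i$ are independent Bernoulli trials, so $X_u\sim\Bin(t,1/n)$ with mean $\mu=t/n$.

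Next I would apply the two multiplicative Chernoff bounds to $X_u$. The upper-tail bound gives $\Prob{X_u\geq(1+\delta)\mu}\leq\exp(-\delta^2\mu/3)$ and the lower-tail bound gives $\Prob{X_u\leq(1-\delta)\mu}\leq\exp(-\delta^2\mu/2)$, in both cases with $\mu=t/n$. Taking complements turns these into lower bounds on $\Prob{X_u<(1+\delta)t/n}$ and $\Prob{X_u>(1-\delta)t/n}$, which are precisely the probabilities in the claim.

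It then remains to rewrite the exponential tail bounds as powers of $n$ using the definition $\tau=36\cdot(c+4)\cdot\ln n$. For the upper tail, $\exp\!\left(-\delta^2 t/(3n)\right)=n^{-\delta^2 t/(3n\ln n)}$, and substituting $\ln n=\tau/(36(c+4))$ into the exponent yields $n^{-12(c+4)t\delta^2/(n\tau)}$; the lower tail is identical except that the factor $3$ is replaced by $2$, producing $n^{-18(c+4)t\delta^2/(n\tau)}$. The argument is entirely routine; the only thing to watch is the constant bookkeeping when passing from base-$e$ to base-$n$ exponents, namely that $36/3=12$ and $36/2=18$ reproduce the stated exponents.
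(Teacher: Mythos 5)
Your proof is correct and is precisely the argument the paper intends: the paper itself gives no separate proof, stating only that the lemma ``follows immediately from Chernoff bounds (see \cref{thm:chernoff-bounds}).'' Your identification of $X_u \sim \Bin(t, 1/n)$ with $\mu = t/n$, the application of the two multiplicative tails, and the base conversion via $\tau = 36\cdot(c+4)\cdot\ln n$ (so that $36/3 = 12$ and $36/2 = 18$ yield the stated exponents) fills in exactly the omitted routine details.
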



\section{Maintenance: Proof of \cref{thm:synchronous_phase_clock}}
\label{sec:maintenance}

In this section we first show the following main result.
At the end of the section we show how \cref{thm:synchronous_phase_clock} follows from this proposition.

 \begin{proposition}[Maintenance]
     \label{pro:maintenance}
     Consider our $(\tau, w)$-phase clock for $n$ agents with $\tau = 12 \cdot (c+4) \cdot \ln n$ for any $c\geq 6$ and sufficiently large $w$. 
     Let configuration $C_{t_1}$ be a homogeneous launching configuration.
     Then, with probability at least $1-n^{-(c+1)}$, there exists a $t_2 = \LdauTheta{n \cdot w \cdot \log n}$ such that the following holds:
     \begin{enumerate}[nosep]
         \item \label{pro:maintenance:1} $C_{t_1+t_2}$ is a homogeneous launching configuration,
        \item \label{pro:maintenance:2} $\forall t \in \intcc{t_1,t_1+t_2}$: $C_{t}$ is synchronous,
        \item \label{pro:maintenance:3} in the time interval $[t_1, t_1 + t_2]$ 
        there exists a contiguous sequence of homogeneous working configurations of length $w \cdot \tau \cdot n$.
     \end{enumerate}
\end{proposition}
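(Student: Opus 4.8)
The plan is to follow the synchronized clock through one full revolution, split into three stages matching the three intervals, and to reduce everything to the concentration of the per-agent initiation counts $X_u$. As long as an agent $u$ only steps forward (\cref{rule:1}, \cref{rule:2}) its counter equals its starting value plus $X_u$, so \cref{lem:moving} controls all positions simultaneously: over any window of length $\LdauTheta{n \cdot w \cdot \log n}$, a union bound over the $\binom{n}{2}$ pairs gives, with probability $1 - n^{-(c+2)}$, a spread bound $|X_u - X_v| \le s$ for all $u,v$ with $s = \LdauOmicron{\sqrt{w}\cdot\tau}$, where the parameters are fixed so that $s < |I_{gather}| = (6 + 2\sqrt{10+w})\cdot\tau$, which is itself below the synchrony threshold $(7 + 2\sqrt{10+w})\cdot\tau$. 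This one estimate yields synchrony (\cref{pro:maintenance:2}) in every regime in which agents merely advance, and it fixes the time scale $t_2 = \LdauTheta{n \cdot w \cdot \log n}$, since $I_{launch}$, $I_{work}$ and $I_{gather}$ together span $\LdauTheta{w \cdot \tau}$ minutes and each minute of common advance costs $\LdauTheta{n}$ interactions.

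\textbf{Stage 1 (launching into working).} All agents step forward by \cref{rule:1}. Since they start in $I_{launch}$, at the moment the fastest agent reaches the start of $I_{gather}$ the slowest is at most $s$ behind it, and because $s < |I_{work}|$ the slowest has already crossed out of $I_{launch}$; thus $I_{launch}$ is empty before any reset or hop is possible, the invariant the later stages rely on. The same spread bound yields a contiguous block in which every agent is simultaneously in $I_{work}$: its length is at least $|I_{work}| - s > w\cdot\tau$ minutes, i.e.\ at least $w \cdot \tau \cdot n$ interactions, which is \cref{pro:maintenance:3}.

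\textbf{Stage 2 (gathering).} With $I_{launch}$ empty a gatherer advances only against another gatherer (\cref{rule:2}) and is thrown back to the start of $I_{gather}$ against a worker (\cref{rule:4}). The point is that no gatherer wraps into $I_{launch}$ before the last worker has entered $I_{gather}$. This follows from the spread bound and the interval sizes alone: while any worker remains it lies below the boundary between $I_{work}$ and $I_{gather}$, so the fastest agent is at most $s$ past that boundary, and $s < |I_{gather}|$ keeps it strictly inside $I_{gather}$ until the last worker arrives. The resets only reinforce this, since they push leading gatherers back toward the boundary and can never raise the most-advanced position, and they are what keeps the maximal pairwise distance below the synchrony threshold across this transition, so \cref{pro:maintenance:2} persists. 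The constants $6 + 2\sqrt{10+w}$ and $7 + 2\sqrt{10+w}$ are chosen precisely to leave the margin $|I_{gather}| - s > 0$ here.

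\textbf{Stage 3 (hopping back into launching), and totals.} Once every agent is in $I_{gather}$ they again only step forward, so the leading agent crosses the wrap-around with everyone else within $s$ behind it, which in the circular metric $d$ keeps all pairs synchronous across $0$. The agent that enters $I_{launch}$ becomes the seed of a one-way epidemic of hops: by \cref{rule:3} every gatherer that meets a launcher jumps exactly onto it, and \cref{lem:one-way-epidemic} shows that all agents have hopped into $I_{launch}$ within $\tau/4 \cdot n$ further steps. A launcher advances by only $\approx \tau/4 < \tau = |I_{launch}|$ in that time, so no launcher leaves $I_{launch}$ before the epidemic ends, and all agents collapse into a window of $\LdauOmicron{\tau}$ minutes inside $I_{launch}$, that is, a homogeneous launching configuration (\cref{pro:maintenance:1}). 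Summing the stage lengths gives $t_2 = \LdauTheta{n \cdot w \cdot \log n}$, and a union bound over the constantly many concentration and epidemic events gives failure probability at most $n^{-(c+1)}$. I expect the main obstacle to be exactly the bookkeeping at the two non-forward transitions: resets (\cref{rule:4}) and hops (\cref{rule:3}) break the clean identity ``position $=$ start $+ X_u$'', so the synchrony invariant has to be re-established by hand there, and one must verify that the hopping epidemic is fast enough that the leader never laps the field across the wrap-around.
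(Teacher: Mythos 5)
Your Stages 1 and 2 follow essentially the same route as the paper's \cref{lem:launch->gather}: concentration of the initiation counts $X_u$ via \cref{lem:moving}, a union bound over pairs and time steps, and the observation that resets can only decrease a counter, so $\Clock{v}(t) \leq \Clock{v}(0) + X_v(t)$ survives into $I_{gather}$. The overlap count for \cref{pro:maintenance:3} ($|I_{work}| = w\tau + 2\Delta$ minus the spread) is also the paper's argument. The genuine gap is in your Stage 3, and it is exactly the obstacle you name in your last sentence but do not close. Your claim that ``a launcher advances by only $\approx \tau/4 < \tau$ in that time, so no launcher leaves $I_{launch}$ before the epidemic ends'' implicitly reinstates the identity position $=$ entry state $+$ own initiations, with independent Bernoulli$(1/n)$ increments per interaction. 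That identity is false for agents that hop: by \cref{rule:3} a hopping agent \emph{copies} the counter of the agent it lands on, so launcher counters are sums of initiation counts along hop chains and are strongly correlated. Worse, the natural events to condition on — ``$u$ is the $i$-th agent to enter $I_{launch}$'' — distort the probability space: this event forbids certain earlier interactions of $u$ (it cannot have initiated with any launcher before its hop), which inflates the conditional probability of the other agents' increment events. A per-agent Chernoff bound plus union bound is therefore not justified at this step, and Statement 1 of the proposition is left unproven.

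The paper resolves precisely this in \cref{lem:gather->launch}, which it calls the main technical contribution of the section. It defines a simplified process with an absorbing $\Stop$ state at the last minute of $I_{launch}$ and proves by induction on the entry order that the counter of the $i$-th entrant is majorized by $\BinDistr\left(t, \tfrac{1}{n}\left(1+\tfrac{1}{n-1}\right)^{i-1}\right)$; the per-level factor $1+\tfrac{1}{n-1}$ is exactly the ratio $|\Omega(t)|/|\Omega_{S_i}(t)|$ by which conditioning on the entry event $S_i$ shrinks the space of admissible interactions. Since $\left(1+\tfrac{1}{n-1}\right)^{n-1} < e$, every counter is dominated by $\BinDistr(\tilde t, e/n)$, Chernoff and a union bound show no agent reaches $\Stop$ within $\tilde t = \tau/4 \cdot n$ interactions, and a coupling transfers the conclusion back to the real clock. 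Your epidemic lower bound via \cref{lem:one-way-epidemic} is fine and matches the paper; what is missing is this decoupling device (or an equivalent one) for the upper bound, without which the leader-never-laps-the-field claim across the wrap-around does not follow.
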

We split the proof of \cref{pro:maintenance} into two parts, \cref{lem:launch->gather, lem:gather->launch}.
The formal proof follows.

\begin{proof}
Assume the configuration $C_{t_1}$ at time $t_1$ is a homogeneous launching configuration.
\Cref{pro:maintenance:1,pro:maintenance:2} of \cref{pro:maintenance} follow immediately from \cref{lem:launch->gather,lem:gather->launch}:
\begin{itemize}[nosep]
\item It follows from \cref{lem:launch->gather} that the agents transition via a sequence of synchronous configu\-rations into a homogeneous gathering configuration within $\Theta(n\cdot w\cdot \log n)$ \mbox{time \whp.}
\item It follow from \cref{lem:gather->launch} that the agents transition via a sequence of synchronous configurations back into a homogeneous launching configuration within $\Theta(n\cdot w\cdot \log n)$ further time \whp.
\end{itemize}
It remains to show \cref{pro:maintenance:3}.
Recall that in a synchronous configuration all pairs of agents have distance (w.r.t.\ the circular order modulo $|Q|$) at most $\Delta = (7 + 2 \cdot \sqrt{10+w}) \cdot \tau$.
Since $|I_{work}| = w \cdot \tau + 2 \Delta $ it immediately follows that there must be $w \cdot \tau \cdot n$ interactions where all agents are in $I_{work}$.
This concludes the proof.
\end{proof}

\noindent The following lemma establishes that \whp all agents transition from a homogeneous launching configuration into a homogeneous gathering configuration via a sequence of synchronous configurations.

\begin{lemma}
    \label{lem:launch->gather}
    Let $C_{t}$ be a homogeneous launching configuration.
    Let $t' = n \cdot \smash{\frac{|I_{launch}|+|I_{work}|}{1-\left(2 \cdot \sqrt{|I_{work}|/\tau} \right)^{-1}}}$.
    Then the following holds with probability at least $1 - n^{-(c+1)}/2$:
    \begin{enumerate}[nosep]
        \item \label{lem:launch->gather:1} $C_{t+t'}$ is a homogeneous gathering configuration and
        \item \label{lem:launch->gather:2} $\forall t'' \in \intcc{t,t+t'}: C_{t''}$ is synchronous.
    \end{enumerate} 
\end{lemma}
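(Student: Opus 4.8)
The plan is to reduce the dynamics to a single concentration statement about how far each agent has advanced, and then to read off both claims from the arithmetic of the interval sizes. Since in each of \cref{rule:1,rule:2,rule:3,rule:4} only the initiator changes its state, I would track for every agent $u$ its number of self-initiated interactions $X_u$ and the associated \emph{raw position} $\Clock{u}(t) + X_u$, i.e.\ the position $u$ would occupy if it always stepped forward by \cref{rule:1}. Crucially, resets (\cref{rule:4}), hops (\cref{rule:3}), and a modular wrap in \cref{rule:2} can only pull the true clock value below the raw position, so the invariant (actual position) $\le$ (raw position) holds throughout. Writing $\delta_0 = \bigl(2\sqrt{|I_{work}|/\tau}\bigr)^{-1}$, the definition of $t'$ is calibrated exactly so that $(1-\delta_0)\,t'/n = |I_{launch}| + |I_{work}|$, the number of forward steps needed to cross from the start of $I_{launch}$ to the start of $I_{gather}$. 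First I would apply \cref{lem:moving} with $\delta = \delta_0$ on the window $[t, t+t']$ and union bound over agents to obtain the event $E$ that every $X_u(t+t')$ lies in $[(1-\delta_0)t'/n,\,(1+\delta_0)t'/n]$.

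The arithmetic that makes everything fit is the identity $|I_{work}|/\tau = (\sqrt{10+w}+2)^2$; writing $s := \sqrt{10+w}$ this gives $\sqrt{|I_{work}|/\tau} = s+2$ and $\delta_0 = 1/(2s+4)$. On $E$ I would prove \cref{lem:launch->gather:1} from two bounds. For the lower end: every agent starts at raw position $\ge 0$ and, by the lower tail in $E$, advances by at least $(1-\delta_0)t'/n = |I_{launch}|+|I_{work}|$ steps, so every non-reset agent has raw position at least the start of $I_{gather}$, while every reset agent sits exactly at the start of $I_{gather}$ by \cref{rule:4}; hence at time $t+t'$ no agent lies before $I_{gather}$. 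For the upper end I would show no agent wraps around: since actual positions never exceed raw positions, it suffices that the maximal raw position $(\tau-1)+(1+\delta_0)t'/n$ stays strictly below $|Q|$, and a short computation using $|I_{work}| = (s+2)^2\tau$ and $|Q| = (s^2+6s+11)\tau$ confirms this for all $w\ge 0$. Together these place every agent inside $I_{gather}$, giving a homogeneous gathering configuration.

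For \cref{lem:launch->gather:2} I would bound every pairwise distance by the spread of raw positions. The trailing (minimum-position) agent is always a non-reset agent in $I_{launch}\cup I_{work}$, so the minimum actual position equals the minimum raw position, while every actual position is at most its raw position; hence the actual spread is dominated by the raw spread. The raw spread is at most the initial spread ($< \tau$, since $C_t$ is launching) plus $\max_u X_u - \min_u X_u \le 2\delta_0\, t'/n$, and the value of $\delta_0$ makes $\tau + 2\delta_0 t'/n$ at most $(s+4)\tau$, strictly below the synchronous threshold $(7 + 2s)\tau$. Two consistency checks complete this part: because $(s+4)\tau < |I_{work}|$, when the first agent reaches $I_{gather}$ every other agent is already past $I_{launch}$, so \cref{rule:3} never fires during the transition; and because $(s+4)\tau < |I_{gather}| = (6+2s)\tau$, a leading agent cannot traverse all of $I_{gather}$ and lap the field before the trailing agent arrives. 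Once all agents are in $I_{gather}$, their positions span less than $|I_{gather}| < (7+2s)\tau$, so synchrony is immediate.

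The main obstacle I anticipate is making the synchrony bound \emph{uniform in time}: \cref{lem:moving} is a fixed-window statement, whereas \cref{lem:launch->gather:2} asserts synchrony at every $t'' \in [t,t+t']$. I would handle this by using the additive deviation $\delta_0 t'/n$ as a fixed threshold against a mean $(t''-t)/n$ that only shrinks for earlier $t''$; evaluating \cref{lem:moving} on each prefix window $[t,t'']$ yields a tail exponent that is weakest precisely at $t''=t+t'$ and strengthens as $t''$ decreases, so a union bound over the $\mathrm{poly}(n)$ time steps (and over agents and both tails) costs only an additive constant in the exponent. Since $\tau = \Theta(\log n)$ makes the base exponent in \cref{lem:moving} of order $(c+4)$, the choice $c\ge 6$ leaves ample slack to absorb all of these union bounds and still deliver probability at least $1 - n^{-(c+1)}/2$. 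The secondary delicacy is the reset bookkeeping—arguing that resets only ever pull leading agents back toward the start of $I_{gather}$ and never touch the minimum—which is exactly what licenses replacing the true dynamics by the clean raw-position process used above.
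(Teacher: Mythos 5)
Your proposal is correct and takes essentially the same route as the paper's proof: both rest on \cref{lem:moving} with the same $\delta_0=\bigl(2\sqrt{|I_{work}|/\tau}\bigr)^{-1}$, the observation that $I_{launch}$ empties before the first agent reaches $I_{gather}$ (so \cref{rule:3} never fires), the monotone bound $\Clock{u}(t'')\le\Clock{u}(t)+X_u(t'')$ to absorb resets, and a union bound over agents and all $o(n^2)$ intermediate times. Your ``raw position'' invariant merely repackages the paper's three-case analysis of the clock-counter deviations, and your per-prefix-window fix for uniformity in time is exactly the paper's per-time application of \cref{lem:moving} with a fixed additive threshold.
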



\begin{proof}
In the following we assume w.l.o.g.\  $t = 0$. 
We prove the two statements separately.

\paragraph{Statement 1}
Our goal is to show that after $t'$ interactions all agents are in $I_{gather}$ when we start from a homogeneous launching configuration $C_0$ at time $t = 0$.
We first show that there is no agent left in $I_{launch}$ when the first agent enters $I_{gather}$.
Let $t_a$ be the first interaction in which an agent enters $I_{gather}$.
Note that before $t_a$ all agents are either in $I_{launch}$ or in $I_{work}$ and thus the agents increase their counter by one whenever they initiate an interaction.

First we show that \whp $t_a \ge 2 \cdot \tau \cdot n$.    
Let $X_u(2 \cdot\tau \cdot n)$ denote the number of interactions agent $u$ initiates before time $2\cdot\tau \cdot n$.
From \cref{lem:moving} it follows with $\delta=1$ that $X_u(2 \cdot\tau \cdot n) < 4 \cdot \tau$ with probability at least $1-n^{-24\cdot(c+4)}$.
Since $4 \cdot \tau < |I_{work}|$, it holds that $\Clock{u}(2 \cdot\tau \cdot n) < |I_{launch}|+|I_{work}|$ in this case.
Hence, agent $u$ has not yet reached $I_{gather}$ with probability at least $1-n^{-24\cdot(c+4)}$ at time $t_a$.
It follows from a union bound over all agents that no agent has reached $I_{gather}$ with probability at least $1-n^{-24\cdot(c+4)+1}$ at time $t_a$.

Next we show that \whp at time $2 \cdot\tau \cdot n$ all agents have left $I_{launch}$.
As before, let $X_u(2 \cdot\tau \cdot n)$ denote the number of interactions agent $u$ initiates before time $2\cdot\tau \cdot n$.
From \cref{lem:moving} it follows with $\delta=1$ that $X_u(2 \cdot\tau \cdot n) > \tau$ with probability at least $1-n^{-36\cdot(c+4)}$.
Since $\tau = |I_{launch}|$, it holds that $\Clock{u}(t+2 \cdot\tau \cdot n) \ge |I_{launch}|$ in this case.
Hence, agent $u$ has left $I_{launch}$ with probability at least $1-n^{-36\cdot(c+4)}$ at time $t_a$.
Again, it follows from a union bound over all agents that all agents have left $I_{launch}$ with probability at least $1-n^{-36\cdot(c+4)+1}$ at time $t_a$.

Let now $t_b$ be the first interaction in which an agent enters the last minute of $I_{gather}$ and observe that $t_b > t_a$.
Then, \whp no agent is in $I_{launch}$ during the time interval $\intcc{t+t_a,t+t_b}$.
Therefore, agents cannot hop.
Thus, by definition of $t_b$, no agent can leave $I_{gather}$ before time $t_b$.
Agents that initiate an interaction must therefore either increase their counter by one or reset.

First we show that \whp $t_b > t'$.
From \cref{lem:moving} it follows with $\delta = \left(2\cdot\sqrt{|I_{work}|/\tau}\right)^{-1}$  that $X_u(t') < |I_{work}| + |I_{gather}|$ with probability at least $1-n^{-3(c+4)}$.
(Note that we use $(1+\delta)/(1-\delta) < 1/(1-2\cdot\delta)$ for $\delta < 0.5$ and $(|I_{launch}+|I_{work}|)/(1-2\cdot\delta) = |I_{work}| + \frac{w+5\cdot\sqrt{10+w}+16}{\sqrt{10+w}+1}\cdot \tau < |I_{work}| + |I_{gather}|$.)
Thus, $\Clock{u}(t') \leq \Clock{u}(0) + X_u(t') < |I_{launch}| -1 + |I_{work}| + |I_{gather}|$ (which is the last state of $I_{gather}$) with probability at least $1-n^{-3(c+4)}$.
By a union bound, this holds for all agents with probability at least $1-n^{-(3c+11)}$.

Next we show that \whp at time $t_b$ all agents have reached $I_{gather}$.
From \cref{lem:moving} it follows with $\delta = \left(2\cdot\sqrt{|I_{work}|/\tau}\right)^{-1}$ that $X_u(t') > |I_{launch}| + |I_{work}|$ with probability at least $1-n^{-9\cdot(c+4)/2}$.
Thus, $\Clock{u}(t') \geq \Clock{u}(0) + X_u(t') > 0 + |I_{launch}| + |I_{work}|$, with probability at least $1-n^{-9\cdot(c+4)/2}$.
By a union bound, this holds for all agents with probability at least $1-n^{-(17+9/2\cdot c)}$.

Together it follows that at time $t'$ no agent has left $I_{gather}$ but all agents have entered it with probability at least $1-n^{-(c+3)}$.
Therefore, $C_{t'}$ is a homogeneous gathering configuration.

\paragraph{Statement 2}
Recall that a synchronous configuration $C$ is defined as a configuration where $\max_{(u,v)} \set{d(u,v)} < |I_{launch}| + |I_{gather}|$.
As before, let $X_u(i)$ denote the number of interactions agent $u$ initiates before time $i$.
Now fix a time $t \leq t'$ and a pair of agents $(u, v)$ with $X_u(t) < X_v(t)$.
We use \cref{lem:moving} to bound the deviation of $X_u(t)$ and $X_v(t)$ at time $t$ as follows:
$\Prob{X_u(t) > t / n - \left|I_{gather}\right|/2} \geq 1 - n^{-6(c+4)}$ and
$\Prob{X_v(t) < t / n + \left|I_{gather}\right|/2} \geq 1 - n^{-4(c+4)}$.
Therefore, $|X_v(t) - X_u(t)| < |I_{gather}|$ with probability at least $1-n^{-4(c+4)}-n^{-6(c+4)}$.

Note that \cref{lem:moving} allows us to bound the deviation in the numbers of interactions initiated by agents $u$ and $v$.
However, this does not immediately give a bound on the difference of the clock counters $|\Clock v(t) - \Clock u(t)|$.
To bound the deviation of clock counters (by $|I_{launch}| + |I_{gather}|$), we therefore distinguish three cases.

First, assume that neither $u$ nor $v$ have reached $I_{gather}$ at time $t$.
Then $\Clock u(t) = \Clock u(0) + X_u(t)$ and $\Clock v(t) = \Clock v(t) + X_v(t)$.
Observe that by the assumption of the lemma, both $u$ and $v$ are in $I_{launch}$ at time $t = 0$ and thus $|\Clock v(0) - \Clock u(0)| < |I_{launch}|$.
Together with the above bound on $|X_v(t) - X_u(t)|$ we get $|\Clock v(t) - \Clock u(t)| < |I_{launch}| + |I_{gather}|$.

Secondly, assume that $u$ has not reached $I_{gather}$ but $v$ has reached $I_{gather}$ at time $t$.
Then $\Clock u(t) = \Clock u(0) + X_u(t)$.
For $\Clock v(t)$, however, it might have occurred that $v$ has reset in some interactions before time $t$.
Nevertheless, the clock counter of $v$ is bounded by the number of initiated interactions such that
$\Clock v(t) \leq \Clock v(t) + X_v(t)$.
(Note that $v$ can only increment its $\Clock v$ counter or reset its value; hopping is not possible since we have shown in the proof of the first statement that $I_{launch}$ is empty when the first agent enters $I_{gather}$.)
Therefore, we get again $|\Clock v(t) - \Clock u(t)| < |I_{launch}| + |I_{gather}|$.

Finally, assume that both $u$ and $v$ are in $I_{gather}$ at time $t$.
Then $|\Clock v(t) - \Clock u(t)| \leq |I_{gather}| < |I_{launch}| + |I_{gather}|$ is trivially true.

There are no further cases: in the proof of the first statement we have shown that all agents transition from a homogeneous launching configuration to a homogeneous gathering configuration during the time interval $[0, t']$.
The result now follows from a union bound over all $o(n^2)$ points in time $t \leq t'$ and all $n\cdot(n-1)$ pairs of agents.
\end{proof}

The following lemma is the main technical contribution of this section.
It establishes that \whp all agents transition from a homogeneous gathering configuration into a homogeneous launching configuration via a sequence of synchronous configurations.
Consider a homogeneous gathering configuration and recall that whenever an agent hops from $I_{gather}$ into $I_{launch}$ it adopts the state of the responder.
The main difficulty is to show that all agents hop into $I_{launch}$ before the first agent leaves $I_{launch}$.

\begin{lemma}
    \label{lem:gather->launch}
    Let $C_{t}$ be a homogeneous gathering configuration.
    Then with probability at least $1 - n^{-(c+1)}/2$ the following holds:
    \begin{enumerate}
        \item \label{lem:gather->launch:1} there exists a $t_0 = \LdauOmicron{n \cdot \sqrt{w} \cdot \log n}$ such that the first agent enters $I_{launch}$ at time $t + t_0$,
        \item \label{lem:gather->launch:2} there exists a $t' \leq \tau/4 \cdot n$ such that $C_{t + t_0+t'}$ is a homogeneous launching configuration,
        \item \label{lem:gather->launch:3} $\forall t'' \in \intcc{t,t+t'}: C_{t''}$ is synchronous.
    \end{enumerate} 
\end{lemma}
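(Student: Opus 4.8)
The plan is to set $t=0$ without loss of generality and to follow the two mechanisms of the clock in order: a \emph{marching stage}, in which all agents are still in $I_{gather}$ and merely step forward, ending when the first agent crosses the wrap-around into $I_{launch}$ (this defines $t_0$); and a \emph{hopping stage}, in which \cref{rule:3} pulls the remaining agents into $I_{launch}$. The organizing idea is that the hopping stage is, up to stochastic domination, a one-way epidemic with ``infected'' $=$ ``currently in $I_{launch}$'', so that \cref{lem:one-way-epidemic} governs how fast the whole population gathers in $I_{launch}$, while \cref{lem:moving} governs how far the front of the population can march meanwhile. The three claims of the lemma correspond, respectively, to timing the first crossing, to completing the epidemic before the front leaves $I_{launch}$, and to maintaining synchrony throughout.

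For the first claim I would note that, before any agent enters $I_{launch}$, both $I_{launch}$ and $I_{work}$ are empty, so neither \cref{rule:3} (hopping) nor \cref{rule:4} (resetting) can fire and every interacting agent only steps forward by \cref{rule:2}; motion is thus monotone and no agent is left behind. Consider an agent furthest advanced in $I_{gather}$ at time $0$: it needs at most $|I_{gather}|$ forward steps to reach the wrap-around, and by the lower-tail bound of \cref{lem:moving} over a window of length $T=\LdauTheta{n\cdot|I_{gather}|}$ it performs these steps within $T$ interactions \whp. Hence the first crossing occurs at some $t_0\le T$, and since $|I_{gather}|=(6+2\sqrt{10+w})\,\tau=\LdauOmicron{\sqrt w\cdot\log n}$ this gives $t_0=\LdauOmicron{n\cdot\sqrt w\cdot\log n}$.

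For the second claim I would observe that at time $t_0$ exactly one agent, the first to cross (call it $u^\ast$), lies in $I_{launch}$, at its first state, all others being in $I_{gather}$. Identifying ``infected'' with ``in $I_{launch}$'', a susceptible ($I_{gather}$) initiator meeting an infected ($I_{launch}$) responder hops by \cref{rule:3} and becomes infected, which is exactly the epidemic transition, while agents crossing the wrap-around on their own only enlarge the infected set. Thus the true in-$I_{launch}$ set dominates a one-way epidemic seeded at $u^\ast$, and \cref{lem:one-way-epidemic} yields a $t'\le\tau/4\cdot n$ after which every agent lies in $I_{launch}$ \whp. What makes this coupling legitimate, and what is the main difficulty, is that \emph{no agent may leave $I_{launch}$ for $I_{work}$ before the epidemic finishes}, so that an infected agent stays infected; this is the crux of the lemma and amounts to bounding the leading edge. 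I would track $M(s)$, the largest launch-position occupied at time $s$, and exploit two structural facts: a hop places the hopper on a responder whose position is at most $M(s)$, and a fresh crossing enters at position $0$, so neither a hop nor a crossing can increase $M$; hence $M$ grows only when an agent currently at the maximum steps forward. The delicate step is to show that this front advances only at rate close to $1/n$ --- equivalently, that the set of agents sitting exactly at the frontier stays thin --- which I would derive from the fact that a hop copies the state of a \emph{uniformly random} responder and thus lands on one of the few maximal agents only with small probability. Together with \cref{lem:moving} and a union bound over agents, this gives $M(s)<\tau=|I_{launch}|$ throughout $[t_0,t_0+\tau/4\cdot n]$ \whp. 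Hence no agent reaches $I_{work}$, the coupling is valid, and $C_{t_0+t'}$ is a homogeneous launching configuration.

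The third claim then follows almost for free once the event ``no agent enters $I_{work}$'' is secured. During the marching stage all agents lie in $I_{gather}$, so every pair has distance at most $|I_{gather}|-1<|I_{launch}|+|I_{gather}|$. During the hopping stage that same event confines every agent to the contiguous arc $I_{gather}\cup I_{launch}$ of length $|I_{launch}|+|I_{gather}|=(7+2\sqrt{10+w})\,\tau$; since its complementary arc $I_{work}$ is strictly longer, any two positions in the arc have circular distance strictly below $|I_{launch}|+|I_{gather}|$, so every intermediate configuration is synchronous. Synchrony thus holds for all $t''\in[t,t+t_0+t']$, conditioned on the high-probability events above. To finish I would collect the failure probabilities --- the crossing bound, the epidemic bound $n^{-(7+2c)}$ of \cref{lem:one-way-epidemic}, and the union of the per-agent leading-edge bounds --- and check that for $c\ge 6$ and sufficiently large $w$ they sum to at most $n^{-(c+1)}/2$. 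The genuinely hard part is the leading-edge bound of the second claim: one must win the race in which the epidemic completes within its $\tau/4\cdot n$ steps strictly before the front traverses the full $\tau$ minutes of $I_{launch}$, and it is the slack $\tau/4\ll\tau$ together with the thinness of the frontier that makes this race safe.
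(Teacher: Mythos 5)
Your decomposition matches the paper's: a marching stage ending when the first agent wraps into $I_{launch}$ (your Claim 1; the paper even gets $t_0 \leq n \cdot |I_{gather}|$ deterministically, since before the first crossing only the step-forward rule can fire and every interaction advances some counter, so your appeal to \cref{lem:moving} is sound but unnecessary), then a hopping stage dominated below by a one-way epidemic so that \cref{lem:one-way-epidemic} gives $t' \leq \tau/4 \cdot n$, and finally synchrony for free because $I_{gather} \cup I_{launch}$ has exactly the length $(7 + 2\sqrt{10+w})\cdot\tau$ appearing in the definition. However, there is a genuine gap at exactly the point you yourself flag as "the delicate step": the proof that no agent reaches $I_{work}$ within $[t_0, t_0 + \tau/4\cdot n]$. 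Your frontier argument --- track the maximal launch-position $M(s)$, note that hops and fresh crossings cannot increase $M$, and conclude that $M$ advances "at rate close to $1/n$" because the frontier "stays thin" --- is not a proof, and as stated the quantitative claim is wrong in spirit. Even for $n$ fully independent counters each advancing at rate $1/n$, the maximum over agents advances strictly faster than $t/n$: at these scales ($t/n = \tau/4 = \Theta(\log n)$) the record process exceeds the mean by $\Theta(\sqrt{(t/n)\log n}) = \Theta(\tau)$, a constant fraction of $\tau$. So "thin frontier $\Rightarrow$ rate $\approx 1/n$" cannot be the mechanism; what wins the race is a quantitative constant, and your sketch produces none. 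Moreover, a union bound of \cref{lem:moving} over agents does not control $M(s)$, because the identity of the frontier agent changes constantly: each agent's frontier-steps are bounded by its total steps ($\approx \tau/4$), but summing this over agents gives $n\tau/4$, which is vacuous. Finally, "hops rarely land on the few maximal agents" is circular (frontier thinness is what needs proving) and ignores frontier growth from agents at level $M-1$ stepping up.

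The paper resolves this with a per-agent stochastic domination rather than a record-process argument. Order the agents $u_1, u_2, \dots$ by their entry times into $I_{launch}$ and work in a simplified process with an absorbing state $\Stop$ at the end of $I_{launch}$ (this formalizes your event "no agent enters $I_{work}$" and legitimizes the coupling). One then proves by induction on $i$ that the counter $X_i(t)$ of the $i$'th entrant is majorized by $Z_i(t) \sim \Bin\bigl(t, \tfrac1n\bigl(1+\tfrac1{n-1}\bigr)^{i-1}\bigr)$: a hopper copies the counter of its target $u_j$, $j < i$, which is controlled by the induction hypothesis, and the subtle issue that conditioning on the entry order $S_i$ distorts the interaction distribution is handled by a counting bound on the reduced sample space, $|\Omega_{S_i}(t)| \geq (n-1)^2$, so each conditional increment probability inflates by at most $1 + 1/(n-1)$. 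Since $(1+1/(n-1))^{n-1} < e$, every counter is dominated by $\Bin(\tau n/4, e/n)$ with mean $e\tau/4 < \tau$, and Chernoff plus a union bound shows no agent reaches $\Stop$ with probability $1 - n^{-(c+3)}$. Note that the explicit constant $e$ is essential --- the race is won because $e/4 < 1$, not because the maximum advances at rate $1/n$. Your proposal would need to supply an argument of comparable strength (e.g., a bound on the time-integral of the frontier size, or an exponential-moment/potential argument), and nothing in the sketch does so; absent that, Claim 2 --- and with it the coupling underlying Claim 3 --- is unproven.
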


\begin{proof}
First we prove that \whp there exists a homogeneous launching configuration $C_{t'}$.

\paragraph{Statement 1}
Let $t_0$ be defined such that the first agent $u$ leaves $I_{gather}$ at time $t + t_0$.
Since $C_t$ is a homogeneous gathering configuration, $I_{launch}$ is empty at time $t$ and hence agent $u$ can only leave $I_{gather}$ by increasing its counter.
In every interaction before time $t+t_0$ some agent has to increase its state by one.
Thus $t_0 \leq n \cdot |I_{gather}| = \LdauOmicron{n \cdot \sqrt{w} \cdot \log n}$.

\paragraph{Statement 2}
We continue our analysis at time $t_0$ and again assume w.l.o.g.\ for the sake of brevity of notation that $t_0=0$.
Note that at that time exactly one agent is in state $0$ and all remaining agents are still in $I_{gather}$.
We show the following: there exists a time $\tilde{t}=\tau/4 \cdot n$ such that at time $\tilde{t}$ all agents are in $I_{launch}$ (Recall that $|I_{launch}| = \tau \cdot n$).
To do so we first define a simplified process with the same state space $Q$, however, we refer to the last state of $I_{launch}$ as $\Stop$.
Agents in $\Stop$ never change their state (which renders the states of $I_{work}$ unreachable).
The formal definition of the simplified process is as follows.
\begin{align*}
    & (q_1,q_2) \in (I_{launch} \setminus \set{\Stop}) \times Q\colon &
    & (q_1,q_2) \rightarrow (q_1+1, q_2) & \text{(step forward)}\\        
    & (q_1,q_2) \in I_{gather} \times I_{gather}\colon &
    & (q_1,q_2) \rightarrow (q_1+1, q_2) \bmod |Q| & \text{(step forward)}\\
    & (q_1,q_2) \in I_{gather} \times I_{launch}\colon &
    & (q_1,q_2) \rightarrow (q_2, q_2) & \text{(hopping)}\\
    & (q_1,q_2) \in \set{\Stop} \times Q\colon &
    & (q_1,q_2) \rightarrow (q_1, q_2) & \text{(stopping)}
\end{align*}
For this simplified process we show a lower bound: after $\tilde{t} = \LdauTheta{n \cdot \log n}$ interactions all agents are in $I_{launch}$.
Then we show (for the simplified process) an upper bound: in $C_{\tilde{t}}$ none of the agents are in state $\Stop$.
A simple coupling of the simplified process and the original process shows that under these circumstances none of the agents entered $I_{work}$ for our original process.
This finishes the proof with $t'=\tilde{t}$.

\medskip\noindent
\textit{Lower Bound.}\quad
In the simplified process agents can enter $I_{launch}$ either via hopping or by making enough steps forward on their own.
From \cref{lem:one-way-epidemic} it follows that all agents enter $I_{launch}$ after at most $\tilde{t}=\tau/4 \cdot n$ interactions with probability at least $1-n^{-(5+c)}$.
(For the upper bound, one can simply discard setting the clock counter to zero when an agent enters $I_{launch}$ by increasing its counter.)
Showing that none of the agents are in state $\Stop$ is much harder.
Due to the hopping the clock counters of agents in $I_{launch}$ are highly correlated.
Nevertheless, we can show that the clock counters of each agent can be majorized by independent binomially distributed random variables as follows.

\medskip\noindent
\textit{Upper Bound.}\quad
Let $u_i$ be the $i$'th agent that enters $I_{launch}$ and let $t_i$ be the time when $u_i$ enters $I_{launch}$.
Let furthermore $X_i(t)$ be a random variable for the clock counter of agent $u_i$ in $I_{launch}$ in the time interval $[0, t]$.
Formally, we define for a time step $t$ that $X_i(t) = 0$ if $u_i$ is in $I_{gather}$ and $X_i(t) = \Clock{u_i}(t)$ if $u_i$ is in $I_{launch}$.
We show by induction on $i$ that $X_i(t)$ is majorized by a random variable $Z_i(t)$ with binomial distribution $Z_i(t) \sim \BinDistr\left(t, 1/n\cdot(1 + 1/(n-1))^{i-1}\right)$.
Ultimately, our goal is to apply Chernoff bounds to $Z_i(\tilde{t})$ which shows that agent $u_i$ does not reach $\Stop$ \whp.
The statement for the simplified process then follows from a union bound over all agents \whp.

\medskip\noindent
\textit{Base Case.}\quad
For the base case we consider all agents that enter $I_{launch}$ on their own by incrementing their counters to $0$ (modulo $|Q|$) in $I_{gather}$.
Fix such an agent $u_i$.
It holds that $X_i(t)$ for $t \geq t_i$ has binomial distribution $X_i(t) \sim \BinDistr(t - t_i, 1/n)$.
Therefore, $X_i(t) \prec Z_i(t)$ as claimed.\footnote{The expression $X \prec Y$ means that the random variable $X$ is majorized by the random variable $Y$.}
(Intuitively, this means that the clock counter of any other agent $u_i$ with $i > 1$ that enters $I_{launch}$ at time $t_i > 0$ is majorized by the clock counter of an agent which enters $I_{launch}$ at time $t_1 = 0$ and increments its counter with probability $1/n$.)

\medskip\noindent
\textit{Induction Step.}\quad
For the induction step we now consider all agents that enter $I_{launch}$ by hopping onto some other agent in $I_{launch}$.
Fix such an agent $u_i$.
Let $S_{i}$ be the event that agent $u_i$ is the $i$'th agent that enters $I_{launch}$.
Let furthermore $t_i$ be the time when $u_i$ enters $I_{launch}$.
We condition on $S_{i}$ and observe that agent $u_i$ enters $I_{launch}$ by hopping onto some other agent $u_j \in \set{u_1, \dots, u_{i-1}}$.
Intuitively, we would now like to exploit the fact that the counter of agent $u_i$ is copied at time $t_i$ from agent $u_j$ such that $X_i(t_i) = X_j(t_i)$.
Unfortunately, we must be extremely careful here: conditioning on $S_{i}$ alters the probability space!
(For example, under $S_{i}$ the agent $u_i$ with $i \geq 3$ cannot initiate an interaction with agent $u_1$ before agent $u_2$ does, since $S_{i}$ rules out that $u_i$ enters $I_{launch}$ before agent $u_2$.)
We account for the modified probability space as follows.

Let $\Omega_{S_{i}}(t)$ be the probability space of possible interactions conditioned on $S_{i}$ at time $t \leq \tilde{t}$.
Without the conditioning on $S_i$, the probability space $\Omega(t)$ at time $t$ contains all (ordered) pairs of agents with $|\Omega(t)| = n \cdot (n-1)$.
When conditioning on $S_i$, the event $S_i$ rules out that agent $u_i$ interacts with any other agent $u_j \in I_{launch}$ before time $t_i$.
In particular, agent $u_i$ cannot interact with another agent $u_j$ with $j < i$ during the time interval $[t_j, t_i]$.
In order to give a lower bound on $|\Omega_{S_{i}}(t)|$, we exclude all $(n-1)$ interactions $(u_i, u_j)$ for $j \in [n]$ from $\Omega(t)$.
Hence $|\Omega_{S_{i}}(t)| \geq n \cdot(n-1) - (n - 1) = (n-1)^2$ for any time $t \leq t_i$.
(The probability space after time $t_i$ is not affected by conditioning on $S_i$, but the majorization holds nonetheless.)
We now consider the event $E_{\hat{t}}$ for $\hat{t} \leq t_i$ that the interaction at time $\hat{t}$ increments $X_j(t)$ by $1$ (recall that $u_j$ is the agent onto which $u_i$ hopped).
It then holds for the reduced probability space $\Omega_{S_{i}}$ that $\Prob{E_{\hat{t}} | S_{i}} \leq \Prob{E_{\hat{t}}} \cdot |\Omega(t)| / |\Omega_{S_{i}}(t)|$.
(Note that $\Omega_{S_{i}}$ is still a uniform probability space.)
We calculate
\[
\frac{ |\Omega(t)|}{|\Omega_{S_{i}}(t)|} = \frac{n\cdot(n-1)}{(n-1)^2} = 1 + \frac{1}{n-1}
\]
and get $\Prob{E_{\hat{t}} | S_i} \leq \Prob{E_{\hat{t}}} \cdot (1 + {1}/{(n-1)})$ for $\hat{t} \leq t_i$.
Therefore, we use the induction hypothesis (that describes $X_{j}(t_i)$) and get $X_{i}(t_i) \prec Z_{i}(t_i)$, where $Z_{i}(t_i) \sim \BinDistr\left(t_i, 1/n \cdot \left(1 + 1/(n-1)\right)^{i-1}\right)$.
Similarly, we define $E_{\hat{t}}$ for $\hat{t} \geq t_i$ to be the event that $u_i$ increments its counter in $I_{launch}$.
Observe that $\Prob{E_{\hat{t}}} \leq 1/n$ for $\hat{t} > t_i$.
It follows that $X_{i}(t) \prec Z_{i}(t)$ with distribution $Z_{i}(t) \sim \BinDistr(\tilde{t}, 1/n \cdot (1+1/(n-1))^{i-1})$ for $t \leq \tilde{t}$ as claimed.
This concludes the induction.

\medskip\noindent
\textit{Conclusions.}\quad
From the induction it follows that for each agent $u_i$ the clock counter $\Clock{u_i}(\tilde{t})$ at time $\tilde{t}$ is majorized by a random variable $Z(\tilde{t})$ with binomial distribution $Z(\tilde{t}) \sim \BinDistr(\tilde{t}, e/n)$.
(Note that we used the inequality $(1+1/(n-1))^{(n-1)} < e$.)
From \cref{thm:chernoff-bounds} it follows that $\Prob{Z(\tilde{t}) \geq \tau-1 } \leq n^{-(c+4)}$.
Finally, the proof for the simplified process follows from a union bound over all agents.

\medskip

It is now straightforward to couple the actual phase clock process with the simplified process.
Assume that we start both processes at time $0$ when exactly one agent is in state $0$.
In the simplified process no agent reaches state $\tau$ in $\tau/4\cdot n$ interactions with probability at least $1-n^{-(c+3)}$.
In this case, however, the simplified process and the actual phase clock process do not deviate and, in particular, no agent reaches the beginning of $I_{work}$ in $\tau/4\cdot n$ many interactions.
Thus, the configuration $C_{t'}$ is a homogeneous launching configuration with probability at least $1-n^{-(c+3)}$.

\paragraph{Statement 3}
By definition, all configurations where all agents are in $I_{gather} \cup I_{launch}$ are synchronous.
\end{proof}

\noindent We are now ready to put everything together and prove our first theorem.

\begin{proof}[Proof of \cref{thm:synchronous_phase_clock}]
The proof of \cref{thm:synchronous_phase_clock} follows readily from the main result of this section, \cref{pro:maintenance}.

Assume the configuration at time $t_1$ is a synchronous launching configuration.
Then from \cref{pro:maintenance} it follows \whp that after $t_2 = \Theta(n \cdot w \cdot \log n)$ interactions the configuration $C_{t_2}$ is again a homogeneous launching configuration, and all configurations in $[t_1, t_2]$ are synchronous.
From \cref{pro:maintenance:3} it follows that no agent receives a signal in a contiguous subinterval $[t_1', t_2'] \subset [t_1, t_2]$ of length $t_2' - t_1' = w \cdot \tau \cdot n$.
This shows that we have \whp the required \emph{overlap} according to the definition of synchronous $(\tau,w)$-phase clocks.

From \cref{lem:gather->launch} it follows \whp that all agents transition from a homogeneous gathering configuration into a homogeneous launching configuration within $\tau/4 \cdot n $ interactions.
Recall that whenever an agent crosses zero, it receives a signal.
Therefore, when all agents transition from a homogeneous gathering configuration into a homogeneous launching configuration via a sequence of synchronous configurations, all agents receive exactly one signal, and the time between two signals of two agents $(u, v)$ is \whp at most $\tau/4 \cdot n $.
This shows that we have \whp the required \emph{bursts} according to the definition of synchronous $(\tau,w)$-phase clocks.

Together, the counters of our clock implement a synchronous $(\tau,w)$-phase clock in $[t_1, t_2]$ with probability $n^{-c}$.
It follows from an inductive argument that the clock counters implement a synchronous $(\tau,w)$-phase clock during the $n^c$ interactions that follow time $t_1$ \whp.
\end{proof}

\section{Recovery: Proof of \cref{thm:loosely_stabilizing}}
\label{sec:recovery}

In this section we first show the following main result.
At the end of the section we show how \cref{thm:loosely_stabilizing} follows from this proposition.

\begin{proposition}[Recovery]
    \label{thm:recovery}
    Consider our $(\tau, w)$-phase clock with $n$ agents and sufficiently large $c$ and $w$. 
    Let $C_{t_1}$ be an arbitrary configuration.
    Then with probability at least $1-1/n$, there exists a $t_2 = \LdauOmicron{n \cdot w \cdot \log n}$ such that $C_{t_1+t_2}$ is a homogeneous launching configuration.
\end{proposition}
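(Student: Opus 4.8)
The goal is to show recovery: from an arbitrary configuration $C_{t_1}$, within $\LdauOmicron{n \cdot w \cdot \log n}$ interactions the clock reaches a homogeneous launching configuration \whp. Recall that Proposition~\ref{pro:maintenance} already establishes maintenance once we are in a homogeneous launching configuration, so it suffices to reach one such configuration from an arbitrary start. Following the intuition sketched after the state-transition rules, the plan is to argue in two phases. \emph{First}, I would show that within $\LdauOmicron{n \cdot w \cdot \log n}$ interactions there exists a time at which $I_{launch}$ is empty. \emph{Second}, once $I_{launch}$ is empty, I would show that after $\LdauOmicron{n \cdot w \cdot \log n}$ further interactions the population coalesces into a homogeneous gathering configuration, from which (by the wrap-around and \cref{lem:gather->launch}) a homogeneous launching configuration is reached.

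\textbf{Phase one: emptying $I_{launch}$.} The key observation is that an agent can only \emph{enter} $I_{launch}$ by crossing the wrap-around from $I_{gather}$, i.e.\ by receiving a signal. An agent leaves $I_{launch}$ by stepping forward (\cref{rule:1}), which happens whenever it initiates an interaction. I would argue that agents currently in $I_{launch}$ drain out quickly: by \cref{lem:moving}, within $\LdauOmicron{n \cdot |I_{launch}|} = \LdauOmicron{n \cdot \tau}$ interactions every agent that started in $I_{launch}$ has initiated enough interactions to leave $I_{launch}$ \whp. The subtlety is that fresh agents may keep entering $I_{launch}$ from $I_{gather}$; but each such entry is a signal, and I would bound the rate at which this can happen. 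Because $I_{gather}$ is long ($6+2\sqrt{10+w}$ hours) and agents in $I_{gather}$ that meet an agent in $I_{work}$ \emph{reset} (\cref{rule:4}), the supply of agents poised to cross into $I_{launch}$ is controlled. I expect to conclude that there is a time $t_a = \LdauOmicron{n \cdot w \cdot \log n}$ at which $I_{launch}$ is simultaneously empty.

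\textbf{Phase two: gathering.} Once $I_{launch}$ is empty, hopping (\cref{rule:3}) is impossible, so an agent in $I_{gather}$ can only step forward within $I_{gather}$ or remain there, and any agent outside $I_{gather}$ either steps forward toward $I_{gather}$ or, if in $I_{gather}\times I_{work}$, an agent resets back into $I_{gather}$. The strategy is to track agents still in $I_{launch}\cup I_{work}$: each such agent advances by one whenever it initiates an interaction (\cref{lem:moving} again gives a lower bound on its steps), so within $\LdauOmicron{n \cdot |I_{launch}| + n \cdot |I_{work}|} = \LdauOmicron{n \cdot w \cdot \log n}$ interactions all of them reach $I_{gather}$ \whp, and no agent can re-enter $I_{launch}$ before the first agent crosses the wrap-around (since $I_{launch}$ stays empty until then and resets push stragglers back). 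This yields a homogeneous gathering configuration; applying \cref{lem:gather->launch} then delivers a homogeneous launching configuration within $\tau/4 \cdot n$ additional interactions \whp.

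\textbf{Main obstacle.} The hardest part is Phase one: establishing that $I_{launch}$ becomes \emph{simultaneously} empty rather than merely having a low occupancy at each instant. Because agents trickle into $I_{launch}$ from $I_{gather}$ throughout the interval, I cannot simply empty it once and for all; I must show that the \emph{influx} of new signals is slow enough relative to the \emph{drain rate} that an empty instant is reached. The reset rule is the crucial mechanism here — it prevents a large cohort from marching through $I_{gather}$ into $I_{launch}$ in lockstep — and quantifying this interplay, rather than the more routine stepping-forward arguments of Phase two, is where the technical weight of the proof lies. I would also take care that the error probabilities across the two phases and the union bound over the $\LdauOmicron{n \cdot w \cdot \log n}$ relevant time steps combine to at most $1/n$, which is a weaker guarantee than the $\whp$ bounds in \cref{sec:maintenance} and is consistent with the weaker $1-1/n$ claimed in the proposition.
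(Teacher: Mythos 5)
Your high-level decomposition---first empty $I_{launch}$, then coalesce in $I_{gather}$, then invoke \cref{lem:gather->launch}---is the same route the paper takes (it formalizes your two phases as \cref{lem:any->emptylaunch} and \cref{lem:recovery:gather->launch}), but both of your phases contain steps that fail as stated. In phase one, the claim that an agent can only enter $I_{launch}$ by crossing the wrap-around is wrong: hopping (\cref{rule:3}) moves agents from $I_{gather}$ into $I_{launch}$, and your drain bound via \cref{lem:moving} covers only agents that start in $I_{launch}$. The paper handles hopping arrivals with an inductive argument you are missing: a hopping agent adopts the exact clock value of its target, so it drains out of $I_{launch}$ together with the agent it hopped onto; hence once no agent crosses zero on its own, state $0$ empties, then state $1$, and so on. More seriously, your proposed influx bound (``resets control the supply of agents poised to cross'') cannot work when most agents sit in $I_{gather}$: by \cref{rule:2} two agents in $I_{gather}$ simply step forward, so in a configuration with nearly all agents in $I_{gather}$ there is almost nothing to reset against and wrap-around crossings are frequent, not rare. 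The paper's actual mechanism is a dichotomy over stages of $2\tau n$ interactions between the predicates ``fewer than $0.9n$ agents in $I_{gather}$'' and ``no agent crosses zero on its own'': if the occupancy of $I_{gather}$ is low, resets and hops trap every agent before the end of $I_{gather}$, so no crossing occurs; if the occupancy is high and a crossing nevertheless occurs, that crossing seeds a one-way epidemic of hops that evacuates at least $0.1n$ agents from $I_{gather}$, contradicting sustained high occupancy. This case analysis, not a rate comparison, is what produces an interval during which $I_{launch}$ is simultaneously empty.

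In phase two you assert that once $I_{launch}$ is empty ``no agent can re-enter $I_{launch}$ before the first agent crosses the wrap-around \dots since resets push stragglers back,'' and you conclude that a homogeneous gathering configuration is reached directly. This is exactly the step that does not hold: an agent near the end of $I_{gather}$ steps forward against the $0.9n$-strong bulk inside $I_{gather}$ (\cref{rule:2} again) and will typically cross zero long before the last agent has gathered. The paper therefore introduces the intermediate notion of an \emph{almost homogeneous gathering configuration} and proves \cref{lem:recovery:gather->launch} by a case analysis: either no premature crossing occurs before the last agent enters $I_{gather}$ (then you are done by definition), or the crossing triggers the bulk to hop into $I_{launch}$ and march around the clock as in the maintenance analysis, while the at most $0.1n$ stragglers gain only a head start of roughly $\tau$ states into $I_{gather}$ and are then trapped there by resets---which are effective precisely now, because $0.9n$ agents are outside $I_{gather}$---until the bulk returns. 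Without the hopping-inheritance induction, the stage dichotomy, and the premature-crossing/head-start argument, both of your phases are statements of the desired conclusions rather than proofs; your ``main obstacle'' paragraph correctly locates the difficulty, but the reset-based drain-versus-influx argument you propose to resolve it is the part that breaks.
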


We say a configuration is an \emph{almost homogeneous gathering configuration} if no agent is in $I_{launch}$ and at least $0.9\cdot n$ many agents are in $I_{gather}$.
We start our analysis by showing that within $t=\LdauOmicron{n \cdot w \cdot \log n}$ interactions, we reach an almost homogeneous gathering configuration $C_{t_1+t}$.

\begin{lemma}[name=,restate=lemmaRecoveryOne,label=lem:any->emptylaunch]
    Let $C_{t}$ be an arbitrary configuration.
    Then with probability at least $1-1/(3n)$, there exists a $t' = \LdauOmicron{n \cdot w \cdot \log n}$ such that $C_{t+t'}$ is an almost homogeneous gathering configuration.
\end{lemma}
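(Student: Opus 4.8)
The plan is to follow the two-stage strategy sketched before the statement: first drive the population to a time $t_A$ at which $I_{launch}$ is empty, and then use the reset rule together with the length of $I_{work}$ to \emph{funnel} the agents into $I_{gather}$ and \emph{hold} them there until at least $0.9n$ of them have arrived. Throughout I work inside a window of length $\LdauOmicron{n\cdot w\cdot \log n}$ and rely on \cref{lem:moving} to turn statements about numbers of initiated interactions into statements about clock positions: within $\LdauOmicron{\tau n}$ steps every agent initiates at least $\tau=|I_{launch}|$ interactions \whp, within $\LdauOmicron{|I_{gather}|\cdot n}$ steps at least $|I_{gather}|$, and within $\LdauOmicron{|Q|\cdot n}=\LdauOmicron{n w\log n}$ steps at least $|I_{launch}|+|I_{work}|$. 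In particular no agent completes a full lap of the clock in fewer than $\LdauOmega{n w\log n}$ steps \whp, which I use repeatedly to rule out agents ``coming around'' a second time inside the window.

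For the first stage I would argue that $I_{launch}$ becomes empty within $\LdauOmicron{n w\log n}$ steps. Any agent currently in $I_{launch}$ leaves it (into $I_{work}$) after at most $|I_{launch}|=\tau$ forward steps, hence within $\LdauOmicron{\tau n}$ steps \whp. The only ways to re-enter $I_{launch}$ are crossing the wrap-around from $I_{gather}$ or hopping (\cref{rule:3}); since hopping requires a responder already in $I_{launch}$, the sole genuine source of influx is agents reaching the last minute of $I_{gather}$ and stepping forward. Because no agent laps the clock in the window, the agents that can reach that boundary quickly are exactly those currently in $I_{gather}$, and they do so within $\LdauOmicron{|I_{gather}|\cdot n}$ steps; after this initial wave crosses and clears into $I_{work}$, the next candidates must first traverse $I_{work}$, which takes $\LdauOmega{n w\log n}$ steps. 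This separation of time scales yields a moment $t_A$ at which $I_{launch}$ is empty.

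For the second stage, from $t_A$ on, hopping is impossible as long as $I_{launch}$ stays empty, so the only escape from $I_{gather}$ is crossing the wrap-around. Meanwhile every agent in $I_{work}$ marches into $I_{gather}$ within $\LdauOmicron{n w\log n}$ steps, and --- this is the key point --- the reset rule (\cref{rule:4}) keeps the arriving agents from escaping: an agent crosses into $I_{launch}$ only via a \emph{reset-free run} of length $|I_{gather}|$ (each initiated interaction meeting an $I_{gather}$ responder, \cref{rule:2}, and never an $I_{work}$ responder), and agents entering $I_{gather}$ from $I_{work}$ or after a reset start at the very beginning of $I_{gather}$, so they require the full run. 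While at least $0.1n$ agents are in $I_{work}$, each such interaction is a reset with probability at least $0.1-\LdauOmicron{1/n}$, so a reset-free run of length $|I_{gather}|=\Theta(\sqrt{w}\log n)$ has probability at most $n^{-\Theta(\sqrt{w})}$, which is below $n^{-K}$ for any desired constant $K$ once $w$ is large enough; a union bound over the $\LdauOmicron{n w\log n}$ initiated interactions in the window shows that no agent crosses while $I_{work}$ carries more than $0.1n$ agents \whp. Consequently the count in $I_{gather}$ increases until, at the first time $t'$ with at most $0.1n$ agents in $I_{work}$, at least $0.9n$ agents are in $I_{gather}$ and (no crossing having occurred) $I_{launch}$ is empty; this $t'$ lies within $\LdauOmicron{n w\log n}$ steps, so $C_{t+t'}$ is an almost homogeneous gathering configuration.

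The step I expect to be the main obstacle is reconciling these two stages at a common time, i.e.\ the simultaneity of ``$I_{launch}$ empty'' and ``$\geq 0.9n$ in $I_{gather}$''. Agents that happen to lie close to the wrap-around when $I_{launch}$ first empties genuinely do cross and repopulate $I_{launch}$ for a short transient, so one cannot simply claim that $I_{launch}$ stays empty after $t_A$; one must instead show that this transient clears within $\LdauOmicron{\sqrt{w}\, n\log n}$ steps, that the reset-suppression then holds while $I_{work}$ is still populated, and that the draining of $I_{work}$ and the emptiness of $I_{launch}$ can be certified at one and the same time step. Making this robust requires tracking the shrinking $I_{work}$-population (so that the reset probability $|I_{work}|/n$ is a moving target) and choosing the threshold $0.1n$ and the window constant so that the $n^{-\Theta(\sqrt{w})}$ per-run bound still dominates the union bound over $\LdauOmicron{n w\log n}$ interactions; the resulting failure probabilities are each $n^{-\LdauOmega{1}}$ and sum to at most $1/(3n)$ for sufficiently large $c$ and $w$.
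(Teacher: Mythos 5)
Your Stage 2 (the reset trap: while $I_{work}$ holds $\geq 0.1n$ agents, a wrap-around crossing requires a reset-free run through $I_{gather}$, which has probability $n^{-\Theta(\sqrt{w})}$ per agent, so agents funnel into $I_{gather}$ and the $0.9n$ threshold is certified exactly when $I_{work}$ drains) is essentially the paper's argument, which it organizes via stages of $2\tau n$ interactions and the predicates ``fewer than $0.9n$ in $I_{gather}$'' and ``no broadcast.'' The genuine gap is in your Stage 1. Your claim that the agents currently in $I_{gather}$ all reach the wrap-around within $O(|I_{gather}|\cdot n)$ steps, after which influx into $I_{launch}$ ceases by ``separation of time scales,'' is false in general: agents in $I_{gather}$ are delayed by resets (\cref{rule:4}) and diverted by hops (\cref{rule:3}), so crossings from $I_{gather}$ can occur at arbitrary later times in the window. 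Worse, there is a regime your argument does not touch at all: an initial configuration with nearly all agents spread across $I_{gather}$. There, crossings trickle in roughly every $\Theta(\sqrt{w}\log n)$ interactions, each crosser spends $\Theta(\tau n)$ time in $I_{launch}$, and hoppers continually recycle onto the newly arrived crossers, so $I_{launch}$ stays populated indefinitely and no moment $t_A$ with empty $I_{launch}$ exists on your claimed timescale. You flag the ``simultaneity'' obstacle yourself, but what you propose (showing the transient clears quickly) is not the missing idea — in this regime the ``transient'' \emph{is} the dynamics.

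The paper resolves precisely this case with an avalanche argument that your proposal lacks: it assumes, for contradiction, that $\geq 0.9n$ agents remain in $I_{gather}$ across several consecutive stages while a broadcast occurs, fixes $0.7n$ relevant agents, and shows via the one-way epidemic (\cref{lem:one-way-epidemic}, reusing the hopping machinery and the binomial majorization from \cref{lem:gather->launch}) that within $\tau n/4$ interactions all relevant agents hop into $I_{launch}$ — contradicting the assumption that $I_{gather}$ stays heavily populated. Hence either broadcasts stop, or $I_{gather}$ drains below $0.9n$, and in the latter case the reset trap (your Stage 2 mechanism, the paper's implication ``few in gather $\Rightarrow$ no broadcast in the next two stages,'' using $|I_{gather}| > 12\tau$ against the at most $12\tau$ interactions an agent initiates in three stages) takes over; an induction over hoppers then shows $I_{launch}$ drains completely. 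Without this epidemic contradiction, or a substitute for it, your proof cannot establish the existence of $t_A$ from an arbitrary configuration, and the lemma's conclusion does not follow.
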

\begin{proof}[Proof Sketch]
    The main idea of the proof is as follows.
    If there are not too many agents in $I_{gather}$, the reset rule prevents agents from reaching the end of $I_{gather}$.
    Agents may still enter $I_{launch}$ by hopping, but if no agent enters state $0$, eventually there is no agent left in state $0$ to hop on.
    Then the same argument applies to state $1$, and so on.
    Eventually, there are no agents left in $I_{launch}$ to hop onto.
    This means the agents are \emph{trapped} in $I_{gather}$ until a sufficiently large number of agents enters $I_{gather}$ which renders resetting quite unlikely again.
    The resulting configuration is what we call an almost homogeneous gathering configuration.
\end{proof}

Next, we show that from an almost homogeneous gathering configuration we reach a homogeneous gathering configuration in $\LdauOmicron{n \cdot w \cdot \log n}$ interactions. From \cref{lem:gather->launch} in \cref{sec:maintenance} it then follows that we reach a homogeneous launching configuration in an additional number of $O(n \cdot \log n)$ interactions.

\begin{lemma}[name=,restate=lemmaRecoveryTwo,label=lem:recovery:gather->launch]
    Let $C_{t}$ be an almost homogeneous gathering configuration.
    Then with probability at least $1-1/(3n)$, there exists a $t' = \LdauTheta{n \cdot w \cdot \ln n}$ such that $C_{t+t'}$ is a homogeneous gathering configuration.
\end{lemma}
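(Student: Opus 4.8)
The plan is to exhibit a single time $t+t'$ with $t'=\LdauTheta{n\cdot w\cdot\log n}$ at which all $n$ agents sit in $I_{gather}$ simultaneously. Two ingredients are needed: first, every agent that is initially in $I_{work}$ advances into $I_{gather}$ (\emph{clearing}); and second, while this happens no agent crosses the wrap-around from the end of $I_{gather}$ into $I_{launch}$ (\emph{no premature escape}), so that at the moment clearing completes $I_{launch}$ is still empty and hence every agent lies in $I_{gather}$. A useful invariant, maintained as long as $I_{launch}$ is empty, is that the number $m(\cdot)$ of agents in $I_{work}$ is non-increasing: by \cref{rule:1} a work agent only steps forward and eventually enters $I_{gather}$; by \cref{rule:2,rule:4} an agent in $I_{gather}$ either steps forward or resets to the start of $I_{gather}$, so it never re-enters $I_{work}$; and \cref{rule:3}, the only rule that sends an agent back towards $I_{work}$, is disabled while $I_{launch}$ is empty.

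The clearing bound is routine and mirrors \cref{lem:launch->gather}: an agent in $I_{work}$ increments its counter on every initiated interaction (\cref{rule:1}), so it reaches $I_{gather}$ after $|I_{work}|=\LdauTheta{w\cdot\log n}$ initiations. By \cref{lem:moving} with a constant $\delta$, each of the at most $0.1\cdot n$ work agents makes that many initiations within $\LdauTheta{n\cdot w\cdot\log n}$ interactions \whp, and a union bound over the work agents fixes the order of $t'$.

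The heart of the argument is the no-escape part, and here I would use the majorization technique already developed for the upper bound in \cref{lem:gather->launch}. While $I_{launch}$ is empty, an agent $u$ in $I_{gather}$ that initiates an interaction steps forward when the responder is in $I_{gather}$ and resets to the start of $I_{gather}$ (\cref{rule:4}) when the responder is in $I_{work}$; hence $u$ reaches the wrap-around only by accumulating $|I_{gather}|=\LdauTheta{\sqrt{w}\cdot\log n}$ forward steps with no intervening reset. I would majorize the displacement of $u$ inside $I_{gather}$ (measured from the start, and zeroed on each reset) by a binomial variable counting forward initiations, where every initiation of $u$ is a reset with probability $m(\cdot)/(n-1)$. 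As long as $m(\cdot)$ stays above a threshold $m^{\ast}=\LdauTheta{n/\sqrt{w}}$, the probability that this displacement ever attains $|I_{gather}|$ is at most $\left(1-m^{\ast}/(n-1)\right)^{|I_{gather}|}\le e^{-\Omega\left(m^{\ast}\cdot|I_{gather}|/n\right)}=n^{-\Omega(c)}$; a union bound over all $n$ agents and all $\LdauOmicron{n^2}$ relevant times keeps $I_{launch}$ empty and, in particular, validates the non-increasing-$m$ invariant. The same frequent resets also \emph{bunch} the agents in $I_{gather}$: since both a reset and an entry from $I_{work}$ deposit an agent at the start of $I_{gather}$, while $m(\cdot)\ge m^{\ast}$ no agent strays more than a small fraction of $|I_{gather}|$ from the start.

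I expect the main obstacle to be the final regime in which $m(\cdot)$ drops below $m^{\ast}$: there resets become too rare to hold the gathering agents back, so one must win a race between clearing and escaping. Two facts make this plausible but delicate. On the one hand, a counting argument on their advancement shows the $\LdauOmicron{m^{\ast}}$ work agents surviving until $m(\cdot)<m^{\ast}$ lie within distance $\LdauOmicron{|I_{gather}|}$ of $I_{gather}$, so the residual clearing window has length $\LdauOmicron{n\cdot|I_{gather}|}$. On the other hand, the bunched cluster built up during the high-$m$ phase still needs a run of $\Omega(|I_{gather}|)$ forward steps, i.e.\ $\Omega(n\cdot|I_{gather}|)$ interactions, to reach the wrap-around. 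Making clearing provably win therefore comes down to choosing the constants in $|I_{gather}|$ and $|I_{work}|$ (and hence the admissible $w$) so that the cluster is still inside $I_{gather}$ when the last work agent arrives; at that instant all $n$ agents are in $I_{gather}$, giving the claimed homogeneous gathering configuration. I would additionally need that the agents of $C_{t}$ are already suitably bunched near the start of $I_{gather}$ (so that none escapes before the first resets take effect), which I would inherit from the way an almost homogeneous gathering configuration is produced in \cref{lem:any->emptylaunch}. Controlling this low-$m$ race, together with the spread of the cluster, is where essentially all the technical work lies; the clearing and high-$m$ no-escape steps follow from \cref{lem:moving} and the majorization already in place for \cref{lem:gather->launch}.
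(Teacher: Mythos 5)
Your strategy---keep $I_{launch}$ empty and contain all agents in $I_{gather}$ until the work agents have cleared---cannot be carried through for an \emph{arbitrary} almost homogeneous gathering configuration, and this is precisely the hard case. The definition constrains neither where the $\geq 0.9\cdot n$ gathering agents sit inside $I_{gather}$ (they may all occupy its last state, in which case some agent steps into state $0$ within a handful of interactions, with probability far too large to union-bound away) nor how many work agents there are or where they are. For instance, $C_t$ may contain only $\Theta(n/\sqrt{w})$ work agents, all at the first state of $I_{work}$: then $m(\cdot)$ is below your threshold $m^{\ast}$ from the outset, the residual clearing window has length $\Theta(n\cdot w\cdot\log n) \gg n\cdot|I_{gather}|$, and with reset probability only $O(1/\sqrt{w})$ per initiation each gathering agent attempts $\Theta(\sqrt{w}\cdot\log n)$ independent runs at the wrap-around, so the expected number of escapes over all $n$ agents is polynomially large. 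This also refutes your counting claim that the surviving $O(m^{\ast})$ work agents lie within distance $O(|I_{gather}|)$ of $I_{gather}$ when $m(\cdot)$ first drops below $m^{\ast}$. Your proposed patch---inheriting bunching near the start of $I_{gather}$ from \cref{lem:any->emptylaunch}---both strengthens the hypothesis of the lemma (breaking the modular use of the two recovery lemmas in \cref{thm:recovery}) and does not address this second counterexample. You correctly flag the low-$m$ race as ``where essentially all the technical work lies,'' but no argument is given, and in fact no choice of constants in $|I_{gather}|$ versus $|I_{work}|$ can win it, since $|I_{work}|/|I_{gather}| = \Theta(\sqrt{w})$ grows with $w$.

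The paper resolves the difficulty in the opposite way: it does \emph{not} try to prevent escape. It case-splits on the first time $t+t''$ at which some agent reaches state $0$ on its own. If $t''$ exceeds the clearing time $t'$, all agents were in $I_{gather}$ at some earlier moment and we are done (this part matches your clearing bound). Otherwise, the single escapee triggers, via the hopping \cref{rule:3}, a one-way epidemic that pulls all $\geq 0.9\cdot n - 1$ gathered agents into $I_{launch}$ within $\tau/4\cdot n$ interactions (\cref{lem:one-way-epidemic}), and the binomial-majorization upper bound you cite from \cref{lem:gather->launch} is reused exactly there---to show the hopping agents do not leave $I_{launch}$, not to keep agents inside $I_{gather}$. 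This re-synchronized bulk then marches through $I_{work}$ together as in \cref{lem:launch->gather}, while the at most $0.1\cdot n$ stragglers, now facing a work population of size at least $0.9\cdot n$, reset with probability at least $0.9$ per initiated interaction and are trapped within the first $\tau$ states of $I_{gather}$ until the bulk arrives, yielding the homogeneous gathering configuration. The key idea you are missing is that \cref{rule:3} is the recovery mechanism, not an obstacle: one escape re-synchronizes the whole population through $I_{launch}$ and reduces the problem to the maintenance analysis.
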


\begin{proof}[Proof Sketch]

If $C_{t}$ is an almost homogeneous gathering configuration, then there are no agents in $I_{launch}$ and at least $0.9\cdot n$ many agents in $I_{gather}$.
Thus, agents cannot hop until an agent enters $I_{launch}$ on its own.
Now there are two cases.
If no agent enters $I_{launch}$ on its own before the last agent enters $I_{gather}$, we are done: this is by definition of a homogeneous gathering configuration.
Otherwise, we will show that a large fraction of agents leave $I_{gather}$ together.
This large fraction behaves similar as in the proof of the maintenance.
The remaining agents have a small \emph{head start} but then they are again \emph{trapped} in $I_{gather}$ until the bulk of agents arrives.
Once the bulk of agents enters $I_{gather}$ we have reached a homogeneous gathering configuration and all agents start to run through the clock synchronously.
\end{proof}

\noindent We are now ready to put everything together and show our second main theorem.

\begin{proof}[Proof of \cref{thm:loosely_stabilizing}]
The proof of \cref{thm:loosely_stabilizing} follows readily from the main result of this section, \cref{thm:recovery}.
Observe that $\tau = \Theta(\log n)$.
According to \cref{thm:recovery}, our clock recovers to a homogeneous launching configuration in $O(n\cdot \log n)$ interactions.
By \cref{thm:synchronous_phase_clock}, this marks the beginning of a time interval in which the agents implement a synchronous $(\tau,w)$-phase clock.
It follows immediately from \cref{thm:synchronous_phase_clock} that this interval has length $n^c$.
Together, this implies that our $(\tau, w)$-phase clock is a $(O(n\cdot log n), \Omega(\poly n))$-loosely-stabilizing $(\Theta(\log n), w)$-phase clock.
\end{proof}

\section{Adaptive Majority Problem}
\label{sec:majority}

\let\Work\Opinion
\def\IN{\text{\textsc{in}}}

In this section we consider the \emph{adaptive majority problem}
introduced in \cite{DBLP:conf/podc/Alistarh0U21} under the name \emph{robust comparison} and defined as follows.
At any time, every agent has as input either an \emph{opinion} ($A$ or $B$) or it has no input, in which case we say it is \emph{undecided} ($U$).
During the execution of the protocol, the inputs to agents can change.
In the adaptive majority problem, the goal is that all agents output the opinion which is dominant among all inputs.
In this setting we present a \emph{loosely-stabilizing} protocol that solves the adaptive majority problem.

Recall that the performance of a loosely-stabilizing protocol is measured in terms of the \emph{convergence time} and the \emph{holding time}.
Note that the loose-stabilization comes from an application of our phase clock (see \cref{sec:clock_algorithm}).
The phase clocks guarantee synchronized phases for polynomial time. 
During this time we say a configuration $C$ is \emph{correct} w.r.t.\ the adaptive majority problem if the following conditions hold.
Suppose there is a sufficiently large bias towards one opinion. Then
every agent in a correct configuration outputs the majority opinion.
Otherwise, if there is no sufficiently large bias, we consider any output of the agents as correct.
In this setting, we show the following result:
We show that a $(\LdauOmicron{\log n}, \poly n)$-loosely-stabilizing algorithm exists that solves adaptive majority, using $\LdauOmicron{\log n}$ states per agent.

\subsection{Our Protocol}

Our protocol is based on the $(\tau, w)$-phase clock defined in \cref{sec:clock_algorithm} with $w = 566$. 
In addition to the states required by the clock, every agent $v$ has three variables $\Input v$, $\Opinion v$, and output $\Output v$.
The variable $\Input v$ always reflects the current input to the agent, $\Opinion v$ holds the current opinion of agent $v$, and $\Output v$ defines the current output value of agent $v$.
All three variables take on values in $\set{A, B, U}$, where $A$ and $B$ stand for the corresponding opinions and $U$ stands for \emph{undecided}.

We use the $(\tau,w)$-phase clock to synchronize the agents.
Then it follows from \cref{pro:maintenance} that all configurations are synchronous \whp.
Observe that in a synchronous configuration for our choice of parameters the clock counters of agents do not deviate by more than $\Delta = 55 \cdot \tau$.
This allows us to define three \emph{subphases} of $I_{work}$, where agents execute three different protocols, as follows.
We split the working interval $I_{work}$ into six contiguous subintervals of equal length.
The clock counters $\Clock u$ allows us to define a simple interface to the phase clock for each agent $u$ as follows.
The variable $\Subphase u$ for each agent $u$ is then defined as follows.
We set $\Subphase u = 1$ if $\Clock u$ is in the first subinterval of $I_{work}$,
$\Subphase u = 2$ if $\Clock u$ is in the third subinterval of $I_{work}$, and
$\Subphase u = 3$ if $\Clock u$ is in the fifth subinterval of $I_{work}$.
Otherwise, $\Subphase u = \bot$.
The clock now assures a clean separation into these subphases such that no two agents perform a different protocol at any time \whp.
Additionally, we will show the overlap within each subphase is long enough such that the subprotocols for the corresponding subphases succeed \whp.

On an intuitive level, our protocol works as follows.
At the beginning of the phase, the input is copied to the opinion variable.
In the first protocol, the support of opinions $A$ and $B$ is amplified until no undecided agents are left.
We call this the Pólya Subphase.
In the second protocol, agents with opposite opinions cancel each other out, becoming undecided.
We call this the Cancellation Subphase.
Finally, in the third protocol the single remaining opinion is amplified again.
We call this the Broadcasting Subphase.
The resulting opinion is copied to the output variable after the working interval $I_{work}$.
Formally, our protocol is specified in \cref{alg:adaptive-majority}.

\begin{algorithm}[ht]

update $\Clock u$ according to \cref{rule:1,rule:2,rule:3,rule:4} with $w=566$\;
\If{Agent $u$ receives a signal}{
$\Work u \gets \Input u$\;
}
\If{$\Subphase  u = 1 \land \Work u = U$}{
$\Work u \gets \Work v$\;
}
\If{$\Subphase  u = 2 \land \Work u \neq \Work v \land \Work u, \Work v \neq U$}{
$\Work u, \Work v \gets U$\;
}
\If{$\Subphase u = 3 \land \Work u = U$}{
$\Work u \gets \Work v$\;
}
\If{$\Clock u \geq |I_{launch}| + |I_{work}|$}{
$\Output u \gets \Work u$\;
}
\caption{Interaction of agents $(u,v)$ in the adaptive majority protocol.}
\label{alg:adaptive-majority}
\end{algorithm}

\paragraph{Result and Notation}
In the remainder of this section, we let $A_t$ and $B_t$ denote the number of agents $u$ with $\Opinion u = A$ and $\Opinion u = B$, respectively, at time $t$. Analogously, we let $A_t^{\IN}$ and $B_t^{\IN}$ denote the number of agents $u$ with $\Input u = A$ and $\Input u = B$, respectively, at time $t$.
We now state our main result for this section, where we assume w.l.o.g.\ that $A$ is the majority and $B$ is the minority opinion.

\begin{theorem}
\label{thm:adaptive-majority}

\cref{alg:adaptive-majority} is a $(\LdauOmicron{n\log n}, \LDAUOmega{\poly{n}})$-loosely stabilizing adaptive majority protocol. 
\end{theorem}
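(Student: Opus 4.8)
The plan is to leverage the loosely-stabilizing phase clock from \cref{thm:loosely_stabilizing} as a black box and to argue correctness of \cref{alg:adaptive-majority} within a single synchronized phase. First I would invoke \cref{thm:loosely_stabilizing}: within $O(n\log n)$ interactions the clock recovers to a homogeneous launching configuration, and by \cref{thm:synchronous_phase_clock} it then stays synchronous for $n^c$ interactions \whp. This immediately provides the convergence time of $O(n\log n)$ and the holding time of $\poly{n}$ required by the definition of a loosely-stabilizing protocol. The remaining work is therefore to show that, \emph{conditioned on} the clock being synchronous, each phase computes the correct majority and writes it to the output variable, so that all output configurations during the synchronized interval are correct.

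The core of the argument is to analyze the three subphases of a single phase in sequence, using the fact that the clock guarantees a clean separation (no two agents execute different subprotocols simultaneously \whp) and a sufficiently long overlap within each subphase. At the start of the phase every agent copies its input into $\Opinion u$ (via the signal), so I would fix the snapshot $A_t, B_t$ of opinions at phase start and track how they evolve. In the Pólya Subphase undecided agents adopt a neighbor's opinion; I would argue that this amplification eliminates all undecided agents while preserving the sign of the bias \whp, invoking a Pólya-urn / one-way-epidemic style concentration bound (the overlap length $w\cdot\tau\cdot n$ with $w=566$ being chosen precisely so this succeeds). The Cancellation Subphase is the classical two-opinion cancellation of \cite{DBLP:journals/dc/AngluinAE08a}: opposing opinions annihilate in pairs, so the difference $A-B$ is invariant and after enough interactions only the majority opinion (plus undecided) survives \whp, provided the bias is large enough. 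Finally the Broadcasting Subphase is a one-way epidemic (\cref{lem:one-way-epidemic}) that floods the surviving majority opinion to all agents, after which the output is copied out.

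The main obstacle I anticipate is quantifying the bias condition precisely and controlling it across the three subphases and across the change of inputs at rate $r$. I would need to show that each subprotocol succeeds within its overlap window of $\Theta(n\log n)$ interactions \emph{and} that the bias is not destroyed: cancellation requires the surviving gap after the Pólya phase to remain $\Omega(\log n)$ (for the $\Theta(\log n)$-support constant-bias regime) or $\omega(\sqrt{n})$ (for the $n^{3/4+\epsilon}$ additive-bias regime), so that concentration bounds rule out the minority ever overtaking the majority \whp. Simultaneously I must bound the total drift caused by inputs changing at rate $r=1/n$ (respectively $r=\ldauOmega{n^{-1/4+\epsilon}}$) over a full phase of $\Theta(n\log n)$ interactions, showing this perturbation is dominated by the bias; this is where the two parameter regimes stated in the introduction come from. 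Putting these together via a union bound over the $\poly{n}$ phases in the holding window yields that every output configuration is correct \whp, which combined with the clock's convergence and holding guarantees establishes that \cref{alg:adaptive-majority} is a $(\LdauOmicron{n\log n}, \LDAUOmega{\poly{n}})$-loosely-stabilizing adaptive majority protocol.
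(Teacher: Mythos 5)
Your overall architecture coincides with the paper's: the paper likewise treats the clock as a black box (\cref{thm:loosely_stabilizing} gives $O(n\log n)$ recovery into a synchronized window of length $n^c$), isolates per-phase correctness as \cref{pro:adaptive-majority}, proves that proposition by exactly your three subphase lemmas (P\'olya amplification of the bias, pairwise cancellation, one-way-epidemic broadcast), and finishes with a union bound over at most $n^{c-1}$ phases to obtain the $\poly n$ holding time; the convergence time is the time until the first synchronized phase begins. So the decomposition, the key lemmas, and the combination step all match. Your inclusion of input changes inside this proof is extra scope --- the paper explicitly excludes them here and defers them to \cref{pro:input-changes} --- but that does no harm.

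One quantitative step in your plan would fail as stated: you claim the Cancellation Subphase succeeds if the gap surviving the P\'olya Subphase is $\Omega(\log n)$. That is too weak. The difference $A-B$ is invariant under cancellation, so with a gap of $\Theta(\log n)$, once most opposing pairs have annihilated only $\Theta(\log n)$ agents with opinion $A$ remain; a surviving $B$-agent then meets an $A$-agent with probability $O(\log n / n)$ per initiated interaction, and over its $\Theta(\log n)$ initiated interactions in the subphase it expects only $O(\log^2 n / n) = o(1)$ such meetings. Hence w.h.p.\ some minority agents survive cancellation, and the Broadcasting Subphase would then spread \emph{both} opinions. This is precisely why \cref{lem:polya} is engineered to deliver $A_{e_1}-B_{e_1}=\Omega(n)$ (from support $\alpha\log n$ and constant multiplicative bias, via the P\'olya--Eggenberger concentration bound), which is the hypothesis that \cref{lem:cancellation} actually needs: with a linear gap every minority agent meets a majority agent with constant probability per interaction. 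Similarly, for the undecided-state-dynamics variant your threshold $\omega(\sqrt n)$ is below what the paper's black-box invocation requires: \cref{thm:undecided-dynamics} needs an additive bias of $\Omega(n^{3/4+\epsilon})$ after the P\'olya Subphase, and this is exactly where the constraints $\alpha = \Omega(\beta^{-2})$ and $\beta = \Omega(n^{-1/4+\epsilon})$ in \cref{pro:improved-bounds} come from.
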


\subsection{Analysis}

In the following analysis, we consider an arbitrary but fixed phase. 
We condition on the event that the clock is synchronized according to \cref{pro:maintenance} during that phase.
We show the following main result, and later in this section we describe how \cref{thm:adaptive-majority} follows from it this proposition. 
The proofs for the statements in this section can be found in \cref{apx:majority}.

\begin{proposition}[name=,restate=promajority,label=pro:adaptive-majority]
Assume that at time $t_1$ the clocks are in a homogeneous launching configuration and we have $A_{t_1} \geq \alpha \cdot\log n$ and $A_{t_1} \geq \left(1+\beta\right)\cdot B_{t_1}$.
If $\alpha$ and $\beta$ are large enough constants, then
there exists a $t_2 = \Theta(n \cdot w \cdot \log n)$
such that all agents output $A$ in configuration $C_{t_1 + t_2}$
with probability $1 - n^{-c}$.
\end{proposition}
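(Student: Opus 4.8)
The plan is to track the opinion variable $\Opinion u$ (which is set to the input at the phase start, when the agent receives a signal) through the three subprotocols that \cref{alg:adaptive-majority} runs on $I_{work}$, while conditioning throughout on the event of \cref{pro:maintenance} that the clock is synchronous, so that clock counters deviate by at most $\Delta = 55\tau$. First I would record the structural consequence of this synchrony. Since each of the six subintervals of $I_{work}$ has length $|I_{work}|/6 > \Delta = 55\tau$ (for $w=566$), the even ``buffer'' subintervals are wider than the maximal skew, so whenever one agent is in an active subphase no other agent is in a \emph{different} active subphase; moreover each active subphase admits a clean \emph{overlap} window of $\Theta(\tau n) = \Theta(n\log n)$ consecutive interactions during which every agent simultaneously runs the same subprotocol. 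All subsequent arguments are carried out inside these overlap windows, and a final union bound over the $O(1)$ failure events (clock synchrony, separation, and the three subphases) keeps the total failure probability below $n^{-c}$.

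For the \emph{Pólya Subphase} I would argue two things separately: that every undecided agent becomes decided, and that the resulting $A$/$B$ split preserves the bias. The absorption of $U$ is a one-way epidemic in which the ``decided'' flag spreads, so by (a symmetric version of) \cref{lem:one-way-epidemic} it completes within $\tau/4 \cdot n$ interactions \whp, comfortably inside the overlap window. The composition is governed by a Pólya urn: each newly decided agent copies the opinion of a uniformly random responder, so conditioned on becoming decided it adopts $A$ with probability $A/(A+B)$. Hence the fraction $A/(A+B)$ is a bounded martingale starting at $A_{t_1}/(A_{t_1}+B_{t_1}) \ge (1+\beta)/(2+\beta)$, and standard Pólya-urn concentration gives that the final fraction deviates from its mean by more than $\Theta(\beta)$ with probability at most $\exp(-\Omega(\beta^2(A_{t_1}+B_{t_1}))) \le n^{-\Omega(\beta^2\alpha)}$. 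Choosing $\alpha$ large relative to $c$ and $\beta$, this is at most $n^{-c}$, so at the end of the subphase $A+B=n$ and $A \ge (1/2 + \Theta(\beta))\,n$, i.e.\ the gap $A-B = \Theta(n)$.

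In the \emph{Cancellation Subphase} every interaction of opposite opinions annihilates one agent of each, so $A-B$ is invariant while $A,B$ shrink; since $A = B + \Theta(n)$ throughout, the drift of $B$ is $\approx -2(A/n)(B/n) = -\Theta(B/n)$ per interaction, giving geometric decay. I would split this into a multiplicative-drift phase driving $B$ from $\Theta(n)$ down to $O(\log n)$ within $O(n\log n)$ interactions, and a clean-up phase in which each surviving $B$ meets one of the $\Theta(n)$ remaining $A$-agents and cancels; a single $B$ survives $\Theta(n\log n)$ interactions with probability $n^{-\Theta(1)}$, so a union bound over all agents (using that $\tau=\Theta(\log n)$ carries a large enough constant) yields $B=0$ at the end of the subphase \whp, with $A = \Theta(n)$ still present. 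Finally, in the \emph{Broadcasting Subphase} only $A$ and $U$ remain, so $A$ spreads as a one-way epidemic and by \cref{lem:one-way-epidemic} every agent holds opinion $A$ within $\tau/4 \cdot n$ interactions \whp; the output assignment at $\Clock u \ge |I_{launch}|+|I_{work}|$ then sets $\Output u = A$ for all agents, which proves the claim.

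I expect the \textbf{main obstacle} to be the Pólya-subphase concentration: pinning down the bias with failure probability $n^{-c}$ from only $\Omega(\log n)$ initial support forces a careful quantitative choice of $\alpha$ in terms of $\beta$ and $c$, and one must disentangle the urn composition from the epidemic timing even though both are driven by the same interaction sequence. A secondary difficulty is the clean-up step of the cancellation, which only succeeds because the overlap length $\Theta(\tau n)$ carries a sufficiently large constant; this is exactly what ties the analysis back to the choice of $\tau$ and $w$ in \cref{sec:clock_algorithm}.
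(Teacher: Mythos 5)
Your proposal follows essentially the same route as the paper: condition on clock synchrony via \cref{pro:maintenance}, use the skew bound $\Delta = 55\tau < |I_{work}|/6$ to obtain strict subphase separation with $\Theta(n\log n)$ overlap windows, and then prove exactly the paper's three subphase lemmas (epidemic absorption of $U$ plus Pólya-urn concentration, cancellation using that the invariant $A-B$ keeps $A=\Omega(n)$ throughout, and a final broadcast epidemic), closing with a union bound over the constantly many failure events. The only cosmetic deviations are that you derive the urn concentration by a martingale/Azuma argument where the paper invokes the Pólya--Eggenberger tail bound of \cref{thm:polya_bound_total_balls}, and that your multiplicative-drift stage in the Cancellation Subphase is superfluous: your own clean-up step (each $B$-agent initiates $\Theta(\log n)$ interactions, each meeting an $A$-agent with constant probability) is already the entirety of the paper's \cref{lem:cancellation}.
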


The analysis is split into three parts, one for the \emph{Pólya Subphase}, one for the \emph{Cancellation Subphase}, and one for the \emph{Broadcasting Subphase}.
First, we assume that no changes in the input occur. This highlights the simplicity of the application of our phase clock.
Then we generalize our results: we adopt the undecided state dynamics introduced in \cite{DBLP:journals/dc/AngluinAE08}, and show how we can tolerate input changes at various rates.

Observe that from the guarantees of the phase clock in \cref{thm:synchronous_phase_clock} we get a strict separation between the subphases:
no two agents can be more than $1/6$ of $I_{work}$ apart. 
Furthermore we know that every agent has copied its input at the beginning of the phase before the first agent enters the first subphase.
At last the total time for the three subphases (including the separation time) is sufficiently large such that every agent has finished its work before the next phase starts.

When we refer to a distribution \emph{before a subphase}, we mean the distribution at the time just before the first agent performs an interaction in that subphase.
Analogously, when we refer to a distribution \emph{after a subphase}, we mean the distribution at the time when the last agent has performed an interaction in that subphase.
Recall that in the following analysis, we let $A_t$ and $B_t$ denote the number of agents $u$ with $\Opinion u = A$ and $\Opinion u = B$, respectively, at time $t$.
Furthermore, we let $s_i$ and $e_i$ (for \emph{start} and \emph{end}) be the first and the last time, respectively, when an agent performs an interaction in the $i$'th subphase.

\paragraph{Subphases}
We first consider the Pólya Subphase, where we model the process by means of so-called Pólya urns.
Pólya urns are defined as follows.
Initially, the urn contains $a$ red balls and $b$ blue balls.
In each step, a ball is drawn uniformly at random from the urn.
The ball's color is observed, and it is returned into the urn along with an additional ball of the same color.
The Pólya-Eggenberger distribution $\PE(a,b,m)$ describes the total number of red balls after $m$ steps of this urn process.

This observation allows us to apply concentration bounds to the opinion distribution after the Pólya Subphase. 
Recall that $s_1$ and $e_1$ are the first and the last time steps, respectively, when an agent performs an interaction in the Pólya Subphase. We get the following lemma. 

\begin{lemma}[name=,restate=lempolya,label=lem:polya]
Let $a = A_{s_1-1}$ and $b = B_{s_1-1}$.
For any constant $\beta > 0$ there exists a constant $\alpha$ such that if $a > \alpha \cdot \log n$ and $a > (1+\beta) \cdot b$ then $A_{e_1} - B_{e_1} = \LDAUOmega{n}$ with probability at least $1 - n^{-(c+2)}$.
\end{lemma}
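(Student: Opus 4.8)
The plan is to model the Pólya Subphase as a Pólya urn and then transfer concentration of the Pólya--Eggenberger distribution to the opinion counts. First I would set up the correspondence: during the Pólya Subphase (subphase~$1$), the only active rule is that an undecided agent adopts the opinion of its interaction partner. Hence, in each interaction where an undecided responder (or initiator, depending on the exact reading of \cref{alg:adaptive-majority}) meets a decided agent, one undecided agent is converted to an existing opinion, with the probability of adopting $A$ versus $B$ proportional to the current counts $A_t$ and $B_t$. This is exactly one step of a Pólya urn: drawing a ball of a given color and adding one of the same color. Starting from $a$ red ($A$) and $b$ blue ($B$) balls, the number of $A$-agents after $m$ conversions is distributed as $\PE(a,b,m)$, where $m$ is the number of undecided agents that get converted during the subphase. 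Since the subphase ends only after every agent has participated and, by the phase-clock guarantees (\cref{thm:synchronous_phase_clock}), the overlap is long enough that the one-way epidemic spreads opinions to all undecided agents \whp, we have $m = n - a - b$, i.e.\ the urn runs until no undecided agents remain.

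Next I would invoke the well-known fact that the Pólya--Eggenberger distribution concentrates: as the urn grows, the fraction of red balls converges almost surely to a $\mathrm{Beta}(a,b)$-distributed limit, and for $m$ large (here $m = \LDAUTheta{n}$) the final fraction of $A$-balls is concentrated around $a/(a+b)$ up to fluctuations governed by this Beta limit. The key quantitative statement I need is a lower-tail bound: if $a > (1+\beta)\cdot b$, then the limiting Beta fraction exceeds $1/2$ by a constant margin, and the final count $A_{e_1}$ exceeds $B_{e_1}$ by $\LDAUOmega{n}$ with high probability. Because the total number of balls at the end is $n = \LDAUTheta{n}$, a constant additive gap in the fraction translates into an $\LDAUOmega{n}$ additive gap in the counts $A_{e_1} - B_{e_1}$. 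The role of the hypothesis $a > \alpha\cdot\log n$ is to make the initial urn large enough that the Beta limit is itself concentrated: for $\mathrm{Beta}(a,b)$ with $a,b = \LDAUOmega{\log n}$ and ratio bounded away from $1$ by $\beta$, the probability mass below $1/2$ is at most $n^{-(c+2)}$ for a suitable choice of $\alpha$ as a function of $\beta$ and $c$. I would make this precise with a standard tail estimate for the Beta distribution (or, equivalently, an Azuma/martingale argument on the exchangeable urn sequence).

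The main obstacle, and the step I would spend the most care on, is the interface between the idealized urn and the actual population dynamics. Two subtleties arise. First, the urn abstraction assumes that \emph{only} conversions of undecided agents occur and that each conversion draws from the current $(A,B)$-pool with the exact Pólya probabilities; I must verify that the phase-clock separation rules out any cancellation (subphase~$2$) interactions during subphase~$1$ \whp, which follows from the strict subphase separation stated after \cref{pro:adaptive-majority}. Second, the number of draws $m$ is itself random and depends on how quickly the epidemic of opinions reaches all undecided agents; I would handle this by coupling the population process to an urn that is run for exactly $m = n-a-b$ steps and arguing that \whp every undecided agent is converted before $e_1$ (via \cref{lem:one-way-epidemic} applied to the spread of ``being decided''), so that the endpoint of the urn coincides with the end of the subphase. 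Once these two couplings are in place, the concentration of $\PE(a,b,\,n-a-b)$ yields $A_{e_1} - B_{e_1} = \LDAUOmega{n}$ with probability at least $1 - n^{-(c+2)}$, completing the proof.
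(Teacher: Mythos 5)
Your proposal is correct, and the coupling half of it is exactly the paper's argument: the paper also models the subphase as a Pólya urn by discarding interactions that do not convert an undecided agent (its \cref{obs:polya}), and also uses the one-way epidemic bound (\cref{lem:one-way-epidemic}) plus the subphase-separation guarantee of the clock to argue that all $n-a-b$ undecided agents are converted by time $e_1$, so that $A_{e_1} \sim \PE(a,b,n-a-b)$ \whp. Where you diverge is the concentration step. The paper invokes a ready-made tail bound for the Pólya--Eggenberger distribution (\cref{thm:polya_bound_total_balls}, cited from prior work) with $\delta = \sqrt{((2+c)/\epsilon_p)\ln n}$, and then a short explicit computation shows $A_{e_1}-B_{e_1} \geq n\bigl(\beta/(2+\beta) - O(\sqrt{c/(\alpha \log e)})\bigr) = \Omega(n)$ once $\alpha$ is a large enough constant depending on $\beta$ and $c$. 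You instead go through the de Finetti structure of the urn: conditionally on the $\mathrm{Beta}(a,b)$ limit $P$ the draws are i.i.d.\ Bernoulli$(P)$, so you need a quantitative $\mathrm{Beta}$ lower-tail bound (probability $n^{-(c+2)}$ of the limit falling below $1/2$ plus a constant margin) followed by a Chernoff bound on the conditional binomial; equivalently, your Azuma route on the martingale of urn fractions, whose $k$th increment is at most $1/(a+b+k)$, gives a sub-Gaussian tail $\exp(-\Omega(\delta^2(a+b)))$ and hence the same requirement $\alpha = \Omega(\beta^{-2}\cdot c)$. Both routes yield the same quantitative conclusion; the paper's buys brevity by outsourcing the tail bound, while yours is self-contained but obliges you to actually prove the Beta (or Azuma) estimate, which the paper's toolkit does not supply. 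One small imprecision: you assume $a,b = \Omega(\log n)$, but the lemma only guarantees this for $a$; when $b$ is small (or zero) the Beta limit degenerates toward $1$, which only strengthens the conclusion, so you should note this case rather than assume it away.
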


Next we consider the Cancellation Subphase.
The goal is to remove any occurrence of the minority opinion.
Whenever an agent with opinion $A$ interacts with another agent with opinion $B$, both agents become undecided.
Formally, we show the following lemma.
\begin{lemma}[name=,restate=lemcancellation,label=lem:cancellation]
If $A_{s_2-1} - B_{s_2-1} = \LDAUOmega{n}$ then
$A_{e_2} = \LDAUOmega{n}$ and $B_{e_2}=0$ with probability at least $1-n^{-(c+2)}$.
\end{lemma}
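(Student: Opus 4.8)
The plan is to analyze the Cancellation Subphase, where agents with opposite opinions cancel out to become undecided, and show that starting from a linear bias $A_{s_2-1} - B_{s_2-1} = \LDAUOmega{n}$, the minority opinion $B$ is completely eliminated while $A$ retains linear support. The key structural observation is that each cancellation reaction removes exactly one $A$-agent and one $B$-agent simultaneously, so the \emph{difference} $A_t - B_t$ is invariant under cancellation reactions. Since no other dynamics act on the opinions during this subphase (agents only cancel, there is no copying or hopping affecting opinions in subphase $2$), we have $A_{e_2} - B_{e_2} = A_{s_2-1} - B_{s_2-1} = \LDAUOmega{n}$ throughout. Therefore, if we can show that $B_{e_2} = 0$, the bound $A_{e_2} = A_{e_2} - B_{e_2} = \LDAUOmega{n}$ follows immediately. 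This reduces the entire lemma to proving that every $B$-agent is eliminated by the end of the subphase.

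To show $B_{e_2} = 0$, I would argue that the subphase lasts long enough for the cancellation process to drive the minority to zero \whp. First I would use the phase-clock guarantees from \cref{thm:synchronous_phase_clock} (synchronization and overlap within each subphase) to establish that the Cancellation Subphase spans $\LdauOmega{n \log n}$ interactions in which all participating agents correctly execute the cancellation rule. The number of $B$-agents is non-increasing, and while both $A$ and $B$ agents are present, the bias $A_t - B_t = \LDAUOmega{n}$ guarantees that the number of $A$-agents stays bounded below by a linear quantity. Consequently, for each remaining $B$-agent, the probability that a given interaction cancels it (i.e., it interacts with one of the $\LDAUOmega{n}$ many $A$-agents) is $\LDAUOmega{1/n}$ per step. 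Over $\LdauOmega{n \log n}$ interactions this yields an expected $\LdauOmega{\log n}$ cancellation opportunities per $B$-agent.

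The main obstacle I anticipate is a careful coupling argument to turn this per-step cancellation probability into a high-probability statement that \emph{all} $B$-agents vanish, rather than just the expected count dropping. The natural approach is to bound, for each of the initial $B$-agents (or equivalently track the decreasing count $B_t$), the time until it is consumed: since the bias stays linear, the survival probability of any fixed $B$-agent after $\Theta(n \log n)$ steps is at most $(1 - \LDAUOmega{1/n})^{\LdauOmega{n \log n}} \le n^{-\kappa}$ for a suitable constant $\kappa$ depending on the subphase length and the bias constant. A union bound over the at most $n$ initial $B$-agents then gives $B_{e_2} = 0$ with probability at least $1 - n^{-(c+2)}$, provided the subphase-length constant (governed by $w = 566$ and $\tau = \Theta(\log n)$) and the bias constant from \cref{lem:polya} are chosen large enough. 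Care is needed because the cancellation of one $B$-agent also removes an $A$-agent, so the linear lower bound on $A_t$ must be maintained throughout; this is precisely guaranteed by the invariance of the difference $A_t - B_t = \LDAUOmega{n}$, which ensures $A_t \ge A_t - B_t = \LDAUOmega{n}$ at all times and closes the argument.
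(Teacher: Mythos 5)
Your proposal is correct and takes essentially the same route as the paper's proof: both rest on the observation that each cancellation removes one $A$- and one $B$-agent, so $A_t \ge A_{s_2}-B_{s_2} = \Omega(n)$ deterministically throughout the subphase, and then show that every $B$-agent is cancelled with probability $1-n^{-\Theta(c)}$ within the $\Theta(n\log n)$-interaction subphase guaranteed by the clock, finishing with a union bound over agents. Your per-global-step survival bound $\left(1-\Omega(1/n)\right)^{\Omega(n\log n)} \le n^{-\kappa}$ is merely a rephrasing of the paper's count of $\Theta(\log n)$ initiated interactions per agent, each cancelling a fixed $B$-agent with constant probability.
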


Finally we consider the Broadcasting Subphase.
The goal 
is to spread the (unique) remaining opinion to all other agents.
Whenever an undecided agent $u$ interacts with another agent $v$ that has an opinion, agent $u$ adopts the opinion of agent $v$.
This leads to a configuration where every agent has the majority opinion \whp.
Formally, we show the following lemma.
\begin{lemma}[name=,restate=lembroadcasting,label=lem:broadcasting]
If $A_{e_2} = \LDAUOmega{n}$ and $B_{e_2}=0$, then $A_{e_3} = n$ and $B_{e_3}=0$ with probability at least $1-n^{-(c+2)}$.
\end{lemma}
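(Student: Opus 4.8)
The plan is to analyze the Broadcasting Subphase as a one-way epidemic over the single surviving opinion. By assumption we enter the subphase with $A_{e_2} = \LDAUOmega{n}$ agents holding opinion $A$, no agent holding $B$, and the remaining agents undecided. Since the Cancellation Subphase has eliminated all $B$-opinions, the only opinion present is $A$, so the third \texttt{if}-branch in \cref{alg:adaptive-majority} ($\Subphase u = 3 \land \Work u = U$) can only ever copy $A$ into an undecided agent: there is no mechanism to create a $B$-opinion or to destroy an $A$-opinion during this subphase. Thus throughout the subphase we maintain the invariant $B_t = 0$, and it remains only to show that every undecided agent becomes $A$ before the subphase ends.

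First I would set up the coupling with the one-way epidemic of \cref{lem:one-way-epidemic}. I treat every agent with $\Work{} = A$ as \emph{infected} and every undecided agent as \emph{susceptible}. When an agent pair $(u,v)$ interacts and $u$ is undecided while $v$ holds $A$, agent $u$ adopts $A$; this is exactly the update $(q_1, q_2) \rightarrow (\max(q_1,q_2), q_2)$ of the one-way epidemic (up to relabeling initiator and responder roles, which only affects constant factors). The crucial point is that we start with $\LDAUOmega{n}$ already-infected agents rather than a single one. A standard one-way epidemic started from a single infected agent reaches full infection within $\tau/4 \cdot n$ steps with probability at least $1 - n^{-(7+2c)}$; starting from a linear number of infected agents can only be faster, so full spread certainly occurs within the same time bound, \whp.

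The remaining step is to verify that the Broadcasting Subphase is long enough to contain this spread. Here I would invoke the phase-clock guarantees from \cref{thm:synchronous_phase_clock} and \cref{pro:maintenance}: conditioned on synchronization, all agents are within distance $\Delta = 55 \cdot \tau$ of each other, the subphase corresponds to a subinterval of $I_{work}$ of width $\frac{1}{6}|I_{work}|$, and the overlap-interval within a single subphase therefore has length at least $\LDAUOmega{\tau \cdot n}$, which exceeds $\tau/4 \cdot n$ by a sufficient margin. Consequently every agent remains in $\Subphase{} = 3$ for long enough that the epidemic completes while all agents are still executing the broadcasting rule. A union bound over the (few, namely $n^{O(1)}$) relevant events — epidemic completion, clock synchronization during the subphase — yields the claimed failure probability $n^{-(c+2)}$, giving $A_{e_3} = n$ and $B_{e_3} = 0$.

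I expect the main obstacle to be the bookkeeping needed to transfer the one-way epidemic bound, which is stated for a process started by a \emph{single} agent at step $1$, to our setting where infection starts from $\LDAUOmega{n}$ agents and runs inside a bounded subphase window whose endpoints $s_3, e_3$ are themselves random (they depend on the clock counters). The cleanest way I see to handle this is to couple monotonically: discard all but one of the initial $A$-agents to obtain a dominated single-source epidemic, apply \cref{lem:one-way-epidemic} to that, and observe that the true process is at least as infected at every step, so full infection in the true process occurs no later. The only genuine care is to ensure the random subphase length $e_3 - s_3$ is \whp at least $\tau/4 \cdot n$; this follows from \cref{lem:moving} applied to the slowest agent, bounding how many interactions it initiates before its clock exits the fifth subinterval of $I_{work}$, exactly as in the maintenance arguments of \cref{sec:maintenance}.
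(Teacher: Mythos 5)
Your proposal is correct and takes essentially the same route as the paper's proof, which likewise reduces the Broadcasting Subphase to the one-way epidemic of \cref{lem:one-way-epidemic} together with the observation that the clock guarantees a sufficiently long overlap in which all agents execute the broadcasting rule. The extra details you supply (the monotone coupling that discards all but one initial $A$-agent, the invariant $B_t = 0$, and the explicit lower bound on the subphase overlap) are exactly the bookkeeping the paper leaves implicit.
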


\noindent We have now everything we need to prove \cref{pro:adaptive-majority}.
\begin{proof}[Proof of \cref{pro:adaptive-majority}]
We assume the configuration at time $t_1$ is a homogeneous launching configuration.
From \cref{pro:maintenance} it follows that all configurations in the time interval $[t_1,t_1+ t_2]$ for some $t_2 = \Theta(n \cdot w \cdot \log n)$ are synchronous with probability at least $1 - n^{-(c+1)}$.
This means that the three subphases are strictly separated as explained above.
It therefore follows, each with probability at least $1-n^{-(c+2)}$,
\begin{itemize}[nosep]
\item from \cref{lem:polya} that after the Pólya Subphase no agent is undecided,
\item from \cref{lem:cancellation} that after the Cancellation Subphase no agent has opinion $B$, and
\item from \cref{lem:broadcasting} that after the Broadcasting Subphase all agents have opinion $A$.
\end{itemize}
Once all agents have opinion $A$, this becomes the output when the agents enter $I_{gather}$.
Together, this shows that all agents output the majority opinion after $\Theta(n \cdot w \cdot \log{n} )$ interactions with probability at least $1-n^{-c}$.
\end{proof}

\noindent With \cref{pro:adaptive-majority} we can now prove \cref{thm:adaptive-majority} as follows.

\begin{proof}[Proof of \cref{thm:adaptive-majority}]
We first show recovery.
Note that we do not (yet) consider input changes.
Fix a time $t_1$ and assume the agents are in an arbitrary configuration at time $t_1$.
From \cref{thm:loosely_stabilizing} it follows the clocks enter a synchronous configuration within $\ldauOmicron{n \log n}$ interactions and stay in synchronous configurations for $\poly n$ time \whp.

Fix a synchronized phase $i < \poly n$.
It follows from \cref{pro:adaptive-majority} that all agents enter a correct configuration at the end of phase $i$ with probability at least $1-n^{-c}$.
(Recall that in a correct configuration all agents have to output the majority opinion if there is a sufficiently large bias. Without a bias, any output constitutes a correct configuration.)

From the guarantees of the phase clock it follows that the first synchronized phase starts within $O(n \log n)$ time after time $t_1$ \whp.
This shows a convergence time of $O(n \log n)$.
From a union bound over at most $n^{c-1}$ phases it follows that the protocol is in a correct configuration for $\poly n$ interactions \whp.
This shows a holding time of $\poly n$.
Together, this concludes the proof.
\end{proof}
 
\paragraph{Improving the Bound}
In order to show-case the simplicity of the application of our phase clock, we have presented a simplistic protocol, where we assumed a constant factor bias towards the majority opinion.
We now show how to obtain a tighter result:
we replace the Cancellation Subphase and the Broadcasting Subphase (lines 6 to 9 in \cref{alg:adaptive-majority}) with the \emph{undecided state dynamics} introduced in \cite{DBLP:journals/dc/AngluinAE08}.

The undecided state dynamics is defined as follows.
Each agent is in one of three states, $x$, $y$, or $b$.
If an agent $u$ with opinion $x$ ($y$, resp.) interacts with another agent $v$ with opinion $y$ ($x$, resp.), agent $u$ enters the \emph{blank} state $b$.
If an agent $u$ in state $b$ interacts with another agent $v$ in state $x$ or $y$, $u$ adopts $v$'s state.
When adopted to our problem, the values $A$, $B$, $U$ of variable $\Opinion u$ at agent $u$ translate directly to states $x$, $y$, and $b$, respectively.
The undecided state dynamics converge to the correct majority \whp even if the multiplicative bias is much smaller than a constant fraction.
Formally, we show the following statement.
\begin{observation}[name=,restate=proImprovedBounds,label=pro:improved-bounds]
If we use the undecided state dynamics, \cref{pro:adaptive-majority} also holds for $\alpha = \LDAUOmega{ \beta^{-2} }$
provided that $\beta = \LDAUOmega{n^{-1/4+\epsilon}}$.
\end{observation}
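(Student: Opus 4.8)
The plan is to keep the Pólya Subphase but sharpen its analysis for a \emph{subconstant} bias $\beta$, and then to replace the Cancellation and Broadcasting Subphases by the undecided state dynamics, for which a convergence guarantee already holds once the additive bias exceeds $\LDAUOmega{\sqrt{n\log n}}$. The two pieces fit together because after amplification the additive bias will be $\LDAUOmega{\beta n}$, and the floor $\beta = \LDAUOmega{n^{-1/4+\epsilon}}$ forces $\beta n = \LDAUOmega{n^{3/4+\epsilon}}$, which dominates $\sqrt{n\log n}$ with polynomial slack.

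For the first piece I would revisit the reduction to a Pólya urn from the proof of \cref{lem:polya}, keeping the dependence on $\beta$ explicit instead of treating it as a constant. Writing $a = A_{s_1-1}$ and $b = B_{s_1-1}$, the number of $A$-agents after the subphase is distributed as $\PE(a,b,n-a-b)$, and the fraction $M_k = R_k/(a+b+k)$ of red balls is a bounded-difference martingale with $M_0 = a/(a+b)$ and increments $\lvert M_{k+1}-M_k\rvert \le 1/(a+b+k+1)$, whence $\sum_k \lvert M_{k+1}-M_k\rvert^2 \le 1/(a+b)$. An Azuma--Hoeffding inequality then gives $\Prob{\lvert M_{n-a-b}-M_0\rvert \ge t} \le 2\exp\!\bigl(-t^2(a+b)/2\bigr)$. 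Since $a \ge (1+\beta)b$ yields $M_0 - 1/2 = \LDAUOmega{\beta}$, I would set $t = (M_0-1/2)/2 = \LDAUOmega{\beta}$; then, using $a+b \ge \alpha\log n = \LDAUOmega{\beta^{-2}\log n}$, the tail is at most $2\exp\!\bigl(-\LDAUOmega{\alpha\beta^2\log n}\bigr) \le n^{-(c+2)}$ once the constant in $\alpha = \LDAUOmega{\beta^{-2}}$ is large enough. On this event $M_{n-a-b} \ge 1/2 + \LDAUOmega{\beta}$, so $A_{e_1}-B_{e_1} = (2M_{n-a-b}-1)\,n = \LDAUOmega{\beta n}$ with probability at least $1-n^{-(c+2)}$.

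For the second piece the subphase now starts with all $n$ agents decided and additive bias $A_{e_1}-B_{e_1} = \LDAUOmega{n^{3/4+\epsilon}} = \ldauomega{\sqrt{n\log n}}$. I would invoke the convergence guarantee of the undecided state dynamics \cite{DBLP:journals/dc/AngluinAE08,DBLP:journals/nc/CondonHKM20}: from any configuration with additive bias $\LDAUOmega{\sqrt{n\log n}}$ the dynamics reaches the all-majority configuration within $\LDAUOmicron{n\log n}$ interactions \whp. The phase-clock guarantees of \cref{pro:maintenance} ensure that this subphase is cleanly separated from the others and that its overlap interval has length $\LDAUOmega{n\log n}$, which is enough for the dynamics to converge; the final copying of the opinion to the output at the end of $I_{work}$ is handled exactly as in the proof of \cref{pro:adaptive-majority}. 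A union bound over the two pieces then gives \cref{pro:adaptive-majority} with the relaxed parameters.

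The main obstacle is the sharpened Pólya concentration. In contrast to the constant-$\beta$ regime of \cref{lem:polya}, here both the expected relative bias $M_0-1/2$ and the martingale fluctuation $t$ are governed by $\beta$, so the argument only closes when the fluctuation is a strictly smaller multiple of the expected bias. This is exactly what pins down the tradeoff $\alpha = \LDAUOmega{\beta^{-2}}$: as $\beta$ shrinks we need a proportionally larger initial support to suppress the variance of the urn. The floor $\beta = \LDAUOmega{n^{-1/4+\epsilon}}$ must then be chosen so that the amplified bias $\LDAUOmega{\beta n}$ still clears the $\sqrt{n\log n}$ threshold demanded by the undecided state dynamics; I expect the stated floor to be comfortably sufficient for the no-change setting (it leaves polynomial slack), which is also what later lets the same statement accommodate input changes.
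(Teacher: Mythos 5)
Your proposal is correct and follows the paper's overall decomposition — amplify the bias in the Pólya Subphase with the $\beta$-dependence made explicit, then hand the amplified additive bias to the undecided state dynamics — but it differs in the key technical ingredient for the amplification step. The paper simply revisits \cref{eq:polya-bound-application} from the proof of \cref{lem:polya}, which rests on the black-box Pólya--Eggenberger tail bound of \cref{thm:polya_bound_total_balls} from \cite{DBLP:conf/podc/BankhamerEKK20}, and solves the resulting inequality for $\alpha$ to hit the bias $n^{3/4+\epsilon}$ required by \cref{thm:undecided-dynamics}; its proof of the observation is therefore a two-line delta on existing machinery. You instead prove the sharpened concentration from scratch: the urn fraction $M_k = R_k/(a+b+k)$ is a classical martingale with increments bounded by $1/(a+b+k+1)$, so $\sum_k \lvert M_{k+1}-M_k\rvert^2 \le 1/(a+b)$ and Azuma--Hoeffding gives a tail of $2\exp\bigl(-t^2(a+b)/2\bigr)$; with $M_0 - 1/2 = \Omega(\beta)$, $t = (M_0-1/2)/2$, and $a+b \ge \alpha\log n$ this closes exactly when $\alpha\beta^2 = \Omega(1)$ with a large enough constant, which transparently exhibits the tradeoff $\alpha = \Omega(\beta^{-2})$ rather than extracting it from the cited theorem. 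Your argument is self-contained and elementary where the paper's is a citation, and your choice to invoke the $\Omega(\sqrt{n\log n})$-bias guarantee of \cite{DBLP:journals/nc/CondonHKM20} (alongside \cite{DBLP:journals/dc/AngluinAE08}) is strictly more generous than the paper's reliance on the $\Omega(n^{3/4+\epsilon})$ threshold of \cref{thm:undecided-dynamics} — your route would in principle tolerate a smaller floor on $\beta$, whereas the stated floor $\beta = \Omega(n^{-1/4+\epsilon})$ is exactly what the paper's citation chain forces; since $\beta n = \Omega(n^{3/4+\epsilon})$ clears both thresholds with polynomial slack, your version proves the statement as claimed. The remaining glue (all agents decided after the Pólya Subphase via the one-way epidemic of \cref{obs:polya}, clean subphase separation and $\Theta(n\log n)$ overlap from \cref{pro:maintenance}, copying to the output at the end of $I_{work}$) is handled at the same level of detail as the paper itself, so I see no gap.
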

This means that we can solve the adaptive majority problem with a multiplicative bias of $1 + \beta = 1 + 1/\log n = 1 + \Ldauomicron{1}$ and asymptotically at least $\LDAUOmega{\log^2 n}$ many agents (assuming sufficiently large constants).
Hence we match the results of \cite{DBLP:conf/podc/Alistarh0U21}.

\paragraph{Robustness Against Input Changes}

We finally investigate the effect of \emph{input changes}, introduced in \cite{DBLP:conf/podc/Alistarh0U21} as \emph{leak transitions}.
Input changes affect the $\Input{u}$ variable of an agent $u$: they convert an agent $u$ with majority input $\Input u = A$ to an agent $u$ with minority input $\Input u = B$.
(Recall that we assume w.l.o.g.\ that $A$ is the majority and $B$ is the minority opinion.)
Our main observation is the following.
Let $r$ be the rate of input changes such that in each interaction an input changes with probability $r$.
Assume that $r$ bounded and a sufficiently large bias $\beta$ towards the majority opinion is present.
Then we show the following result.

\begin{proposition}[name=,restate=proInputChanges,label=pro:input-changes]
Assume that at time $t_1$ he clocks are in a homogeneous launching configuration and we have $A_{t_1} \geq \alpha \cdot\log n$ and $A_{t_1} \geq \left(1+\beta\right)\cdot B_{t_1}$.
Assume that the inputs change at rate $r$.
If
\begin{align*}
\alpha &= \LDAUOmega{\beta^{-2}} 
\text{, } \quad \beta = \LDAUOmega{n^{-1/4+\epsilon}} \text{, \quad and } \quad r \leq \frac{\beta \cdot \alpha}{n}
\end{align*}
%
%
%
%
%
then there exists a $t_2 = \Theta(n \cdot w \cdot \log n)$
such that all agents output $A$ in configuration $C_{t_1 + t_2}$
with probability $1 - n^{-c}$.

\end{proposition}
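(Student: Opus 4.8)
The plan is to reduce \cref{pro:input-changes} to the input-change-free result \cref{pro:adaptive-majority} (in its strengthened form from \cref{pro:improved-bounds}) by showing that, over the duration of a single phase, the total number of input changes is small enough that it only erodes the bias by a constant factor and does not affect the opinion dynamics in any other way. The key structural observation is that input changes only modify the $\Input u$ variable, and that $\Input u$ is copied into $\Opinion u$ exactly once per phase, at the moment agent $u$ receives its signal. Therefore the only effect input changes can have \emph{within} a phase is through those changes that occur between time $t_1$ and the moment each agent copies its input at the start of its working interval; changes that happen later in the phase are irrelevant because they are not read until the next phase.

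First I would bound the total number of input changes during the relevant window. The phase has length $t_2 = \Theta(n \cdot w \cdot \log n)$, but only the changes occurring before an agent copies its input matter, and by \cref{thm:synchronous_phase_clock} every agent copies its input within $\LdauOmicron{\tau \cdot n} = \LdauOmicron{n \log n}$ interactions of $t_1$ \whp. Let $R$ be the number of input changes in this initial window of length $\ell = \LdauOmicron{n\log n}$. Since each interaction triggers a change independently with probability $r$, the variable $R$ is stochastically dominated by $\BinDistr(\ell, r)$ with mean $\mu = r \cdot \ell = \LdauOmicron{r \cdot n \log n}$. Plugging in the hypothesis $r \le \beta \cdot \alpha / n$ gives $\mu = \LdauOmicron{\beta \cdot \alpha \cdot \log n}$, and since $\alpha \cdot \log n \le A_{t_1}$ we obtain $\mu = \LdauOmicron{\beta \cdot A_{t_1}}$. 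A Chernoff bound (\cref{thm:chernoff-bounds}) then shows $R \le c' \cdot \beta \cdot A_{t_1}$ for a suitable small constant $c'$ with probability at least $1 - n^{-(c+2)}$, provided the constants hidden in $\alpha$ are chosen large enough relative to $c'$.

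Next I would argue that this many input changes shift the effective bias by at most a constant factor. Each input change converts an $A$-agent into a $B$-agent, so after the window the quantities seen by the Pólya Subphase satisfy $A_{s_1 - 1} \ge A_{t_1} - R$ and $B_{s_1 - 1} \le B_{t_1} + R$. Starting from $A_{t_1} \ge (1+\beta) B_{t_1}$ and $A_{t_1} \ge \alpha \log n$, and using $R \le c' \beta A_{t_1}$ with $c'$ small, a short computation shows $A_{s_1-1} \ge (1 + \beta/2) \cdot B_{s_1-1}$ and $A_{s_1-1} \ge (\alpha/2) \log n$; that is, the input at the start of the opinion dynamics still satisfies the hypotheses of \cref{pro:adaptive-majority}/\cref{pro:improved-bounds} with a halved bias and support constant, which are again large enough constants. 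From this point on, the $\Opinion u$ variables evolve exactly as in the no-change setting because input changes no longer feed into $\Opinion u$ during the current phase, so the three subphase lemmas (\cref{lem:polya,lem:cancellation,lem:broadcasting}, in their undecided-state-dynamics form) apply verbatim and drive all agents to output $A$ by the end of the phase. A final union bound over the clock-synchronization event, the input-change bound, and the three subphase guarantees yields the claimed probability $1 - n^{-c}$.

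The main obstacle I anticipate is handling input changes that occur \emph{after} an agent has copied its input but while the agent is still within the same phase's opinion dynamics: one must argue cleanly that \cref{alg:adaptive-majority} never re-reads $\Input u$ until the next signal, so that such late changes are genuinely invisible to the current phase. This requires care because agents copy their inputs at slightly different times (within the $\LdauOmicron{\tau n}$ burst window), so the ``window of vulnerability'' is per-agent rather than global; I would handle this by noting that the synchronization guarantee bounds all copy times within a single burst-interval, and that the Pólya Subphase provably does not begin until after the last agent has copied, so a uniform window of length $\LdauOmicron{n \log n}$ upper-bounds every agent's vulnerable period and the Binomial tail bound above suffices. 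A secondary subtlety is ensuring that the rate condition $r \le \beta \alpha / n$ is tight enough to also cover the undecided-state-dynamics regime where $\beta$ may be as small as $\LdauOmega{n^{-1/4+\epsilon}}$; here the product $\beta \alpha = \LdauOmega{\beta^{-1}} = \LdauOmicron{n^{1/4}}$ keeps $r = \LdauOmega{n^{-1/4+\epsilon}}$ as claimed in the introduction, which I would verify is consistent with the stated tolerance.
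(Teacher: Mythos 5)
Your overall strategy coincides with the paper's: apply a Chernoff bound to show that only $O(r\cdot n\log n)\le c'\cdot\beta\cdot\alpha\cdot\log n$ input changes occur during the relevant window w.h.p., argue that these erode the support of $A$ and the bias only by constant factors, and then hand off to the change-free analysis of \cref{pro:adaptive-majority} (resp.\ \cref{pro:improved-bounds}). You diverge in two mechanisms. First, where you argue structurally that changes occurring \emph{after} an agent has copied $\mathsf{input}$ into $\mathsf{opinion}$ are invisible to the current phase (since \cref{alg:adaptive-majority} reads $\mathsf{input}[u]$ only at the signal), the paper instead front-loads all changes occurring up to and during the Pólya Subphase via a coupling, using a monotonicity property of the Pólya urn to argue that the modified process minorizes the original in the number of majority agents. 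Your reading is consistent with the algorithm's semantics and is arguably cleaner, but note that the rate bound is still needed over the whole phase: as the paper's proof sketch of this proposition stresses, the output is computed from a \emph{stale} snapshot, so one must also ensure that the (drifting) inputs have not overturned the majority by time $t_1+t_2$ — late changes are invisible to the dynamics but not to correctness. Second, you run a single unified computation, whereas the paper splits into the cases $\beta\le 1/(2c')$ and $\beta\ge 1/(2c')$.

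That second divergence hides the one genuine gap. Your claim that $R\le c'\cdot\beta\cdot A_{t_1}$ holds for a \emph{small} constant $c'$ ``provided the constants hidden in $\alpha$ are chosen large enough'' does not work: both the bound on $R$ (through $r\le\beta\alpha/n$) and the lower bound $A_{t_1}\ge\alpha\log n$ scale linearly in $\alpha$, so enlarging $\alpha$ leaves the ratio $R/(\beta A_{t_1})$ unchanged. When $r$ saturates $\beta\alpha/n$ and $A_{t_1}=\alpha\log n$, the expectation of $R$ is itself $\Theta(\beta\alpha\log n)$ with a constant dictated by the window length (which is $\Theta((c+4))$ through $\tau$), so $c'$ is unavoidably a large constant; your one-line conclusions $A_{s_1-1}\ge(1+\beta/2)\cdot B_{s_1-1}$ and $A_{s_1-1}\ge(\alpha/2)\log n$ can then fail — for $c'\ge 1/2$ the multiplicative bias can even invert. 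The paper's two-case analysis is exactly its device for coping with a large $c'$: for $\beta\le 1/(2c')$ the support loss is at most $\alpha\log n\cdot c'\beta\le\alpha\log n/2$ and the slack in $\alpha=\Omega(\beta^{-2})$ absorbs the degraded bias, while for $\beta\ge 1/(2c')$ (constant bias) the additive bias $\Theta(\beta\alpha\log n)$ is compared directly against the change count. (Admittedly the paper, too, juggles these constants loosely.) To repair your version, either adopt the paper's case split, or strengthen the hypothesis to $r\le\epsilon_0\cdot\beta\alpha/n$ for a sufficiently small constant $\epsilon_0$ and absorb $\epsilon_0$ into the implicit constants of the proposition.
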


\begin{proof}[Proof Sketch]
The main idea in the proof of \cref{pro:input-changes} is that we bound the number of input changes in $[t_1, t_2]$ by a simple application of Chernoff bounds.
Intuitively, if this number of input changes is at most a constant factor of the bias, we are done. 
Hence the result follows from the observation that the input configuration does not change quickly enough to turn over the bias: 
the output of the agents at the end of the phase, even though its calculation is based on an \emph{old} configuration from the beginning of the phase, is still correct. The formal statement then follows from the previous analysis without input changes.
\end{proof}

Finally, observe that also under presence of input changes we match the results of \cite{DBLP:conf/podc/Alistarh0U21} for $r = 1/n$, a multiplicative bias of $1 + \beta = 1 + 1/\log n $ and $\LDAUOmega{\log^2 n}$ many agents ($\alpha \geq \log n$).


\bibliography{references}

\begin{thebibliography}{10}

\bibitem{DBLP:conf/soda/AlistarhAG18}
Dan Alistarh, James Aspnes, and Rati Gelashvili.
\newblock Space-optimal majority in population protocols.
\newblock In {\em Proceedings of the Twenty-Ninth Annual {ACM-SIAM} Symposium
  on Discrete Algorithms, {SODA}}, pages 2221--2239. {SIAM}, 2018.
\newblock \href {https://doi.org/10.1137/1.9781611975031.144}
  {\path{doi:10.1137/1.9781611975031.144}}.

\bibitem{DBLP:conf/dna/AlistarhDKSU17}
Dan Alistarh, Bartlomiej Dudek, Adrian Kosowski, David Soloveichik, and
  Przemyslaw Uznanski.
\newblock Robust detection in leak-prone population protocols.
\newblock In {\em {DNA} Computing and Molecular Programming - 23rd
  International Conference, {DNA}}, volume 10467 of {\em Lecture Notes in
  Computer Science}, pages 155--171. Springer, 2017.
\newblock \href {https://doi.org/10.1007/978-3-319-66799-7\_11}
  {\path{doi:10.1007/978-3-319-66799-7\_11}}.

\bibitem{DBLP:conf/podc/Alistarh0U21}
Dan Alistarh, Martin T{\"{o}}pfer, and Przemyslaw Uznanski.
\newblock Comparison dynamics in population protocols.
\newblock In Avery Miller, Keren Censor{-}Hillel, and Janne~H. Korhonen,
  editors, {\em {PODC} '21: {ACM} Symposium on Principles of Distributed
  Computing, Virtual Event, Italy, July 26-30, 2021}, pages 55--65. {ACM},
  2021.
\newblock \href {https://doi.org/10.1145/3465084.3467915}
  {\path{doi:10.1145/3465084.3467915}}.

\bibitem{DBLP:conf/opodis/AmirAL20}
Talley Amir, James Aspnes, and John Lazarsfeld.
\newblock Approximate majority with catalytic inputs.
\newblock In Quentin Bramas, Rotem Oshman, and Paolo Romano, editors, {\em 24th
  International Conference on Principles of Distributed Systems, {OPODIS} 2020,
  December 14-16, 2020, Strasbourg, France (Virtual Conference)}, volume 184 of
  {\em LIPIcs}, pages 19:1--19:16. Schloss Dagstuhl - Leibniz-Zentrum f{\"{u}}r
  Informatik, 2020.
\newblock \href {https://doi.org/10.4230/LIPIcs.OPODIS.2020.19}
  {\path{doi:10.4230/LIPIcs.OPODIS.2020.19}}.

\bibitem{DBLP:journals/dc/AngluinADFP06}
Dana Angluin, James Aspnes, Zo{\"{e}} Diamadi, Michael~J. Fischer, and
  Ren{\'{e}} Peralta.
\newblock Computation in networks of passively mobile finite-state sensors.
\newblock {\em Distributed Comput.}, 18(4):235--253, 2006.
\newblock \href {https://doi.org/10.1007/s00446-005-0138-3}
  {\path{doi:10.1007/s00446-005-0138-3}}.

\bibitem{DBLP:conf/podc/AngluinAE06}
Dana Angluin, James Aspnes, and David Eisenstat.
\newblock Stably computable predicates are semilinear.
\newblock In {\em Proceedings of the Twenty-Fifth Annual {ACM} Symposium on
  Principles of Distributed Computing, {PODC}}, pages 292--299. {ACM}, 2006.
\newblock \href {https://doi.org/10.1145/1146381.1146425}
  {\path{doi:10.1145/1146381.1146425}}.

\bibitem{DBLP:journals/dc/AngluinAE08a}
Dana Angluin, James Aspnes, and David Eisenstat.
\newblock Fast computation by population protocols with a leader.
\newblock {\em Distributed Comput.}, 21(3):183--199, 2008.
\newblock \href {https://doi.org/10.1007/s00446-008-0067-z}
  {\path{doi:10.1007/s00446-008-0067-z}}.

\bibitem{DBLP:journals/dc/AngluinAE08}
Dana Angluin, James Aspnes, and David Eisenstat.
\newblock A simple population protocol for fast robust approximate majority.
\newblock {\em Distributed Comput.}, 21(2):87--102, 2008.
\newblock \href {https://doi.org/10.1007/s00446-008-0059-z}
  {\path{doi:10.1007/s00446-008-0059-z}}.

\bibitem{DBLP:journals/dc/AngluinAER07}
Dana Angluin, James Aspnes, David Eisenstat, and Eric Ruppert.
\newblock The computational power of population protocols.
\newblock {\em Distributed Comput.}, 20(4):279--304, 2007.
\newblock \href {https://doi.org/10.1007/s00446-007-0040-2}
  {\path{doi:10.1007/s00446-007-0040-2}}.

\bibitem{DBLP:journals/jcss/Aspnes21}
James Aspnes.
\newblock Clocked population protocols.
\newblock {\em J. Comput. Syst. Sci.}, 121:34--48, 2021.
\newblock \href {https://doi.org/10.1016/j.jcss.2021.05.001}
  {\path{doi:10.1016/j.jcss.2021.05.001}}.

\bibitem{DBLP:journals/corr/abs-2103-10366}
Gregor Bankhamer, Petra Berenbrink, Felix Biermeier, Robert Els{\"{a}}sser,
  Hamed Hosseinpour, Dominik Kaaser, and Peter Kling.
\newblock Fast consensus via the unconstrained undecided state dynamics, 2022.
\newblock SODA 2022, to appear.
\newblock \href {http://arxiv.org/abs/2103.10366} {\path{arXiv:2103.10366}}.

\bibitem{DBLP:conf/podc/BankhamerEKK20}
Gregor Bankhamer, Robert Els{\"{a}}sser, Dominik Kaaser, and Matjaz Krnc.
\newblock Positive aging admits fast asynchronous plurality consensus.
\newblock In {\em {PODC} '20: {ACM} Symposium on Principles of Distributed
  Computing}, pages 385--394. {ACM}, 2020.
\newblock \href {https://doi.org/10.1145/3382734.3406506}
  {\path{doi:10.1145/3382734.3406506}}.

\bibitem{BKKP20}
Stav Ben-Nun, Tsvi Kopelowitz, Matan Kraus, and Ely Porat.
\newblock An $o(\log^{3/2}n)$ parallel time population protocol for majority
  with $o(\log n)$ states.
\newblock In {\em Proceedings of the {ACM} Symposium on Principles of
  Distributed Computing, {PODC} 2020, Virtual Event, Italy, August 3-7, 2020},
  page to appear, 2020.

\bibitem{DBLP:journals/corr/abs-1805-04586}
Petra Berenbrink, Robert Els{\"{a}}sser, Tom Friedetzky, Dominik Kaaser, Peter
  Kling, and Tomasz Radzik.
\newblock Time-space trade-offs in population protocols for the majority
  problem.
\newblock {\em Distributed Computing}, 2020.
\newblock \href {https://doi.org/10.1007/s00446-020-00385-0}
  {\path{doi:10.1007/s00446-020-00385-0}}.

\bibitem{DBLP:journals/dc/BerenbrinkEFKKR21}
Petra Berenbrink, Robert Els{\"{a}}sser, Tom Friedetzky, Dominik Kaaser, Peter
  Kling, and Tomasz Radzik.
\newblock Time-space trade-offs in population protocols for the majority
  problem.
\newblock {\em Distributed Comput.}, 34(2):91--111, 2021.
\newblock \href {https://doi.org/10.1007/s00446-020-00385-0}
  {\path{doi:10.1007/s00446-020-00385-0}}.

\bibitem{DBLP:conf/stoc/BerenbrinkGK20}
Petra Berenbrink, George Giakkoupis, and Peter Kling.
\newblock Optimal time and space leader election in population protocols.
\newblock In {\em Proccedings of the 52nd Annual {ACM} {SIGACT} Symposium on
  Theory of Computing, {STOC}}, pages 119--129. {ACM}, 2020.
\newblock \href {https://doi.org/10.1145/3357713.3384312}
  {\path{doi:10.1145/3357713.3384312}}.

\bibitem{DBLP:conf/podc/BurmanCCDNSX21}
Janna Burman, Ho{-}Lin Chen, Hsueh{-}Ping Chen, David Doty, Thomas Nowak,
  Eric~E. Severson, and Chuan Xu.
\newblock Time-optimal self-stabilizing leader election in population
  protocols.
\newblock In Avery Miller, Keren Censor{-}Hillel, and Janne~H. Korhonen,
  editors, {\em {PODC} '21: {ACM} Symposium on Principles of Distributed
  Computing, Virtual Event, Italy, July 26-30, 2021}, pages 33--44. {ACM},
  2021.
\newblock \href {https://doi.org/10.1145/3465084.3467898}
  {\path{doi:10.1145/3465084.3467898}}.

\bibitem{DBLP:journals/nc/CondonHKM20}
Anne Condon, Monir Hajiaghayi, David~G. Kirkpatrick, and J{\'{a}}n Manuch.
\newblock Approximate majority analyses using tri-molecular chemical reaction
  networks.
\newblock {\em Nat. Comput.}, 19(1):249--270, 2020.
\newblock \href {https://doi.org/10.1007/s11047-019-09756-4}
  {\path{doi:10.1007/s11047-019-09756-4}}.

\bibitem{10.1145/1017460.1017463}
Shlomi Dolev and Jennifer~L. Welch.
\newblock Self-stabilizing clock synchronization in the presence of byzantine
  faults.
\newblock {\em J. ACM}, 51(5):780–799, September 2004.
\newblock \href {https://doi.org/10.1145/1017460.1017463}
  {\path{doi:10.1145/1017460.1017463}}.

\bibitem{doty2021time}
David Doty, Mahsa Eftekhari, Leszek Gąsieniec, Eric Severson, Grzegorz
  Stachowiak, and Przemysław Uznański.
\newblock A time and space optimal stable population protocol solving exact
  majority, 2022.
\newblock FOCS 2022, to appear.
\newblock \href {http://arxiv.org/abs/2106.10201} {\path{arXiv:2106.10201}}.

\bibitem{DBLP:journals/dc/DotyS18}
David Doty and David Soloveichik.
\newblock Stable leader election in population protocols requires linear time.
\newblock {\em Distributed Computing}, 31(4):257--271, 2018.
\newblock \href {https://doi.org/10.1007/s00446-016-0281-z}
  {\path{doi:10.1007/s00446-016-0281-z}}.

\bibitem{DBLP:conf/stoc/DudekK18}
Bartlomiej Dudek and Adrian Kosowski.
\newblock Universal protocols for information dissemination using emergent
  signals.
\newblock In {\em Proceedings of the 50th Annual {ACM} {SIGACT} Symposium on
  Theory of Computing, {STOC}}, pages 87--99. {ACM}, 2018.
\newblock \href {https://doi.org/10.1145/3188745.3188818}
  {\path{doi:10.1145/3188745.3188818}}.

\bibitem{DBLP:conf/soda/GasieniecS18}
Leszek Gasieniec and Grzegorz Stachowiak.
\newblock Fast space optimal leader election in population protocols.
\newblock In {\em Proceedings of the Twenty-Ninth Annual {ACM-SIAM} Symposium
  on Discrete Algorithms, {SODA}}, pages 2653--2667. {SIAM}, 2018.
\newblock \href {https://doi.org/10.1137/1.9781611975031.169}
  {\path{doi:10.1137/1.9781611975031.169}}.

\bibitem{Janson17}
Svante Janson.
\newblock Tail bounds for sums of geometric and exponential variables.
\newblock {\em Statistics \& Probability Letters}, 135, 09 2017.
\newblock \href {https://doi.org/10.1016/j.spl.2017.11.017}
  {\path{doi:10.1016/j.spl.2017.11.017}}.

\bibitem{JK77}
Norman~Lloyd Johnson and Samuel Kotz.
\newblock {\em {U}rn {M}odels and {T}heir {A}pplication: {A}n {A}pproach to
  {M}odern {D}iscrete {P}robability {T}heory}.
\newblock Wiley, 1977.

\bibitem{DBLP:conf/podc/KosowskiU18}
Adrian Kosowski and Przemyslaw Uznanski.
\newblock Brief announcement: Population protocols are fast.
\newblock In {\em Proceedings of the 2018 {ACM} Symposium on Principles of
  Distributed Computing, {PODC}}, pages 475--477. {ACM}, 2018.
\newblock \href {https://doi.org/10.1145/3212734.3212788}
  {\path{doi:10.1145/3212734.3212788}}.

\bibitem{DBLP:books/daglib/0012859}
Michael Mitzenmacher and Eli Upfal.
\newblock {\em Probability and Computing: Randomized Algorithms and
  Probabilistic Analysis}.
\newblock Cambridge University Press, 2005.
\newblock \href {https://doi.org/10.1017/CBO9780511813603}
  {\path{doi:10.1017/CBO9780511813603}}.

\bibitem{DBLP:journals/tcs/SudoNYOKM12}
Yuichi Sudo, Junya Nakamura, Yukiko Yamauchi, Fukuhito Ooshita, Hirotsugu
  Kakugawa, and Toshimitsu Masuzawa.
\newblock Loosely-stabilizing leader election in a population protocol model.
\newblock {\em Theor. Comput. Sci.}, 444:100--112, 2012.
\newblock \href {https://doi.org/10.1016/j.tcs.2012.01.007}
  {\path{doi:10.1016/j.tcs.2012.01.007}}.

\bibitem{DBLP:conf/icalp/TalwarW14}
Kunal Talwar and Udi Wieder.
\newblock Balanced allocations: {A} simple proof for the heavily loaded case.
\newblock In {\em Automata, Languages, and Programming - 41st International
  Colloquium, {ICALP}}, volume 8572 of {\em Lecture Notes in Computer Science},
  pages 979--990. Springer, 2014.
\newblock \href {https://doi.org/10.1007/978-3-662-43948-7\_81}
  {\path{doi:10.1007/978-3-662-43948-7\_81}}.

\end{thebibliography}

\appendix
\section*{Appendix}
\section{Auxiliary Results}
\label{apx:auxiliary-results}

In this appendix we state a number of auxiliary results used in our analysis for completeness.
\subsection{Concentration Inequalities}
We start with classical Chernoff bounds.
\begin{theorem}[\cite{DBLP:books/daglib/0012859}, Theorem $4.4$, $4.5$]
\label{lemma:chernoff_poisson_trials}
\label{thm:chernoff-bounds}
    Let $X_1, \dots, X_n$ be independent Poisson trials with $\Pr[X_i = 1] = p_i$ and let $X = \sum X_i$ with $\Ex{X} = \mu$. Then the following Chernoff bounds hold
    for $0 < \delta \leq 1$:
    \begin{align*}
        \Pr[X > (1+\delta) \cdot\mu] &\leq e^{-\mu \cdot {\delta}^2 / 3}, \quad \text{and} \\
        \Pr[X < (1-\delta) \cdot\mu] &\leq e^{-\mu \cdot {\delta}^2 / 2}.
    \end{align*}
\end{theorem}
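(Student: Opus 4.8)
The plan is to use the standard exponential-moment (Chernoff) method: for a parameter $t>0$, apply Markov's inequality to the random variable $e^{tX}$, exploit the independence of the $X_i$ to factor the moment generating function, bound each factor, and then optimize over $t$. The two tails have the same structure, so I would treat the upper tail $\Pr[X > (1+\delta)\mu]$ in full and mirror the argument for the lower tail using $e^{-tX}$.

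For the upper tail, fix $t > 0$ and write $\Pr[X > (1+\delta)\mu] = \Pr[e^{tX} > e^{t(1+\delta)\mu}] \le \Ex{e^{tX}} / e^{t(1+\delta)\mu}$ by Markov's inequality. By independence we have $\Ex{e^{tX}} = \prod_{i=1}^{n} \Ex{e^{tX_i}}$, and since each $X_i$ is a Poisson (Bernoulli) trial with $\Pr[X_i=1]=p_i$, we get $\Ex{e^{tX_i}} = 1 + p_i(e^t - 1) \le \exp(p_i(e^t-1))$ using $1+x \le e^x$. Multiplying over $i$ and writing $\mu = \sum_i p_i$ yields $\Ex{e^{tX}} \le \exp(\mu(e^t-1))$, hence $\Pr[X > (1+\delta)\mu] \le \exp(\mu(e^t - 1 - t(1+\delta)))$. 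Minimizing the exponent over $t > 0$ gives the optimal choice $t = \ln(1+\delta)$ and the tight bound $\Pr[X > (1+\delta)\mu] \le \left(e^{\delta}/(1+\delta)^{1+\delta}\right)^{\mu}$. The lower tail proceeds identically with $e^{-tX}$: the same factorization gives $\Ex{e^{-tX}} \le \exp(\mu(e^{-t}-1))$, and optimizing at $t = -\ln(1-\delta)$ yields $\Pr[X < (1-\delta)\mu] \le \left(e^{-\delta}/(1-\delta)^{1-\delta}\right)^{\mu}$.

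The remaining, and genuinely delicate, step is to replace these tight but unwieldy expressions by the clean forms $e^{-\mu\delta^2/3}$ and $e^{-\mu\delta^2/2}$. For the upper tail this amounts to the analytic inequality $f(\delta) := \delta - (1+\delta)\ln(1+\delta) \le -\delta^2/3$ on $0 < \delta \le 1$, and for the lower tail to $g(\delta) := -\delta - (1-\delta)\ln(1-\delta) \le -\delta^2/2$ on $0 < \delta < 1$. I would establish each by a one-variable calculus argument: expanding the logarithm as a power series, e.g. $(1+\delta)\ln(1+\delta) = \delta + \delta^2/2 - \delta^3/6 + \cdots$, so that $f(\delta) = -\delta^2/2 + \delta^3/6 - \cdots$, and then comparing against $-\delta^2/3$; equivalently one checks that the auxiliary function $f(\delta)+\delta^2/3$ vanishes at $\delta=0$ together with its derivative and stays nonpositive on the interval. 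The main obstacle is pinning down the constant $1/3$ in the upper tail: unlike the lower tail's $1/2$, the value $1/3$ is exactly what keeps the inequality valid across the entire range up to $\delta=1$, so the calculus estimate must be carried out over the whole interval rather than only for small $\delta$. These facts are precisely the content of Theorems~4.4 and~4.5 of \cite{DBLP:books/daglib/0012859}, to which the statement is attributed.
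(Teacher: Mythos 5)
Your proof is correct and follows exactly the standard moment-generating-function argument of Theorems 4.4 and 4.5 in the cited textbook, which the paper itself quotes without proof: Markov's inequality applied to $e^{\pm tX}$, factorization by independence, the bound $1+x\le e^x$, optimization at $t=\ln(1+\delta)$ (resp.\ $t=-\ln(1-\delta)$), and the analytic inequalities $\delta-(1+\delta)\ln(1+\delta)\le-\delta^2/3$ and $-\delta-(1-\delta)\ln(1-\delta)\le-\delta^2/2$ on $0<\delta\le 1$. Since the paper provides no proof of its own and your argument matches the canonical one in the cited source, there is nothing further to compare.
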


\noindent Next we consider tail bounds for sums of geometrically distributed random variables.
\begin{theorem}[\cite{Janson17}, Theorem $2.1$]
\label{lem:janson}
Let $X = \sum_{i=1}^{n} X_i$ where $X_i, i = 1,\dots,n$, are independent geometric random variables with $X_i \sim Geo(p_i)$ for $p_i \in (0,1]$.
For any $\lambda \geq 1$,
\begin{equation*}
    \Prob{X \geq \lambda \cdot \Ex{X}} \leq \exp({- \min_{i}\{p_i\} \cdot \Ex{X} \cdot (\lambda - 1 - \ln{\lambda})}).
\end{equation*}
\end{theorem}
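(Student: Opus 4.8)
The plan is to prove this as a Chernoff-type bound by controlling the moment generating function (MGF) of $X$ and then optimizing a single free parameter. Since the $X_i$ are independent, for any $t > 0$ in the region where all the MGFs are finite, Markov's inequality applied to $e^{tX}$ gives $\Prob{X \ge \lambda\mu} \le e^{-t\lambda\mu}\prod_{i=1}^n \Ex{e^{tX_i}}$, where $\mu = \Ex{X} = \sum_i 1/p_i$. The whole argument then reduces to bounding $\sum_i \ln\Ex{e^{tX_i}}$ and choosing $t$ well.

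The first key step is a change of variables that makes the finiteness region explicit. For $X_i \sim \mathrm{Geo}(p_i)$ the MGF is $\Ex{e^{tX_i}} = p_i e^t/(1-(1-p_i)e^t)$, which is finite precisely when $e^t < 1/(1-p_i)$. Writing $e^t = 1/(1-s)$ with a new parameter $s$, and setting $p_* = \min_i p_i$, I obtain the clean identity $\Ex{e^{tX_i}} = p_i/(p_i - s)$ together with $t = -\ln(1-s)$. The constraint that makes every factor finite is exactly $s < p_*$, i.e. it is governed by the \emph{smallest} success probability; this is where $\min_i p_i$ enters the bound. Substituting back yields $\ln\Prob{X \ge \lambda\mu} \le \lambda\mu\ln(1-s) - \sum_i \ln(1 - s/p_i)$.

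The second key step isolates $p_*$ from the heterogeneous sum. Here I use that $y \mapsto -\ln(1 - sy)$ is convex on $[0, 1/p_*]$ and vanishes at $0$, so the chord inequality gives $-\ln(1 - s/p_i) \le (p_*/p_i)\cdot(-\ln(1 - s/p_*))$ for each $i$. Summing and using $\sum_i 1/p_i = \mu$ collapses the sum to $p_*\mu\cdot(-\ln(1-s/p_*))$. Setting $s = p_* v$ with $v \in (0,1)$ and applying $\ln(1-p_* v) \le -p_* v$ to the first term then gives $\ln\Prob{X \ge \lambda\mu} \le -p_*\mu(\lambda v + \ln(1-v))$.

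The final step is to optimize: the function $v \mapsto \lambda v + \ln(1-v)$ is maximized at $v^* = 1 - 1/\lambda$ (which lies in $[0,1)$ exactly because $\lambda \ge 1$), where its value equals $\lambda - 1 - \ln\lambda$; choosing this $v$ delivers the claimed bound. I expect the delicate points to be mostly bookkeeping: checking that $s = p_*(1 - 1/\lambda)$ respects $s < p_*$ so that all MGFs stay finite, and confirming the convexity/chord comparison is applied in the correct direction. The genuinely load-bearing idea is that convexity step, since it is what lets the single quantity $\min_i p_i$ control the entire sum of non-identically-distributed variables.
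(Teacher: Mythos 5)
Your proof is correct: every step checks out, including the change of variables $e^t = 1/(1-s)$ with the finiteness constraint $s = p_*(1-1/\lambda) < p_*$, the chord inequality $-\ln(1-s/p_i) \le (p_*/p_i)\cdot\bigl(-\ln(1-s/p_*)\bigr)$ applied in the correct direction, and the optimization yielding $\lambda - 1 - \ln\lambda$ at $v^* = 1 - 1/\lambda$ (with the $\lambda = 1$ case degenerating to the trivial bound). Note that the paper itself imports this statement from Janson without proof; your Chernoff-plus-convexity argument essentially reconstructs Janson's original one, whose key reduction $\Ex{e^{tX_i}} \le \left(\Ex{e^{tX_*}}\right)^{p_*/p_i}$ with $X_* \sim \mathrm{Geo}(p_*)$ is exactly your chord inequality in exponentiated form.
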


\noindent The following theorem considers Pólya urns. 
Recall that Pólya urns are defined as follows.
Initially, the urn contains $a$ red balls and $b$ blue balls.
In each step, a ball is drawn uniformly at random from the urn.
The ball's color is observed, and it is returned into the urn along with an additional ball of the same color.
The \emph{Pólya-Eggenberger distribution}, denoted by $\PE(a,b,n)$, describes the \emph{total number} of red balls that are contained in the urn after $n$ steps.
Note that in some related works the distribution describes the number of additional red balls (instead of the total number).
\begin{theorem}[\cite{DBLP:conf/podc/BankhamerEKK20}, Theorem 1]
\label{thm:polya_bound_total_balls}
Let $A \sim \PE(a,b,n- (a+b))$,  $\mu = (a/(a+b)) n$ and $a+b \geq 1$. Then, for any $\delta$ with $0 < \delta < \sqrt{a}$ and some small constant $1 > \polyaConstSmall>0$ it holds that 
\begin{align*}
    \Pr \Big(A < \mu - \sqrt{a} \cdot \frac{n}{a+b} \cdot  \delta\Big) &< 4 \exp(-\polyaConstSmall \cdot \delta^2) \\
    \Pr \Big(A > \mu + \sqrt{a} \cdot \frac{n}{a+b} \cdot  \delta \Big) &< 4 \exp(-\polyaConstSmall \cdot \delta^2)
\end{align*}
\end{theorem}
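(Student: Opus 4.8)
The plan is to prove both tails of the Pólya--Eggenberger concentration bound through the classical mixture (de Finetti) representation of a Pólya urn, which reduces the problem to a binomial concentration together with a Beta concentration. Recall that the normalized red-ball count of a Pólya urn is a bounded martingale; it therefore converges almost surely to a limit $W$, and it is classical that $W \sim \mathrm{Beta}(a,b)$ and that, conditionally on $W$, the successive draws are i.i.d.\ $\mathrm{Bernoulli}(W)$. Writing $m = n-(a+b)$ for the number of draws, this yields the distributional identity $A \stackrel{d}{=} a + \Bin(m,W)$. Since $\Ex{W} = a/(a+b)$ one checks $\Ex{A} = a + m\cdot a/(a+b) = an/(a+b) = \mu$, so the claimed centering is correct. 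I would first record this representation and then treat the upper and lower tails symmetrically.

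The next step is to split the fluctuation into a sampling part and a mixing part,
\[
A - \mu \;=\; \bigl(\Bin(m,W) - mW\bigr) \;+\; m\Bigl(W - \tfrac{a}{a+b}\Bigr),
\]
and to bound each part against half of the target deviation $t = \sqrt a\cdot\frac{n}{a+b}\cdot\delta$ by a union bound. The sampling part is routine: conditioned on $W = p$, the count $\Bin(m,p)$ is a sum of independent indicators, so \cref{thm:chernoff-bounds} yields an $\exp(-\Omega(\delta^2))$ tail, with the hypothesis $\delta < \sqrt a$ ensuring that the deviation $t/2$ stays within the sub-Gaussian window of the binomial. I expect no difficulty here beyond checking that $t/2$ dominates a constant multiple of the conditional standard deviation $\sqrt{mp(1-p)}$ for the typical values of $p=W$.

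The genuinely delicate step is the mixing part, namely a sharp tail bound for $W \sim \mathrm{Beta}(a,b)$ of the form $\Prob{\,|W - \tfrac{a}{a+b}| > \tfrac{\sqrt a}{a+b}\cdot\tfrac{\delta}{2}\,} \le 2\exp(-\Omega(\delta^2))$. The subtlety is that the off-the-shelf sub-Gaussian proxy $1/(4(a+b+1))$ for a Beta variable is far too lossy in the minority regime $b \gg a$: the true standard deviation of $W$ is $\sqrt{ab}/\bigl((a+b)\sqrt{a+b+1}\bigr)$, so the deviation scale $\sqrt a/(a+b)$ appearing in the statement equals exactly $\sqrt{(a+b+1)/b}\ge 1$ standard deviations per unit of $\delta$. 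To recover the clean $\exp(-\polyaConstSmall\cdot\delta^2)$ form I would estimate the incomplete Beta integral directly, exploiting log-concavity of the Beta density to compare its tail against a Gaussian of the correct variance; the constraint $\delta < \sqrt a$ is precisely what keeps the shifted shape parameters $a\pm\Theta(\sqrt a\,\delta)$ positive and in the Gaussian-like core of the density, so that no polynomial correction factor escapes control. I expect this Beta estimate to be the main obstacle, since it is where both the exponent constant $\polyaConstSmall$ and the restriction $\delta<\sqrt a$ originate; once it is in hand, combining the two tails and absorbing universal constants into $\polyaConstSmall$ and the leading factor $4$ is immediate.
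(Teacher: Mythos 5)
There is no in-paper proof to compare against: the paper imports this statement verbatim, with citation, as Theorem~1 of Bankhamer et al.\ \cite{DBLP:conf/podc/BankhamerEKK20}, and uses it as a black box (in the proofs of \cref{lem:polya} and \cref{pro:improved-bounds}). Judged on its own merits, your plan is sound and would give a correct self-contained proof. The exchangeability/de Finetti reduction $A \stackrel{d}{=} a + \Bin(m,W)$ with $W \sim \mathrm{Beta}(a,b)$, $m = n-(a+b)$, is classical and correctly stated, the centering $\Ex{A} = an/(a+b) = \mu$ checks out, and splitting the deviation $t = \sqrt{a}\,\delta\, n/(a+b)$ into two halves with a union bound naturally accounts for the leading constant. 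Two points need care. First, in the sampling half you cannot apply Chernoff uniformly in $p$: for atypical $W$ (say $a=1$, $b$ large, $W$ near $1/2$) the conditional bound is useless, so you must first restrict to the good event $|W - a/(a+b)| \le t/(2m)$, on which the conditional variance is at most $ma/(a+b) + t/2$ and a Bernstein-type bound yields exponent $\Omega(\min(n\delta^2/(a+b), \sqrt{a}\,\delta)) = \Omega(\delta^2)$ using $n \ge a+b$ and $\delta < \sqrt{a}$; your phrase ``typical values of $p = W$'' gestures at this, but the order of the two bounds matters. Second, your diagnosis of the Beta tail as the crux is accurate, and the scale computation $\sqrt{(a+b+1)/b} \ge 1$ standard deviations is correct; note, however, that log-concavity of the Beta density requires $a, b \ge 1$, so the integer corner cases $a = 0$ or $b = 0$ (where $A$ is deterministic and the claim is vacuous or trivial) must be set aside, or you can instead use the representation $W = G_a/(G_a + G_b)$ with independent Gamma variables, where the restriction $\delta < \sqrt{a}$ is precisely what keeps the $\mathrm{Gamma}(a)$ deviations in their sub-Gaussian regime. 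With these repairs the argument goes through and plausibly parallels the cited original, which this paper does not reproduce.
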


\noindent Finally, we state a result regarding the undecided dynamics introduced in \cite{DBLP:journals/dc/AngluinAE08}.
\begin{theorem}[\cite{DBLP:journals/dc/AngluinAE08}, Theorem 3]
\label{thm:undecided-dynamics}
Let $\epsilon > 0$. If the difference between the initial majority and initial minority populations is $\Omega(n^{3/4+\epsilon})$ and there is exactly one active agent, then with high probability,
the epidemic-triggered approximate majority protocol converges to the initial majority value.
\end{theorem}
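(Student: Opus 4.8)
The plan is to decompose the epidemic-triggered protocol into two interacting components: the three-state approximate-majority dynamics ($bx\to xx$, $by\to yy$, $xy\to bb$) that performs the actual computation, and the epidemic started by the single active agent, which serves as a synchronous timer telling every agent when to freeze its opinion as output. Writing $X_t, Y_t, B_t$ for the number of agents in states $x$ (majority), $y$ (minority), and $b$ (blank), I would track the signed gap $g_t = X_t - Y_t$. The crucial structural observation is that cancellation reactions $xy\to bb$ leave $g_t$ unchanged, so $g$ moves only through adoption reactions $bx, by$; a short computation gives the conditional drift $\Ex{g_{t+1}-g_t\mid \mathcal F_t} = \Theta(B_t\,g_t/n^2)$, so $g$ is a submartingale whose sign points at the true majority. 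The goal is to show that $g$ grows to $\Theta(n)$ without ever touching $0$, and then that the epidemic commits everybody to the sign of $g$.

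I would organize the dynamics into the standard phases. First a \emph{buildup} phase: starting from $X\approx Y\approx n/2$, cancellation reactions fire at rate $\propto X_t Y_t$, so the blank count $B_t$ reaches a constant fraction of $n$ within $O(\log n)$ parallel time; this can be shown by comparing $B_t$ to a logistic-type recursion and applying \cref{thm:chernoff-bounds}. Once $B_t = \Theta(n)$ the drift becomes $\Ex{\Delta g\mid\mathcal F_t} = \Theta(g_t/n)$ per interaction, i.e.\ $\Theta(g_t)$ per parallel time, so $g$ grows exponentially; coupling $g$ with a multiplicatively-drifting process and applying a geometric tail bound (\cref{lem:janson}) shows $g$ reaches $\Theta(n)$ within $O(\log n)$ parallel time with failure probability $n^{-\omega(1)}$. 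At that point the minority is negligible, and a final adoption sweep run over the remaining interactions turns all blanks into the majority opinion.

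The hard part, and the step that forces the $\Omega(n^{3/4+\epsilon})$ bias, is controlling the fluctuations of $g$ so that its sign is \emph{never} lost before the drift takes over. I would decompose $g_t = g_0 + D_t + M_t$ into its predictable drift part $D_t$ (nonnegative, pushing away from $0$) and a martingale part $M_t$ whose increments are bounded by $1$ and whose predictable quadratic variation accumulates at rate $\Theta(B_t(n-B_t)/n^2)$ per interaction, giving total quadratic variation $\Theta(n\log n)$ over the whole run. The danger window is while $g_t$ is still comparable to the noise, where $D_t$ is too small to help and $g$ behaves like an unbiased $\pm1$ walk; there I would bound the probability that $g$ reaches $0$ using a Freedman/Azuma-type inequality applied up to the stopping time $\sigma=\inf\{t:g_t\in\{0,\Theta(n)\}\}$, or equivalently by exhibiting a supermartingale of the form $\exp(-\lambda g_t)$ on the event that $g$ stays below $\Theta(n)$ and applying optional stopping. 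The escape probability is governed by the ratio of the initial gap to the accumulated standard deviation, and requiring $g_0 = \Omega(n^{3/4+\epsilon})$ makes this probability $n^{-\omega(1)}$; this is precisely the quantitative bottleneck of the argument.

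Finally, I would handle the epidemic trigger and the union bound. By a two-sided version of \cref{lem:one-way-epidemic}, the epidemic started by the unique active agent infects every agent within $\Theta(n\log n)$ interactions, and not before $\Omega(n\log n)$ interactions, with failure probability $n^{-\omega(1)}$; this double-sided timing is what makes it a usable clock. I would choose the protocol constants so that the epidemic's completion time lies strictly after the $O(n\log n)$ interactions needed for the dynamics above to drive $g$ to $\Theta(n)$ and finish the broadcast sweep, so that when each agent freezes its output upon being infected, the configuration it freezes is already the correct near-consensus. Conditioning on the joint success of the buildup, exponential-growth, sign-preservation, and epidemic-timing events — each of probability $1-n^{-\omega(1)}$ — and taking a union bound yields convergence to the initial majority value with high probability, as claimed.
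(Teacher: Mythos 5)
First, a framing note: the paper does not prove this statement at all — it is quoted verbatim from Angluin, Aspnes, and Eisenstat \cite{DBLP:journals/dc/AngluinAE08} (their Theorem 3) as an auxiliary result in the appendix, so there is no in-paper proof to compare against; your proposal has to be measured against the protocol and analysis in that source. Measured that way, there are two genuine gaps. First, your ``crucial structural observation'' is false for the protocol the theorem is about. In the AAE three-state protocol the cancellation is \emph{one-sided}: an interaction between $x$ and $y$ blanks only one of the two agents, i.e.\ $xy \to xb$ and $yx \to yb$, not $xy \to bb$. Consequently each cancellation changes the gap $g = X - Y$ by $\pm 1$, with the two directions occurring at equal rates $XY/n^2$; cancellations do not leave $g$ unchanged — they are precisely the dominant noise source while $X \approx Y \approx n/2$, contributing quadratic variation $\Theta(1)$ per interaction at the start. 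Your variance accounting, which attributes all fluctuation of $g$ to adoption reactions at rate $\Theta(B(n-B)/n^2)$, misses this term. The high-level drift-plus-martingale architecture survives the correction (the totals are of the same order), but the step as written is wrong, and notice also that your own bound — initial gap versus accumulated standard deviation $\sqrt{\Theta(n\log n)}$ — would establish the conclusion already for bias $n^{1/2+\epsilon}$. So your sketch never actually engages with whatever phenomenon forces the stated $3/4$ exponent; asserting that $g_0 = \Omega(n^{3/4+\epsilon})$ ``makes this probability $n^{-\omega(1)}$'' is true but does not locate the bottleneck.

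The second gap is that you invert the role of the epidemic. In the epidemic-triggered protocol of the cited theorem, the single active agent's epidemic is a \emph{start} signal: agents are initially inactive and do not participate in the majority dynamics until infected, so the dynamics run on a growing active subpopulation, interleaved with the activation wave. You instead treat the epidemic as an output-freeze timer that fires \emph{after} the dynamics have converged, and you propose to ``choose the protocol constants so that the epidemic's completion time lies strictly after'' convergence. There are no such constants to tune — the protocol is fixed, and both the epidemic and the approximate-majority dynamics take $\Theta(n \log n)$ interactions, so this ordering cannot simply be arranged. Worse, the hard part of the actual theorem is exactly the interleaving your architecture avoids: while activation is incomplete, the gap among \emph{active} agents can be badly skewed relative to the global bias, and controlling that skew over the $\Theta(n\log n)$ activation window is where the stronger $\Omega(n^{3/4+\epsilon})$ hypothesis is consumed. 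As it stands, your proposal (modulo the variance fix above) addresses a different protocol — synchronously started dynamics with a terminal freeze — rather than the epidemic-triggered protocol named in the statement.
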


\subsection{Simple Analysis of the One-Way Epidemic}
In this appendix we give a simple proof of the one-way epidemic that is based on the analysis in \cite{DBLP:journals/dc/AngluinAE08a}.
\label{apx:one-way-epidemic}
Assume that at time $0$ one agent is infected.
Let $T_i$ denote the random variable counting the number of interactions until one of the $n-i$ susceptible agents initiate an interaction with one of the $i$ informed agents.
$T_i$ is geometrically distributed with success probability $p_i = {i \cdot (n-i)}/{n^2}$.
We calculate the expected number of interactions until all agents are infected, i.e., until there are $n-1$ successes.
First, a simple calculation yields a lower and upper bound on the expected value of the sum of $T_i$'s.
\begin{align*}
    & \Ex{ \sum_{i=1}^{n-1} T_i } 
    = \sum_{i=1}^{n-1} \frac{1}{p_i} 
    = n^2 \sum_{i=1}^{n-1} \frac{1}{i \cdot (n-i)} 
    \leq 2 \cdot n^2 \sum_{i=1}^{\ceil{\frac{n-1}{2}}} \frac{1}{i \cdot (n-\ceil{\frac{n-1}{2}})} 
    \leq 4 \cdot n \cdot \log n\\
    & \Ex{ \sum_{i=1}^{n-1} T_i } 
    = n^2 \sum_{i=1}^{n-1} \frac{1}{i \cdot (n-i)} 
    \geq n \sum_{i=1}^{n-1} \frac{1}{i}
    \geq n \cdot \ln(n)\\
\end{align*}
Since all trials are independent, we can also calculate an upper bound.    
An application of \cref{lem:janson} with $\lambda = 9(c+4)/(4\log(e))$ and $\min_{i}\{ p_i \} = (n-1)/n^2$ yields (for sufficiently large $n$)
\begin{align*}
    \MoveEqLeft \Prob{\sum_{i=1}^{n-1} T_i > 6 (c+4) \cdot n \cdot \ln n}
    \leq \Prob{\sum_{i=1}^{n-1} T_i  > \lambda \cdot \Ex{ \sum_{i=1}^{n-1} T_i }}\\
    & < \exp\left(-\frac{n-1}{n^2} \cdot n \cdot \ln n \cdot \left(\lambda-1-\ln \lambda \right)\right)
    \leq n^{- \left( 1 - \frac{1}{n} \right) \left(\lambda-1-\ln \lambda \right) }
    \leq n^{-(5+c)}.
\end{align*}

\section{Omitted Proofs for the Loosely-Stabilizing Phase Clocks}

In this appendix we give additional details and the full proofs for the loosely-stabilizing phase clock that have been omitted from \cref{sec:maintenance,sec:recovery}.

\lemmaRecoveryOne*

\begin{proof}
    Recall the main idea of the proof:
    If there are not too many agents in $I_{gather}$, the reset rule prevents agents from reaching the end of $I_{gather}$.
    Agents may still enter $I_{launch}$ by hopping, but if no agent enters state $0$, eventually there is no agent left in state $0$ to hop onto.
    This applies to all other states of $I_{launch}$ as well.
    Eventually, there are no agents left in $I_{launch}$ to hop onto.
    Then, no agent leaves $I_{gather}$ until a reset is sufficiently unlikely such that an agent can reach the end of the interval.
    We use the rather arbitrary threshold of $0.9n$ agents in the definition of an \emph{almost homogeneous gathering configuration}.

    Formally, we divide time into \emph{stages} of $2\tau n$ consecutive interactions as follows. 
    W.l.o.g.\ assume that $t=0$.
    For $i \geq 0$, we define the $i$'th stage as $\stage{i} = \intcc{i \cdot 2 \tau n, (i+1) \cdot 2\tau n - 1}$.
    Our proof is based on a case distinction over the following two predicates:
    \begin{itemize}
        \item $\FewInGather(\stage{i})$ holds if for all configurations $C_{t}$, $t \in \stage{i}$ there are less than $0.9 \cdot n$ agents in $I_{gather}$.
        \item $\NoBroadcast(\stage{i})$ holds if none of the interactions in $\stage{i}$ is of the type $(|Q|-1,q_2) \rightarrow (0,q_2), q_2 \in I_{gather}$, i.e., no agent enters $I_{launch}$ without hopping.
    \end{itemize}
    
    \medskip
    
\noindent We divide the proof into three parts. We show that the following statements each hold with probability at least $1-1/(9n)$.
    \begin{enumerate}
        \item $\exists i \in \intcc{0,5}: \NoBroadcast(\stage{i}) \land \NoBroadcast(\stage{i+1})$.
        \item for any fixed $i \geq 0: \NoBroadcast(\stage{i}) \land \NoBroadcast(\stage{i+1}) \land \FewInGather(\stage{i+1}) \implies \exists j \in \LdauOmicron{w}: \NoBroadcast(\stage{i+j}) \land \NoBroadcast(\stage{i+j+1}) \land \overline{\FewInGather}(\stage{i+j+1})$ 
        \item for any fixed $i \geq 0: \NoBroadcast(\stage{i}) \land \NoBroadcast(\stage{i+1}) \land \overline{\FewInGather}(\stage{i+1}) \implies \exists t \in \stage{i+1}: C_{t}$ is an almost homogeneous gathering configuration.
    \end{enumerate}
        
    Throughout the proof we need a bound on the number of initiated interactions per agent during a stage.
    Let $X_u(i)$ denote the number of interactions agent $u$ initiates in $\stage{i}$.
    From \cref{lem:moving} it follows with $\delta=1$ that $X_u(i) > \tau$ with probability at least $1-n^{-36\cdot(c+4)}$
    and that $X_u(i) < 4 \cdot \tau$ with probability at least $1-n^{-24\cdot(c+4)}$.
    By a union bound, we get for any fixed $\stage{i}$ that
    \begin{align}
        \label{fact:moving}
        \forall u \in V: \tau < X_u(i) < 4 \cdot \tau \text{ with probability at least } 1-n^{-36\cdot(c+4)+1}-n^{-24\cdot(c+4)+1}.
    \end{align}
    
    \paragraph{Statement 1}
    We first show that \whp there exists an $i \in \intcc{0,3}$ such that $\FewInGather(\stage{i}) \lor \NoBroadcast(\stage{1})$.
    Then, we repeat the argument for $i \in \intcc{1,4}$.
    Finally, we show that $\FewInGather(\stage{i}) \implies \NoBroadcast(\stage{i+1}) \land \NoBroadcast(\stage{i+1})$.
    
    For the sake of the argument, assume that $\overline{\FewInGather}(\stage{0}) \land \overline{\FewInGather}(\stage{1}) \land \overline{\FewInGather}(\stage{2}) \land \overline{\FewInGather}(\stage{3}) \land \overline{\NoBroadcast}(\stage{1})$ holds. We show that \whp this can not happen. The case for $\overline{\FewInGather}(\stage{1}) \land \overline{\FewInGather}(\stage{2}) \land \overline{\FewInGather}(\stage{3}) \land \overline{\FewInGather}(\stage{4}) \land \overline{\NoBroadcast}(\stage{2})$ is very similar.
    
    From $\overline{\FewInGather}(\stage{i})$ it follows that there exists a configuration $C_{t_i} (0\le i\le 3)$ $, t_i \in \stage{i}$ where there are $\ge 0.9\cdot n$ agents in $I_{gather}$.
    Any agent that leaves $I_{gather}$ during $\stage{i}$ does not return to $I_{gather}$ within the next three stages \whp.
    This follows from \cref{fact:moving} and $|I_{work}| > 16 \tau$.
    Therefore, no more than $0.1n$ agents may leave $I_{gather}$ during $\stage{i} \cup \stage{i+1} \cup \stage{i+2}$.
    Thus, \whp there are always at least $0.8\cdot n$ agents in $I_{gather}$ for all $t \in \stage{0} \cup \stage{1} \cup \stage{2}$.
    
    Let us fix configuration $C_{t'}\in \stage{1}$ with an agent $v_1$ in state $0$. 
    By our assumption such a configuration exists, otherwise $\NoBroadcast(\stage{1})$ holds.
    From above it follows that at least $0.8n$ agents are in $I_{gather}$ in $C_{t'}$.
    In the following, we will show that \whp at least $0.1\cdot n$ agents leave $I_{gather}$ in $\intcc{t',t'+\tau/4 \cdot n}$.
    Note that this contradicts our assumption.
    
    The proof that at least $0.1\cdot n$ agents leave $I_{gather}$ is similar to the one of \cref{lem:gather->launch}. 
    The main difference is that some agents are already in $I_{launch}$ and can leave $I_{launch}$ during that time.
    Fix an arbitrary set of agents that contains $0.7n-1$ agents in $I_{gather}$ and $v_1$.
    We label these agents as \emph{relevant}.
    An interaction between two relevant agents occurs with probability at least $p=0.7^2-0.21/(n-1)$.
    Let $\tilde{t} = n\tau/4$.
    By \cref{thm:chernoff-bounds}, with probability at least $1-n^{-3(c+4)}$ there are at least $\tilde{t}/2$ such interactions within the time interval $\intcc{t,\tilde{t}}$.
    From \cref{lem:one-way-epidemic} (with $\lambda = 9(c+4)/(8\log(e))$) it follows (for sufficiently large $c$) that with probability at least $1-n^{-(c/2+2)}$ all relevant agents enter $I_{launch}$ in $\tilde{t}/2$ interactions.
    The remainder of the proof (that \whp these agents do not leave $I_{launch}$ during $\intcc{0,\tilde{t}}$) is identical to that of \cref{lem:gather->launch:2} of \cref{lem:gather->launch}.
    This holds regardless of the choice of the relevant agents.
    
    In conclusion, our assumption must be false.
    Thus, no agent leaves $I_{gather}$ on its own in $\stage{2} \cup \stage{3}$ or in at least one of the five stages there are always less than $0.9\cdot n$ agents in $I_{gather}$. 
    
    We complete the proof by showing the following for any fixed $i \geq 0$:
    \begin{align}
        \label{fact:resetting}
        \FewInGather(\stage{i}) \implies \NoBroadcast(\stage{i+1}) \land \NoBroadcast(\stage{i+2}) \text{ with probability at least }1-n^{-113(c+4)+1}.
    \end{align}
    We show that if $\FewInGather(\stage{i})$ holds, all agents either hop or reset and that any agent such agent does not reach the end of $I_{gather}$.
    Recall that agents in $I_{gather}$ reset to the beginning of $I_{gather}$ when they interact with agents in $I_{work}$, and they hop into $I_{launch}$ when they interact with agents in $I_{launch}$.
    Consider $\stage{i}$ where $\FewInGather(\stage{i})$ holds, i.e., there are less than $0.9\cdot n$ agents in $I_{gather}$.
    It follows that, with probability at least $0.1$ per initiated interaction, an agent in $I_{gather}$ resets or leaves $I_{gather}$.
    We will call this a \emph{success}.
    It is easy to see that each agent has \whp at least one success in $\stage{i}$.
    From \cref{fact:moving} it follows that each agent initiates at least $\tau$ interactions in $\stage{i}$ with probability at least $1-n^{-12\cdot(c+4)+1}-n^{-8\cdot(c+4)+1}$.
    For any agent in $I_{gather}$, the probability that it has at least one success in $\tau$ interactions is at least $1-(1-0.1)^{\tau} = 1-n^{-113(c+4)}$.
    It follows from the union bound that all agents that are initially in $I_{gather}$ during $\stage{i}$ have a success with probability at least $1-3n^{-8(c+4)+1}$.
    It remains to show that an agent that resets is unlikely to leave $I_{gather}$ without hopping.
    Note that it is even more unlikely for agents which hop into $I_{launch}$ since they have to walk around the whole clock. 
    From \cref{fact:moving} it follows that each agent initiates at most $12 \cdot \tau$ interactions in three stages \whp.
    Since $|I_{gather}| > 12 \cdot \tau$, no successful agent leaves $I_{gather}$ on its own in $\stage{i}$, $\stage{i+1}$, and $\stage{i+2}$.
    This implies $\NoBroadcast(\stage{i+1}) \land \NoBroadcast(\stage{i+2})$.
    
    As we have shown above, \whp $\NoBroadcast(\stage{2}) \land \NoBroadcast(\stage{3})$ holds or there exists $i \leq 4$ such that $\FewInGather(\stage{i})$ holds.
    Statement 1 then follows from \cref{fact:resetting}.
    
    \paragraph{Statement 2}
    We begin with the following observation.
    Let $L$ be the set of agents at time $i\cdot 2\tau n$ in $I_{launch}$.
    By \cref{lem:moving} (with $\delta=1$) we know that each agent initiates at least $\tau$ interactions in $\stage{i}$ with probability at least $1-n^{-12\cdot(c+4)}$.
    Thus, if $\NoBroadcast(\stage{i})$ holds, no agent initially in $I_{launch}$ stays in that interval until the beginning of the next phase.
    We will show that this also holds for all other agents that hop.
    Consider an agent $u \in V\setminus L$ that hops on agent $v \in L$ during $\stage{i}$.
    The agents $u$ and $v$ start from the same state once $u$ has hopped, thus they have the same probability to leave $I_{launch}$.
    Since all agents in $L$ leave $I_{launch}$ \whp, this holds for $u$ as well.
    This argument extends to all agents $u \in V\setminus L$ via induction.
    
    We use this fact to prove Statement 2 as follows.
    $\NoBroadcast(\stage{i})$ holds, thus there is no agent in $I_{launch}$ in $\stage{i+1}$ \whp.
    Thus, an agent must leave $I_{gather}$ before agents can hop.
    If $\FewInGather(\stage{i+1})$ holds, the agents reset and do not leave $I_{gather}$ (see \cref{fact:resetting}).
    Therefore, in $\NoBroadcast(\stage{i+2}) \land \NoBroadcast(\stage{i+3})$ holds and no agent leaves during $\stage{i+2} \cup \stage{i+3}$ at all.

    We can repeat the argument until we reach a configuration for some stage $i+j$ where the number of agents in $I_{gather}$ reaches $0.9\cdot n$ agents.
    We have seen in the proof of \cref{lem:launch->gather:1} of \cref{lem:launch->gather} that all agents reach $I_{gather}$ in $\LdauOmicron{n \cdot w \cdot \tau}$ interactions, i.e., it takes $\LdauOmicron{w}$ stages, until sufficiently many agents have returned.
    Therefore, the statement holds for sufficiently large $c$.
    
    \paragraph{Statement 3}
    $\NoBroadcast({\stage{i}})$ implies that \whp no agent is in $I_{gather}$ at the beginning of $\stage{i+1}$ (see Statement 2).
    Thus, agents cannot hop.
    $\NoBroadcast({\stage{i+1}})$ then implies that no agent leaves $I_{gather}$ at all during $\stage{i+1}$.
    Therefore, there is no agent in $I_{launch}$ for all $t \in \stage{i+1}$.
    $\overline{\FewInGather}(\stage{i+1})$ implies the existence of a configuration $C_{t}$ with at least $0.9n$ agents in $I_{gather}$.
    Then, $C_{t}$ is an almost homogeneous gathering configuration.
\end{proof}

\lemmaRecoveryTwo*

\begin{proof}
Recall that in $C_{t}$ no agent is in $I_{launch}$ and at least $0.9\cdot n$ many agents are in $I_{gather}$ by definition.
Thus, agents cannot hop until an agent enters $I_{launch}$ on its own.
If no agent enters $I_{launch}$ on its own before the last agent enters $I_{gather}$, we are in a homogeneous gathering configuration.
Otherwise, we will show that all agents leave $I_{gather}$ at most once before we reach a homogeneous gathering configuration.

Formally, let $t+t''$ be the first time after $t$ where an agent in $I_{gather}$ reaches state $0$ without hopping.
Then, no agent has left $I_{gather}$ during $\intcc{t,t+t''}$ since there is no agent in $I_{launch}$ to hop on.
Let $t' = n \cdot (|I_{launch}|+|I_{work}|)/(1-(2 \cdot \sqrt{|I_{work}|/\tau})^{-1})$.
Any agent not in $I_{gather}$ needs at most $t'$ interactions to reach $I_{gather}$ with probability at least $1-n^{-(c+4)}$ (see \cref{lem:launch->gather:1} of \cref{lem:launch->gather}).
If $t'' > t'$, the proof is complete.
Therefore, for the remainder of this proof, we assume that $t'' \leq t'$ and w.l.o.g.\ for the sake of brevity of notation that $t+t'' = 0$.
We know for $C_{t+t''}$ that there are at least $0.9n-1$ agents in $I_{gather}$, one agent is in state $0$ and the remaining agents are in $I_{work}$.
W.l.o.g.\ we assume that $v_1$ is in state $0$.

We split the remainder of the proof into two parts.
\begin{enumerate}
    \item \Whp, after $\tau/4 \cdot n$ interactions there are at least $0.9\cdot n$ agents in $I_{launch}$ and no agent is in a state $ > |I_{launch}| + |I_{work}| + \tau$.
    \item Let $C_{t}$ be a configuration where there are at least $0.9\cdot n$ agents in $I_{launch}$ and no agent in a state $ \geq |I_{launch}| + |I_{work}| + \tau$. Then \whp, no agent leaves $I_{gather}$ before all agents have reached $I_{gather}$.
\end{enumerate}
\paragraph{Statement 1}
    We label $v_1$ and all agents that are in $I_{gather}$ at time $0$ as \emph{relevant}.
    An interaction between two relevant agents occurs with probability at least $p=0.81-0.09/(n-1)$.
    Let $\tilde{t} = n\tau/4$.
    By \cref{thm:chernoff-bounds}, \whp there are at least $\tilde{t}/2$ such interactions within the time interval $\intcc{t,\tilde{t}}$.
    From \cref{lem:one-way-epidemic} (with $\lambda = 9(c+4)/(8\log(e))$) it follows (for sufficiently large $c$) that with probability at least $1-n^{-(c/2+2)}$ all relevant agents enter $I_{launch}$ in $\tilde{t}/2$ interactions.
    The remainder of the proof (that \whp these agents do not leave $I_{launch}$ during $\intcc{0,\tilde{t}}$) is identical to that of \cref{lem:gather->launch:2} of \cref{lem:gather->launch}.
    
    Now we consider the remaining not relevant agents.
    Fix such an agent $u$.
    If $u$ hops, we have shown that at time $\tilde{t}$ $u$ is in $I_{launch}$.
    Otherwise, from \cref{lem:moving} it follows with $\delta=1$ that agent $u$ initiates less than $\tau$ interactions in $\intcc{0,\tilde{t}}$ with probability at least $1-n^{-2\cdot(c+4)}$.
    Therefore, $u$ does not reach a state $> |I_{launch}| + |I_{work}| + \tau$.
    The first statement follows from the union bound over the remaining agents.
    
\paragraph{Statement 2}
    In the following, we generalize the proof of \cref{lem:launch->gather:1} of \cref{lem:launch->gather}.
    Consider a configuration $C_{t}$ where there are at least $0.9\cdot n$ agents in $I_{launch}$ and no agent is in a state $ > |I_{launch}| + |I_{work}| + \tau$ and for the sake of brevity assume w.l.o.g.\ that $t=0$.
    For the analysis, we split the agents in two sets $L$ and $G$.
    Let $L \subset V$ be the set of all agents that start in $I_{launch}$ or hop during the first $2\tau n$ interactions
    and let $G = V\setminus L$.
    We will show that the agents in $L$ arrive in $I_{gather}$ closely together.
    The analysis of this part is almost identical to the original proof.
    Additionally, we will show that the agents in $G$ remain within the first $\tau$ states of $I_{gather}$ until the first agent of $L$ enters $I_{gather}$.
    The size of $I_{gather}$ is sufficiently large such that this 'head start' is not enough for these agents to reach the end of $I_{gather}$ before the last agent enters $I_{gather}$.
    
    First we consider the agents in $L$.
    Let $t_a$ be the first interaction in which an agent of $L$ enters $I_{gather}$.
    We have seen in \cref{lem:launch->gather} that \whp $t_a > 2\tau n$ and that no agent $u \in L$ is still in $I_{launch}$ after $2\tau n$ interactions.
    Furthermore, we have seen that after $t' = n \cdot (|I_{launch}|+|I_{work}|)/(1-(2 \cdot \sqrt{|I_{work}|/\tau})^{-1})$
    all agents $u \in L$ have reached $I_{gather}$.
    Let $t_b$ be the first interaction in which an agent of $L$ enters state $|I_{launch}|+|I_{work}|+|I_{gather}|-\tau$.
    We show that \whp $t_b > t'$.
    Let $X_u(t')$ denote the number of interactions agent $u \in L$ initiates before time $t'$.
    From \cref{lem:moving} it follows with $\delta = (|I_{launch}| +|I_{work}|-\tau) \cdot (1-1/(4+2\cdot \sqrt{10+w})) /  (|I_{launch}|+|I_{work}|)> 3/4$
    that $X_u(t') < |I_{work}| + |I_{gather}| - \tau$ with probability at least $1-n^{-9(1+c/4)}$.
    Thus, $\Clock{u}(t') \leq \Clock{u}(0) + X_u(t') < |I_{launch}| + |I_{work}| + |I_{gather}| - \tau - 1$ with probability at least $1-n^{-9(1+c/4)}$.
    By a union bound, this holds for all agents in $L$ with probability at least $1-n^{-(8+9c/4)}$.
    
    The remaining agents start in a state $< |I_{launch}| + |I_{work}| + \tau$.
    To leave $I_{gather}$ on their own, they must increase their counter at least $|I_{gather}|-\tau$ times without resetting or hopping.
    Each agent initiates at most $4 \cdot \tau$ interactions during $2\tau n$ \whp (see \cref{fact:moving}).
    Therefore, no agent increase their state by more than $4\tau$ before $t_a$ \whp, i.e., they remain in $I_{gather}$.
    
    On the other hand, there are less than $0.1n$ agents in $I_{gather}$.
    Thus, with probability at least $0.9$ per interaction, an agent $u \in I_{gather}$ resets.
    We will call this a \emph{success}.
    It is easy to see that \whp $u$ has at least one success in $t = \tau/2 \cdot n$ interactions.
    From \cref{lem:moving} with $\delta = 1$ it follows that $u$ initiates at least $\tau/4$ interactions in $t$ interactions with probability at least $1-n^{-2\cdot(c+4)}$.
    For any agent in $I_{gather}$, the probability that in $\tau/2 \cdot n$ interactions is at least $p=1-(1-0.9)^{\tau/4} = 1-n^{-3\cdot \log(10)\cdot (c+4)}$.
    We now extend this to all agents and $t_a$ interactions.
    Recall that the first agent in $L$ enters $I_{gather}$ at time $t_a < t'$.
    Agent $u$ resets at least once during each $\tau/2 \cdot n$ interactions during $t'$ interactions with probability at least
    $p^{2t'/(\tau \cdot n)} \geq 1-\LdauTheta{w} \cdot n^{-3\cdot \log(10) \cdot (c+4)}$.
    For sufficiently large $c$ that $u$ is the first $\tau$ states of $I_{gather}$ when the first agent of $L$ enters $I_{gather}$.
    A simple exchange argument yields that the probability for $u$ to not leave $I_{gather}$ before $t'$ is majorized by the probability of an agent $v \in L$ to not reach state $|Q| - \tau$ before $t'$.
    A union bound over all agents completes the proof.
\end{proof}

\section{Additional Details for the Adaptive Majority Protocol}
\label{apx:majority}

Finally, in this appendix we give the omitted details and full proofs for our adaptive majority protocol from \cref{sec:majority}.

\paragraph{Additional Details for the Pólya Subphase}

The main observation for this subphase is that we can model the opinion distribution after the Pólya Subphase by the Pólya-Eggenberger distribution.
Formally,
\begin{observation}
\label{obs:polya}
Assume the configuration at time $s_1-1$ is fixed and let $a = A_{s_1-1}$ and $b = B_{s_1-1}$.
Then $A_{e_1} \sim \PE(a, b, n-a-b)$ and $B_{e_1} \sim \PE(b, a, n - a - b)$, \whp.
\end{observation}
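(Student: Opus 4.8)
The plan is to couple the opinion-adoption events of the Pólya Subphase with a Pólya urn started with $a$ red and $b$ blue balls (identifying red with opinion $A$), and then to bound separately the time it takes for every initially undecided agent to adopt an opinion. The first step is to isolate the only active transition: during this subphase the sole state change is the adoption rule (line~5 of \cref{alg:adaptive-majority}), in which an undecided initiator $u$ copies the opinion of its responder $v$. Hence no decided agent reverts to $U$, the counts $A_t, B_t$ are non-decreasing, and $U_t = n - A_t - B_t$ drops by exactly one at each \emph{conversion} (an interaction of an undecided initiator with a decided responder); every other interaction is \emph{null} and leaves $(A_t, B_t)$ fixed. By the synchronization guarantee of \cref{pro:maintenance} there is a window inside $[s_1, e_1]$ in which all agents are simultaneously in the first subphase, so while undecided agents remain no agent ever performs a cancellation or broadcast step, and this dichotomy is exhaustive.

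I would then compute the conversion law. Conditioning on the history up to a step whose initiator $u$ is undecided, its responder $v$ is uniform over the other $n-1$ agents, so $v$ has opinion $A$, resp.\ $B$, with probability $A_t/(n-1)$, resp.\ $B_t/(n-1)$. Thus, conditioned on a conversion at this step, $u$ adopts $A$ with probability $A_t/(A_t+B_t)$ and $B$ with the complementary probability---exactly the reinforcement rule of a Pólya urn, and notably independent of $U_t$. It follows by the strong Markov property that the embedded process recording only the conversions is distributed precisely as a Pólya urn started from $(a,b)$, each conversion forming one draw that adds a ball of the adopted color.

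Since $U_{s_1-1} = n-a-b$ and each conversion removes one undecided agent, the urn runs for exactly $n-a-b$ draws, so $A_{e_1}$ equals its total number of red balls, i.e.\ $A_{e_1} \sim \PE(a,b,n-a-b)$, and $B_{e_1} \sim \PE(b,a,n-a-b)$ by exchanging the two colors---provided all conversions finish by $e_1$. To ensure the latter, I would couple the decided/undecided indicator with a one-way epidemic in which a decided agent plays the role of an infected one: an undecided initiator meeting a decided responder is exactly the infecting transition, so \cref{lem:one-way-epidemic} (applicable since $a+b\ge 1$) shows that \whp all agents are decided within $\tau/4 \cdot n$ interactions, comfortably shorter than the subphase-1 overlap guaranteed for $w=566$.

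The step requiring the most care is the embedded-chain argument, namely that marginalizing over the random \emph{timing} of the conversions does not bias their colors. This is where the decoupling is essential: the conditional color probability $A_t/(A_t+B_t)$ does not depend on $U_t$, so the geometrically distributed waiting times between successive conversions carry no information about the colors and may be discarded, leaving an exact Pólya urn.
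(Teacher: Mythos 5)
Your proposal is correct and follows essentially the same route as the paper's proof: both couple the subphase with a Pólya urn by discarding the null interactions (so that the embedded sequence of conversions, each adopting $A$ with probability $A_t/(A_t+B_t)$, is exactly a Pólya urn run for $n-a-b$ draws), and both invoke the one-way epidemic lemma together with the subphase overlap guaranteed by the phase clock to conclude that w.h.p.\ no undecided agent remains at time $e_1$. Your explicit check that the conversion colors carry no dependence on the waiting times (because the reinforcement probability is independent of $U_t$) simply spells out what the paper dismisses as the ``straightforward'' coupling.
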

\begin{proof}
The observation follows from a coupling of the Pólya Subphase with the Pólya urn process.
Note that a similar observation has been previously used in \cite{DBLP:journals/corr/abs-2103-10366}.
Let $\ell_0 = a + b$ be the number of agents that have an opinion at time $s_1-1$.
In step $i$, the Pólya urn process picks an arbitrary undecided agent.
This agent chooses one of the $\ell_{i-1}$ agents that have an opinion uniformly at random and adopts its opinion, resulting in $\ell_{i} = \ell_{i-1}+1$.
It is now straightforward to couple the Pólya urn process with the Pólya Subphase: we simply discard all interactions that do not change the number of agents that have an opinion.

It remains to show that at time $e_1$ no undecided agents are left.
This follows from the result on the one-way epidemic \cref{lem:one-way-epidemic} (see also \cite{DBLP:journals/dc/AngluinAE08a}), together with the following observations:
By definition, all agents perform the Pólya Subphase during subphase $0$.
From the phase clock we get that there is at least an overlap of length $\LDAUTheta{n\cdot\log{n}}$ interactions where all agents are in the Pólya Subphase together. 
This overlap is long enough for the one-way epidemic to conclude with probability $1-n^{-(c+2)}$.
\end{proof}

\Cref{obs:polya} now allows us to apply \cref{thm:polya_bound_total_balls} in order to prove concentration of $A_{e_1}$.
For convenience, the lemma is restated as follows.

\lempolya*

\begin{proof}
Recall that according to \cref{obs:polya} we have $A_{e_1} \sim \PE(a, b, n-a-b)$ and $B_{e_1} = n - A_{e_1}$, \whp, and $A_{e_1} + B_{e_1}= n$ \whp.
Let $\epsilon_p$ be the small constant from \cref{thm:polya_bound_total_balls} (see \cref{apx:auxiliary-results}).
We apply \cref{thm:polya_bound_total_balls} to $A_{e_1}$ and get for $\delta = \sqrt{((2+c)/\epsilon_p)\cdot \ln n}$ that
\begin{align*}
\Prob{A_{e_1} \leq \frac{n}{a+b} \cdot a \cdot \left( 1-\frac{\delta}{\sqrt{a}} \right) } &\leq 4 \cdot \exp\left(\epsilon_p \cdot \delta^2 \right) \leq 4 \cdot n^{-(c+2)}.
\end{align*}
In this case we have \whp that
\begin{equation}
\label{eq:polya-bound-application}
\begin{aligned}
 A_{e_1} - B_{e_1} 
& = 2 \cdot A_{e_1} - n \geq 2 \cdot \frac{n}{a+b} \cdot a \cdot \left( 1-\frac{\delta}{\sqrt{a}} \right) - n \\
& \geq n \cdot \left(\frac{2 + 2\cdot\beta}{2 + \beta} \cdot \left( 1-\frac{\delta}{\sqrt{a}} \right)-1\right) \\
&= n \cdot \left( \frac{\beta}{2 + \beta} \cdot \left( 1-\frac{\delta}{\sqrt{a}} \right) -\frac{\delta}{\sqrt{a}}\right) \\
& \geq n \cdot \left( \frac{\beta}{2 + \beta} - 2 \cdot \sqrt{\frac{ (2+ c) }{\epsilon_p \cdot \alpha \cdot \log e}} \right).
\end{aligned}
\end{equation}
Hence  $A_{e_1} - B_{e_1} = \LDAUOmega{n}$ for a sufficiently large constant $\alpha$ and given constant $\beta> 0$ with probability $1-n^{-(c+2)}$.
\end{proof}

\paragraph{Additional Details for the Cancellation Subphase}
Recall that in the cancellation subphase whenever an $A$ agent interacts with a $B$ agent, both become undecided.
Analogously to before, $s_2$ and $e_2$ are the first and the last time, respectively, when an agent performs an interaction in the Cancellation Subphase.
We now prove \cref{lem:cancellation}, which is restated here for convenience.

\lemcancellation*

\begin{proof}
Every agent with opinion $B$ can cancel out at most one other agent with opinion $A$, hence there are always at least $A_{s_2}-B_{s_2} = \LDAUOmega{n}$ many agents with opinion $A$ during the entire subphase.
Suppose an agent $u$ with opinion $B$ interacts with another agent $v$ chosen uniformly at random.
Then the probability that agent $v$ holds opinion $A$ is at least $\LDAUOmega{1}$.
All agents (including those that hold opinion $B$) initiate at least $ \LDAUTheta{\log n} $  interactions in the Cancellation Subphase.
Hence, for sufficiently large length of the subphase, every agent with opinion $B$ becomes undecided with probability at least $1-n^{-{c+3}}$.
The statement then follows via a union bound over all agents.
\end{proof}

\paragraph{Additional Details for the Broadcasting Subphase}

We prove the following lemma for the Broadcasting Subphase.
Recall that $s_3$ and $e_3$ are the first and the last time, respectively, when an agent performs an interaction in the Broadcasting Subphase.

\lembroadcasting*

\begin{proof}
The Broadcasting Subphase can be regarded as an epidemic spreading process, where the remaining opinion spreads to all other agents.
The proof follows immediately from \cref{lem:one-way-epidemic} along with the observation that there is a sufficiently long overlap in that subphase.
\end{proof}

\paragraph{Full Proof of \cref{pro:improved-bounds}}
We now give the full proof of \cref{pro:improved-bounds} which is restated here for convenience.
\proImprovedBounds*
\begin{proof}
In order to show \cref{pro:improved-bounds} we need a slightly more careful calculation in the proof of \cref{lem:polya} that gives us a better bound on $A_{e_1} - B_{e_1}$, the bias after the Pólya Subphase.

Consider \cref{eq:polya-bound-application} in the proof of \cref{lem:polya}.
In order to achieve the bias of $\LDAUTheta{n^{3/4 + \epsilon}}$ required by \cref{thm:undecided-dynamics}, we require for some constant $\epsilon > 0$ that
\begin{align*}
A_{1} - B_{1} &
\stackrel{\eqref{eq:polya-bound-application}}{\geq} n \cdot \left( \frac{\beta}{2 + \beta} - 2\sqrt{2} \cdot \left( \epsilon_p \cdot \alpha \cdot \log e \right)^{-1} \right) \geq n^{3/4 + \epsilon}.
\end{align*}
When we solve the expression in parentheses for $\alpha$ we obtain
\begin{align*}
\alpha &= \LDAUOmega{\beta^{-2}} \quad \text{ provided that } \quad \beta = \LDAUOmega{n^{-1/4+\epsilon}} . 
\end{align*}
which gives the claimed bounds on $\alpha$ and $\beta$, provided the constants in the asymptotic notation are large enough.
\end{proof}

\paragraph{Full Proof of \cref{pro:input-changes}}
Finally, we give the full proof of \cref{pro:input-changes} which is restated here for convenience.
\proInputChanges*
\begin{proof}
Recall that each phase of the clock consists of $\Theta(n \log n)$ interactions.
Hence $t_2 - t_1 = \Theta(n \log n)$.
Let $X_{[t_1,t_2]}$ be the random variable for input changes in $[t_1, t_2]$ and observe that $\Ex{X_{[t_1,t_2]}} = r \cdot (t_2 - t_1) = \ldauTheta{r \cdot n \log n}$.
We apply Chernoff bounds to $X_{[t_1,t_2]}$ and obtain for a sufficiently large constant $c'$ that $\Prob{X_{[t_1,t_2]} > c'\cdot r \cdot n\log n} \leq n^{-(c+1)}$.
This means that at most $ c'\cdot r \cdot n\log n$ agents change their input to the minority opinion in $[t_1, t_2]$ \whp.

We now distinguish two cases.

\paragraph{Case 1: $\beta \leq 1/(2c')$}
In the first case we have only a small bias.
We therefore have to revisit the Pólya Subphase once again.
Consider a modified process where all input changes that would originally occur during the Pólya Subphase take place before the Pólya Subphase starts.
We will now show via a coupling that handling input changes \emph{early} at the beginning of the phase does not alter the outcome of the protocol.
A straight-forward coupling shows that the number of agents that have the majority opinion in the modified process minorizes the same number in the original process.
(See \cite{JK77} for additional details on the Pólya-Eggenberger distribution).
In the modified process, we initially have $A_{s_1}' = A_{s_1} - c'\cdot r \cdot n\log n$ agents with opinion $A$.
Therefore, we have $A_{s_1}' \geq \alpha\log n - c' \cdot \beta \alpha \cdot \log n = \alpha \log n (1 - c' \beta)$ agents with opinion $A$.
If $\beta \leq 1/(2c')$ the statement immediately holds (provided $\alpha$ is sufficiently large) for $\alpha \geq \Omega(\beta^{-2})$.
A similar calculation shows that in this setting the (additive) bias drops only by a constant factor.
The statement then follows in this case analogously to the result without input changes.

\paragraph{Case 2: $\beta \geq 1/(2c')$}
In this case the additive bias $A_{s_1} - B_{s_1}$ is at least $\beta \cdot \alpha \cdot \log n$.
It follows that $A_{s_1}' - B_{s_1}' \geq \alpha \cdot \log n$.
This means that at the beginning of the Pólya Subphase we have at least $\ldauOmega{\log n}$ agents with opinion $A$ and a constant (multiplicative) bias towards $A$.
Again, the statement then follows in this case analogously to the result without input changes.

\medskip

In both cases the remaining bias and the number of agents with opinion $A$ is large enough such that the previous analysis (\cref{pro:adaptive-majority}) after the Pólya Subphase can be applied without without further modifications.
\end{proof}

\end{document}